\theoremstyle{plain}
\theoremstyle{plain}
\theoremstyle{plain}
\newtheorem{lem}{\protect\lemmaname}
\theoremstyle{plain}
\newtheorem{thm}{\protect\theoremname}
\theoremstyle{plain}
\newtheorem{cor}{\protect\corollaryname}  
\theoremstyle{definition}
\newtheorem{defn}{\protect\definitionname}
\theoremstyle{definition}
\theoremstyle{definition}
\newtheorem{rem}{\protect\remarkname}
\providecommand{\claimname}{Claim}
\providecommand{\lemmaname}{Lemma}
\providecommand{\propositionname}{Proposition}
\providecommand{\theoremname}{Theorem}
\providecommand{\corollaryname}{Corollary} 
\providecommand{\definitionname}{Definition}
\providecommand{\assumptionname}{Assumption}
\providecommand{\remarkname}{Remark}
\newcommand{\overbar}[1]{\mkern 1.25mu\overline{\mkern-1.25mu#1\mkern-0.25mu}\mkern 0.25mu}
\DeclareMathOperator*{\argmax}{arg\,max}
\newcommand{\Runder}[1]{\underbar{R}^{({\rm #1})}}
\newcommand{\Rbar}[1]{\overbar{R}^{({\rm #1})}}
\newcommand{\NDhat}{\widehat{\mathrm{ND}}}
\newcommand{\PDhat}{\widehat{\mathrm{PD}}}
\newcommand{\pdhat}{\widehat{\mathrm{pd}}}
\newcommand{\ND}{\mathrm{ND}}
\newcommand{\PD}{\mathrm{PD}}
\newcommand{\Shat}{\widehat{S}}
\newcommand{\peid}{P_{\mathrm{e},1}^{(\mathrm{D})}}
\newcommand{\peind}{P_{\mathrm{e},1}^{(\mathrm{ND})}}
\newcommand{\peiid}{P_{\mathrm{e},2}^{(\mathrm{D})}}
\newcommand{\peiind}{P_{\mathrm{e},2}^{(\mathrm{ND})}}
\newcommand{\epsid}{\epsilon_{1}^{(\mathrm{D})}}
\newcommand{\epsind}{\epsilon_{1}^{(\mathrm{ND})}}
\newcommand{\epsiid}{\epsilon_{2}^{(\mathrm{D})}}
\newcommand{\epsiind}{\epsilon_{2}^{(\mathrm{ND})}}
\newcommand{\nid}{n_1^{(\mathrm{D})}}
\newcommand{\nind}{n_1^{(\mathrm{ND})}}
\newcommand{\niid}{n_2^{(\mathrm{D})}}
\newcommand{\niind}{n_2^{(\mathrm{ND})}}
\newcommand{\Npd}{N_{\mathrm{pos}}^{(\mathrm{D})}}
\newcommand{\Npnd}{N_{\mathrm{pos}}^{(\mathrm{ND})}}
\newcommand{\Nnnd}{N_{\mathrm{neg}}^{(\mathrm{ND})}}
\newcommand{\npd}{n_{\mathrm{pos}}^{(\mathrm{D})}}
\newcommand{\npnd}{n_{\mathrm{pos}}^{(\mathrm{ND})}}
\newcommand{\nneg}{n_{\mathrm{neg}}}
\newcommand{\Npposj}{N'_{\mathrm{pos},j}}
\newcommand{\Ntilposi}{\widetilde{N}_{\mathrm{pos},i}}
\newcommand{\Ntilnegi}{\widetilde{N}_{\mathrm{neg},i}}
\newcommand{\Ntilposj}{\widetilde{N}_{\mathrm{pos},j}}
\newcommand{\Ntilnegj}{\widetilde{N}_{\mathrm{neg},j}}
\newcommand{\Nother}{N_{\mathrm{other}}}
\newcommand{\qother}{q_{\mathrm{other}}}
\newcommand{\qpposj}{q'_{\mathrm{pos},j}}
\newcommand{\qneg}{q_{\mathrm{neg}}}
\newcommand{\qtil}{\widetilde{q}}
\newcommand{\Nnegj}{N_{\mathrm{neg},j}}
\newcommand{\mutild}{\tilde{\mu}^{(\mathrm{D})}}
\newcommand{\mutilnd}{\tilde{\mu}^{(\mathrm{ND})}}
\newcommand{\Nneg}{N_{\mathrm{neg}}}
\newcommand{\Npos}{N_{\mathrm{pos}}}
\newcommand{\Bernoulli}{\mathrm{Bernoulli}}
\newcommand{\Binomial}{\mathrm{Binomial}}
\newcommand{\pe}{P_{\mathrm{e}}}
\newcommand{\Xv}{\mathbf{X}}
\newcommand{\yv}{\mathbf{y}}
\newcommand{\Yv}{\mathbf{Y}}
\newcommand{\Ec}{\mathcal{E}}
\newcommand{\EE}{\mathbb{E}}
\newcommand{\PP}{\mathbb{P}}
\newcommand{\var}{\mathrm{Var}}
\newcommand{\cov}{\mathrm{Cov}}
\providecommand{\algorithmname}{Algorithm}
\newcommand{\ep}{{\mathbb E}}
\newcommand{\pr}{{\mathbb P}}
\newcommand{\TT}{{\mathcal{T}}}
\newcommand{\thetacrit}{\theta^{(\mathrm{RZ})}_{\rm crit}}
\newcommand{\thetaopt}{\theta_{\rm opt}}
\newcommand{\thetacritZ}{\theta^{(\mathrm{Z})}_{\rm crit}}
\newcommand{\ncomp}{n_{\rm COMP}}
\newcommand{\ndd}{n_{\rm DD}}
\newcommand{\manuallabel}[2]{\def\@currentlabel{#2}\label{#1}}
\begin{document} 

\title{Noisy Non-Adaptive Group Testing: \\ A (Near-)Definite Defectives Approach}
\author{Jonathan Scarlett and Oliver Johnson}
\maketitle

\begin{abstract}
    The group testing problem consists of determining a small set of defective items from a larger set of items based on a number of possibly-noisy tests, and is relevant in applications such as medical testing, communication protocols, pattern matching, and more.  We study the noisy version of this problem, where the outcome of each standard noiseless group test is subject to independent noise, corresponding to passing the noiseless result through a binary channel. We introduce a class of algorithms that we refer to as Near-Definite Defectives (NDD), and study bounds on the required number of tests for asymptotically vanishing error probability under Bernoulli random test designs. In addition, we study algorithm-independent converse results, giving lower bounds on the required number of tests under Bernoulli test designs.  Under reverse Z-channel noise, the achievable rates and converse results match in a broad range of sparsity regimes, and under Z-channel noise, the two match in a narrower range of dense/low-noise regimes.  We observe that although these two channels have the same Shannon capacity when viewed as a communication channel, they can behave quite differently when it comes to group testing.  Finally, we extend our analysis of these noise models to a general binary noise model (including symmetric noise), and show improvements over known existing bounds in broad scaling regimes.
\end{abstract}
\begin{IEEEkeywords}
    Group testing, performance bounds, sparsity, Z channel, information-theoretic limits
\end{IEEEkeywords}

\long\def\symbolfootnote[#1]#2{\begingroup\def\thefootnote{\fnsymbol{footnote}}\footnote[#1]{#2}\endgroup}

\symbolfootnote[0]{ Jonathan Scarlett is with the Department of Computer Science, National University of Singapore, Singapore, and also with the Department of Mathematics, National University of Singapore, Singapore (e-mail: scarlett@comp.nus.edu.sg).}
\symbolfootnote[0]{ Oliver Johnson is with the School of Mathematics, University of Bristol, UK (e-mail: maotj@bristol.ac.uk).}
\symbolfootnote[0]{ Copyright (c) 2021 IEEE. Personal use of this material is permitted.  However, permission to use this material for any other purposes must be obtained from the IEEE by sending a request to pubs-permissions@ieee.org.}
\vspace*{-0.5cm}

%
%
\section{Introduction} \label{sec:intro}

The group testing problem consists of determining a small subset of ``defective'' items within a larger set of items, based on a number of possibly-noisy tests. As described in more detail in the survey monograph \cite{Ald19}, this problem has a history in medical testing \cite{Dor43}, and has regained significant attention with subsequent applications in areas such as communication protocols \cite{Ant11},  DNA sequencing \cite{erlich2}, data forensics \cite{goodrich2}, pattern matching \cite{Cli10}, and database systems \cite{Cor05}, as well as new connections with compressive sensing \cite{Gil08,Gil07}.  The general setup involves a sequence of tests, each of which acts on a particular subset (or ``pool'') of items and produces an outcome $Y$ that can be a deterministic or random function of the defectivity status of the items in the pool. 

In recent years, the information-theoretic limits and performance limits of practical algorithms for {\em noiseless} group testing have become increasingly well-understood \cite{Ati12,Bal13,Ald14a,Sca15b,Ald15,Joh16,Coj19}.  By comparison, random noise settings are somewhat less well-understood despite ongoing advances \cite{Cha14,Laa14,Sca15b,Sca17b,Sca18}.  In particular, the algorithm that gives the best known {\em noiseless} performance guarantees in most sparsity regimes (among practical algorithms), known as Definite Defectives (DD) \cite{Ald14a,Joh16}, has no previous noisy counterpart.  In this paper, we address this gap by introducing and studying noisy variants of DD, and showing that they provide the best known performance bounds in a wide range of settings depending on the sparsity and noise level.

\subsection{Overview of Noiseless Group Testing}

Let $p$ denote the number of items, and let $S \subseteq \{1,\dotsc,p\}$ denote the set of defective items.  In the standard noiseless setting first introduced in \cite{Dor43}, the outcome of each test takes the form
\begin{equation}
    Y = \bigvee_{j \in S} X_j, \label{eq:gt_noiseless_model}
\end{equation}
where the test vector $X = (X_1,\dotsc,X_p) \in \{0,1\}^p$ indicates which items are included in the test.  That is, the resulting outcome $Y = 1$ if and only if  at least one defective item was included in the test.   We refer to tests with $Y=1$ as positive, and tests with $Y=0$ as negative.

Given the tests and their outcomes, a \emph{decoder} forms an estimate $\Shat$ of $S$.  One wishes to design a sequence of tests $X^{(1)},\dotsc,X^{(n)}$, with $n$ ideally as small as possible, such that the decoder recovers $S$ with probability arbitrarily close to one.  The error probability is given by 
\begin{equation}
    \pe := \PP[\Shat \ne S], \label{eq:pe}
\end{equation}
and is taken over the randomness of the defective set $S$, the tests $X^{(1)},\dotsc,X^{(n)}$ (if randomized), and the test outcomes $Y^{(1)},\dotsc,Y^{(n)}$ (if noisy).  For convenience, we represent the tests as a matrix $\Xv \in \{0,1\}^{n \times p}$, where the $i$-th row is $X^{(i)}$ and represents the $i$-th test.

In this paper, we consider the case that, for a given sparsity level $k$, the defective set $S$ is chosen uniformly on the $\binom{p}{k}$ subsets of $\{1,\dotsc,p\}$ of cardinality $k$. 
Following recent works such as \cite{Bal13}, we define the rate (in bits/test) of a group-testing algorithm using $n$ tests to be\footnote{Throughout the paper, $\log$ refers to natural logarithms taken to base $e$, and we write $\log_2$ for base 2 logarithms.}
\begin{equation} \label{eq:defrate} 
    R := \frac{ \log_2 \binom{p}{k}}{n}, 
\end{equation}
which we can think of as the number of bits of information about the defective set learned per test.  We consider the asymptotic regime where $p \rightarrow \infty$ with $k \asymp p^{\theta}$ for some $\theta \in (0,1)$,\footnote{Here and subsequently, $k \asymp p^{\theta}$ means that $\frac{k}{p^{\theta}}$ is bounded away from both $0$ and $\infty$ in the limit as $p \to \infty$.} so we will often use the equivalent limiting definition that
\begin{equation} \label{eq:defrateeq} 
    R \sim \frac{ k \log_2(p/k)}{n},
\end{equation}
where $\sim$ denotes asymptotic equality up to a multiplicative $1+o(1)$ term.

It is well known from standard information-theoretic arguments (e.g., \cite{Bal13}) that no algorithm with rate above $1$ bit/test can have vanishing error probability. For noiseless adaptive group testing (where the choice of test $X^{(i+1)}$ can depend on the previous tests $X^{(1)}, \ldots, X^{(i)}$ and their outcomes $Y^{(1)}, \ldots, Y^{(i)}$), Hwang's algorithm \cite{hwang} has error probability tending to zero with rate approaching one in any regime where $k = o(p)$, and is therefore asymptotically optimal.

In this paper, we study non-adaptive group testing, where the entire collection of tests $\Xv \in \{0,1\}^{n \times p}$ is fixed in advance.  
We focus in particular on {\em Bernoulli testing}, where the entries of $\Xv$ are independently drawn at random from $\Bernoulli\big( \frac{\nu}{k} \big)$ for some parameter $\nu > 0$.

In the noiseless case, it is known that the non-adaptive definite defectives (DD) algorithm \cite{Ald14} is both practically implementable (in terms of storage and processing requirements) and performs well in terms of rate.  Specifically, under Bernoulli testing in the regime $k \asymp p^{\theta}$, the DD algorithm achieves $\pe \rightarrow 0$ when \cite[Theorem 12]{Ald14a}
\begin{equation} \label{eq:ddnoiselessrate} 
    R < \frac{1}{e \log 2} \min \left\{ 1, \frac{1-\theta}{\theta} \right\}.
\end{equation}
Furthermore, the DD algorithm is known to be rate-optimal for sufficiently dense problems (specifically, for $\theta > 1/2$) under Bernoulli testing, in the sense that {\em any} algorithm has $\pe$ bounded away from zero if the rate satisfies \cite{Ald14a,Ald15}
\begin{equation} \label{eq:ddnoiselesscap} 
    R > \frac{1}{e \log 2}  \frac{1-\theta}{\theta}.
\end{equation}

The contribution of this paper is to extend the bounds of the form \eqref{eq:ddnoiselessrate} and \eqref{eq:ddnoiselesscap} to noisy
group testing models, by introducing and referring to a class of algorithms which we refer to as noisy DD (NDD).  For clarity, we first review both (noiseless) DD \cite{Ald14} and a related algorithm called COMP \cite{Cha11}, which forms the first stage of DD.

\begin{defn} \mbox{ }
The Combinatorial Orthogonal Matching Pursuit (COMP) and Definite Defectives (DD) algorithms for {\em noiseless} non-adaptive group testing are defined as follows:
\begin{enumerate}
\item Since $Y=1$ if and only if the test pool contains a defective item, we can be sure that each item that appears in a negative test is not defective. We can form a list of such items formed from all tests, which we refer to as $\ND$; the rest of the items $\PD :=  \{1,\dotsc,p\} \backslash \ND$ are considered ``possibly defective''. The COMP algorithm simply estimates $S$ using the set of possible defective items, $\Shat = \PD$.
\item The DD algorithm starts with the possible defective items $\PD$. Since every positive test must contain at least one 
defective item, if a test with $Y=1$ contains exactly one item from $\PD$, then we can be certain that the  item in question is defective.  The
DD algorithm outputs $\Shat$ equaling the set of PD items that appear in a positive test with no other PD item.
\end{enumerate}
\end{defn}

%

\subsection{Noisy Group Testing} \label{sec:setup}
 
Generalizing \eqref{eq:gt_noiseless_model}, we consider noisy models that correspond to passing the quantity $U = \vee_{j \in S} X_j$ through a noisy channel $P_{Y|U}$.  We focus in particular on the following special cases, each of which depends on one or two noise parameters that can be set to zero to recover the noiseless model.

\begin{defn}\label{def:chanmod} We define the following noise models, illustrated in Figure \ref{fig:Zchannels}:
\begin{enumerate}
 \item The general binary channel model
 is given by
\begin{gather} 
    P_{Y|U}(0|0) = 1-\rho_{01}, \quad P_{Y|U}(1|0) = \rho_{01}, \nonumber \\
    P_{Y|U}(0|1) = \rho_{10}, \quad P_{Y|U}(1|1) = 1-\rho_{10} \label{eq:gt_gen_model}
    \end{gather}
    for some noise levels $\rho_{01}$ and $\rho_{10}$ both in $[0,1]$.
    \item The Z-channel model is obtained by taking $\rho_{01} = 0$ and $\rho_{10} = \rho$ in the general model, yielding
so that 
    \begin{gather}
    P_{Y|U}(0|0) = 1, \quad P_{Y|U}(1|0) = 0, \nonumber \\
    P_{Y|U}(0|1) = \rho, \quad P_{Y|U}(1|1) = 1-\rho
    \end{gather}
    for some noise level $\rho \in [0,1]$.  Hence, if $Y=1$, then the test must contain a defective item.
    \item The reverse Z-channel (RZ-channel) model, also known as the addition noise model \cite{Ati12}, is obtained by taking $\rho_{01} = \rho$ and $\rho_{10} = 0$, yielding
    \begin{gather}
        P_{Y|U}(0|0) = 1-\rho, \quad P_{Y|U}(1|0) = \rho, \nonumber \\
        P_{Y|U}(0|1) = 0, \quad P_{Y|U}(1|1) = 1
    \end{gather}
    for some  noise level  $\rho \in [0,1]$.  Hence, if $Y=0$, then the test must contain no defective items.
    \item The symmetric noise model is obtained by taking $\rho_{01} = \rho_{10} = \rho$ in the general model, yielding
    \begin{equation} \label{eq:gt_symm_model}
        P_{Y|U}(y|u) =
        \begin{cases}
            1-\rho & y = u \\
            \rho & y \ne u
        \end{cases}
    \end{equation}
    for some  noise level  $\rho \in [0,1]$.
\end{enumerate}
\end{defn}

\begin{figure}
    \begin{centering}
        \includegraphics[width=\columnwidth]{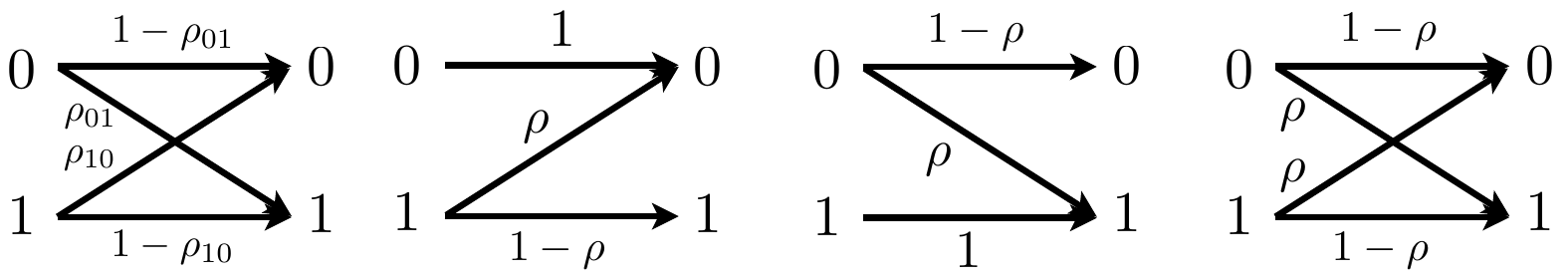}
        \par
    \end{centering}
    
    \caption{General binary channel, Z-channel, reverse Z-channel, and binary symmetric channel. \label{fig:Zchannels}}
\end{figure}
   
In this paper, we will focus primarily on the Z and reverse Z-channel models, applications of which are discussed in Section \ref{sec:applications}.  As well as being important in their own right for applications, these models serve as useful stepping stones towards the general binary model (including the symmetric model), which is handled in Appendix \ref{app:symmetric}.

Note that for the general binary channel model, we can assume without loss of generality that $\rho_{01} + \rho_{10} \leq 1$. This is because (as for a standard binary symmetric channel) if this were not true, we could flip the outcome of all tests as a pre-processing step, producing a new channel with $\widetilde{\rho}_{01} = 1 - \rho_{01}$ and $\widetilde{\rho}_{10} = 1 - \rho_{10}$ satisfying $\widetilde{\rho}_{01} + \widetilde{\rho}_{01} \leq 1$. Additionally, observe that the case $\rho_{01} + \rho_{10} = 1$
is a degenerate one, under which the conditional distribution $P_{Y|U}(y | u)$ does not depend on $u$, meaning that $U$ and $Y$ are independent, so we cannot hope to recover useful information about the defective set from $Y$.

When considered to define a standard noisy communication channel, both the Z-channel and reverse Z-channel have Shannon capacity (in bits/use) given by \cite{tallini2002capacity}
\begin{equation} \label{eq:zcap}
    C_{\mathrm{Z}}(\rho) = \log_2 \left( 1 + (1-\rho) \rho^{\rho/(1-\rho)} \right),
\end{equation}
and the symmetric noise model has Shannon capacity (in bits/use) given by
\begin{equation} \label{eq:bsccap}
    C_{\mathrm{BSC}}(\rho) = 1 - h(\rho)
\end{equation}
where $h(\rho) = - \rho \log_2 \rho - (1-\rho) \log_2 (1-\rho)$ is the binary entropy in bits.
Since the Z-channel and reverse Z-channel have the same Shannon capacity, the information-theoretic results of \cite{Mal80,Sca15b} suggest that they may require the same asymptotic number of tests, at least for sufficiently sparse settings.  On the other hand, when adopting an NDD approach, it is unclear {\em a priori} which model requires more tests.  See Section \ref{sec:Z_vs_RZ} for further discussion.
 
Except where stated otherwise, we assume that the noise levels $\rho_{10}, \rho_{01}$ and number of defectives $k$ are known; our analysis can also be applied to cases where only bounds are known, but the details become more tedious.  Our main goal is to provide explicit achievable rates and converse bounds for noisy group testing under Bernoulli designs. 

\begin{rem} \label{rem:noise}
    While the general binary model \eqref{eq:gt_gen_model} captures several symmetric and non-symmetric noise models, there are  other noise models of interest that it does not capture.  As discussed in \cite[Sec.~3.1]{Ald19}, some noise models of interest depend on the {\em number of defectives} in the test, and not just the presence vs.~absence of any defectives.  A prominent example is dilution noise \cite{Ati12}, in which each defective item in the test is independently ``diluted'' (and hence behaves as though it were non-defective) with some probability $u$.  Hence, if there are $\ell$ defectives in the test, the probability they are all diluted is $u^{\ell}$. 
    While it may be possible to handle noise models of this kind using our techniques, the analysis appears to become significantly more complicated.  This is primarily due to the different conditional distribution of $Y$ for all different values of the number of defectives in the test, $\ell \in \{0,1,\dotsc,k\}$, in contrast with \eqref{eq:gt_gen_model} in which only need to distinguish between $\ell = 0$ and $\ell \ge 1$.  Due to these complications, we leave these further generalizations for future work.
    %
\end{rem}

\subsection{Related Work}

The information-theoretic limits of noiseless and noisy non-adaptive group testing were initially studied in the Russian literature \cite{Mal78,Mal80}, and have recently become increasingly well-understood \cite{Ati12,Ald14,Sca15,Sca15b,Sca16b,Ald15}.  Among the existing works, the results most relevant to the present paper are as follows:
\begin{itemize}
    \item For both the adaptive and non-adaptive settings, it was shown by Baldassini {\em et al.} \cite{Bal13} that if the outcome $Y$ is produced by passing the noiseless outcome $U = \vee_{j \in S} X_j$ through a channel $P_{Y|U}$, then any group testing achieving $\pe \rightarrow 0$ the must have rate $R \leq C$, where $R$ is defined in \eqref{eq:defrate} and $C$ is the Shannon capacity of $P_{Y|U}$. Equivalently,
    the number of tests must satisfy $n \ge \big(\frac{1}{C}k\log_2 \frac{p}{k}\big)(1-o(1))$.   
    For instance, under the symmetric noise model \eqref{eq:gt_symm_model}, this yields
        \begin{equation}
            n \ge \frac{k\log_2 \frac{p}{k}}{1 - h(\rho)} (1-o(1)). \label{eq:mi_conv}
        \end{equation}
    It has recently been shown that under the RZ and symmetric noise models, this converse is not tight (i.e., it can be improved) when $\theta \in (0,1)$ is sufficiently close to one, even in the adaptive setting \cite{Sca18}.
    \item In the non-adaptive setting with symmetric noise, it was shown in \cite{Sca15,Sca15b} that an information-theoretic threshold decoder attains the bound \eqref{eq:mi_conv} when $k \asymp p^{\theta}$ for {\em sufficiently small} $\theta > 0$.  The analysis of \cite[Appendix A]{Sca18} shows that analogous findings also hold for the Z and RZ noise models.
\end{itemize}
Several non-adaptive noisy group testing algorithms have been shown to come with rigorous guarantees.
\begin{itemize}
    \item The {\em Noisy Combinatorial Orthogonal Matching Pursuit} (NCOMP) algorithm checks, for each item, the proportion of tests it was included in that returned positive, and declares the item to be defective if this number exceeds a suitably-chosen threshold.  This is known to provide optimal scaling laws for the regime $k \asymp p^{\theta}$ ($\theta \in (0,1)$) \cite{Cha11,Cha14}, albeit with somewhat suboptimal constants. That is, in the terminology of \eqref{eq:defrate}, NCOMP has a non-zero but suboptimal rate under symmetric noise. Similar results are also obtained for the general binary noise channel using a linear programming based algorithm in \cite[Thm.~7]{Cha14}.
    \item The method of {\em separate decoding of items}, also known as {\em separate testing of inputs} \cite{Mal80,Sca17b}, also considers the items separately, but uses all of the tests.  Specifically, a given item's status is selected via a binary hypothesis test.  This method was studied for $k = O(1)$ in \cite{Mal80}, and for $k \asymp p^{\theta}$ in \cite{Sca17b}.  In particular, it was shown that for the symmetric noise model, the number of tests is within a factor $\log 2$ of the optimal information-theoretic threshold as $\theta \to 0$.  However, the rate quickly become weaker as $\theta$ increases away from zero; see Appendix \ref{app:symmetric} for an example.
\end{itemize}
Since other works on noisy group testing are less related to the present paper, we only provide a brief outline.  Some heuristic algorithms have been proposed for noisy settings without theoretical guarantees, including belief propagation \cite{Sed10} and a noisy linear programming relaxation \cite{Mal12}.  Sublinear-time algorithms with guarantees on the number of samples and runtime have been proposed \cite{Cai13,Lee15a,Ina19,Bon19a} (see also the earlier works of \cite{Che13a,Ngo11,Ind10}), but the constants (and sometimes logarithmic factors) in the sample complexity bounds are far from optimal.  The complementary viewpoint of {\em adversarial noise} has also been explored \cite{Mac97,Ngo11,Che13a}.

\subsection{Applications of Noise Models} \label{sec:applications}

Group testing has been applied in a wide range of contexts, including biology, communications, information technology and data science, as outlined in \cite[Section 1.7]{Ald19}. While the noiseless model has been widely studied from a theoretical point of view, in many of these applications it is unrealistic to assume that all tests will return a perfectly accurate answer. Many papers have dealt with this issue by studying the symmetric noise model described in \eqref{eq:gt_symm_model}.

However, we believe that in many applications, this assumption of symmetry is itself also unrealistic. Since there are often different mechanisms operating that may ``flip'' positive tests to negative and vice versa, there is no {\em a priori} reason to believe that these two types of error should be equally likely. This argument motivates the general binary channel  model of Definition \ref{def:chanmod}.  In addition, to motivate the study of the RZ and Z channels in their own right, we proceed by giving examples where these models naturally arise.


In the {\em file comparison} problem, we wish to carry out data forensics to determine which out of a collection of computer files have been changed.  One way to do this, described in \cite{goodrich2}, is to store a number of hashes of various concatenated collections of files. By comparing the hashes before and after any possible tampering, if the hash has changed, we know that at least one file in the group has been altered. This can be thought of as a group testing scenario:  An altered file corresponds to a defective item,
the collection of files corresponds to the testing pool, and a changed hash corresponds to a positive test. However, as discussed in \cite{madej}, it is possible 
that  the files may be altered in a way that does not change the value of the hash. In this sense, a test that should be positive may fail to be detected as such with a certain probability -- this is exactly the Z-channel of Definition \ref{def:chanmod}. While \cite{madej} argues that this effect can be minimized by taking arbitrarily long hashes, it may be that from an efficiency point of view it is preferable to store shorter hashes and take into account the effect of the Z channel noise in identifying the modified files. 


An application for which the RZ-channel can serve as a natural model is that of {\em multiple-access communication}, e.g., see \cite{wolf,berger,Ant11}.  Roughly speaking, group testing permits the detection of a small subset of active users by requesting each user to transmit a signal at the times corresponding to $1$'s in the group testing matrix.  Then, obtaining the group testing outcomes only requires detecting whether or not there is any signal present at each time instant.  If this detection procedure is reliable, then there should be no false negatives; however, in the presence of an {\em interfering signal}, one is prone to false positives, in agreement with the RZ channel model of Definition \ref{def:chanmod}.  Alternatively, if the detection procedure is not perfectly reliable, then we may be subject to the general binary noise model with suitably-chosen values of $\rho_{01}$ and $\rho_{10}$.

\section{Summary of Main Results}

As mentioned above, we will provide achievable rates for noisy group testing using noisy variants of the COMP and DD algorithms, as well as providing algorithm-independent converse bounds.  We summarize our main results in the following subsections.

\subsection{Highlights} \label{sec:highlights}

Since the statements of our main results are somewhat technical, we begin by highlighting some key special cases, focusing primarily on scenarios in which our bounds are tight in a certain sense.  The relevant rates are plotted in Figures \ref{fig:RZ_rates} and \ref{fig:Z_rates} for the RZ- and Z-channel models respectively, and in Figure \ref{fig:RatesBSC} in Appendix \ref{app:symmetric} for the symmetric noise model.  We have the following:
\begin{itemize}
    \item For the RZ-channel model, in Theorem \ref{thm:RZrate} we establish an achievable rate and an algorithm-independent converse (for Bernoulli testing) that match in broad scaling regimes, e.g., for all $\theta > 0.212$ in the case that $\rho = 0.1$.  As the noise level $\rho$ increases, the bounds match for a broader range of $\theta$.
    \item Also for the RZ-channel model,  in Appendix \ref{sec:high_noise_opt} we show that for $\theta$ close to zero and $\rho$ close to one (i.e., the noisy and sparse setting), the achievable rate is approximately $\frac{1-\rho}{e \log 2}$; according to a simple capacity-based converse \cite{Bal13}, this cannot be improved by any {\em arbitrary and possibly adaptive} algorithm.
    \item For the Z-channel model, in Theorem \ref{thm:Zrate} we establish an achievable rate and an algorithm-independent converse (for Bernoulli testing) that match in certain dense scaling regimes when $\rho$ is small enough, e.g., for all $\theta > 0.729$ in the case that $\rho = 0.001$.  However, as the noise level $\rho$ increases, the bounds in fact match for a narrower range of $\theta$ (and eventually for no $\theta$).
    \item For each noise model, we show that we recover the rate of \cite{Ald14a} for the noiseless setting in the limit as the noise parameters tend to zero; this rate is known to be tight for Bernoulli testing whenever $\theta > \frac{1}{2}$.
\end{itemize}

\subsection{Preliminaries}

{\bf Notation.} First, we establish some additional notation.
\begin{defn} \label{def:dgamma}
For any $\gamma > 0$, we define the function 
\begin{equation} \label{eq:ddef} D_\gamma(t) = t \log \left( \frac{t}{\gamma} \right) - t + \gamma\mbox{ \;\;\;\;\; for $t \geq 0$.} \end{equation}
\end{defn}
Note that $D_\gamma(\gamma) = D'_\gamma(\gamma) = 0$, and that $D''_\gamma(t) = 1/t \geq 0$, so $D_\gamma$ is convex.  Hence,
$D_\gamma(t) \geq 0$ for all $t \geq 0$, and $D_\gamma(t)$ is strictly increasing for $t > \gamma $. Note further that 
\begin{equation} \label{eq:handy} 
    D_a(t) = a D_1(t/a)  \mbox{ \;\;\;\; for all $a > 0$ and $t \geq 0$.} 
\end{equation}

\begin{figure}
    \begin{centering}
        \includegraphics[width=0.9\columnwidth]{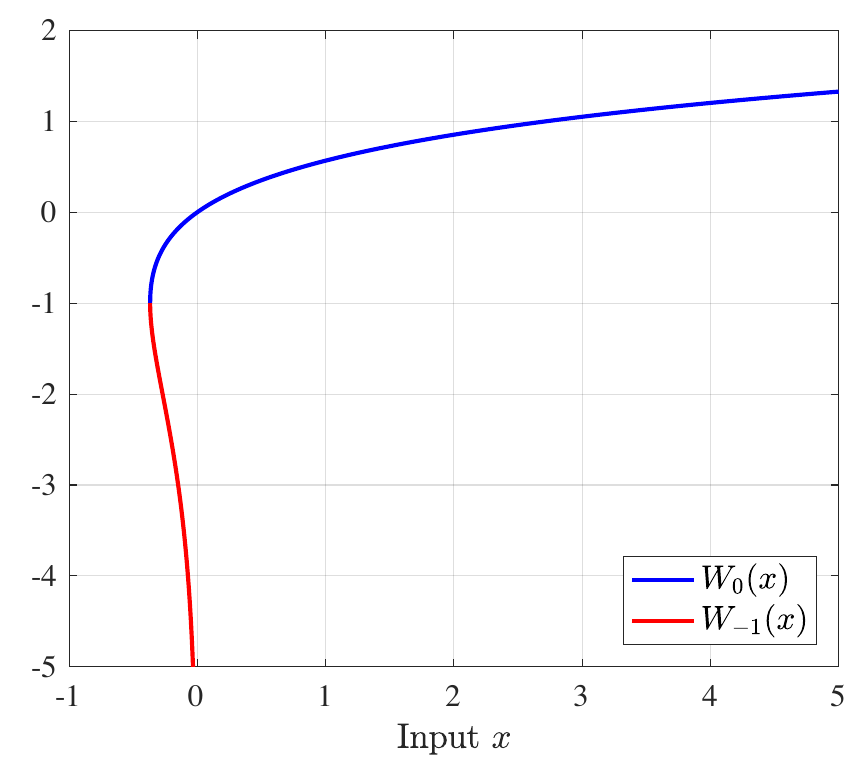}
        \par
    \end{centering}
    
    \caption{Upper and lower branches of the Lambert $W$-function. \label{fig:LambertW}}
\end{figure}

At several points in the paper, we require explicit values for the intersection of two functions related to the $D_{\gamma}(t)$ function (for different choices of $\gamma$). This intersection is found in Lemma \ref{lem:tech} of Appendix \ref{app:lemtech}, and can be expressed in terms of the Lambert $W$-function  (see for example \cite{corless}). This function gives the solution to the equation $W(x) e^{W(x)} = x$ for $x \geq -1/e$, and has two real branches at the point $(-1/e,-1)$; see Figure \ref{fig:LambertW}. We shall write $W_0$ and 
$W_{-1}$ respectively for the principal branch ($W_0(x) \geq -1$) and lower branch $(W_{-1}(x) \leq -1$). The key properties
of $W$ that we shall require are the derivative \cite[Eq. (3.2)]{corless}
\begin{equation} \label{eq:Wderivative} W'(x) = \frac{ W(x)}{x(1 + W(x))} \mbox{ \;\;for $x \notin \{ 0, -1/e \}$}
\end{equation}
(which holds on either branch), and the following asymptotic expansions, \cite[Section 4]{corless}:
\begin{gather} W_0(x) = \log x - \log \log x + o(1), ~~ x \to \infty \label{eq:Wlimit} \\
W_{-1}(x) = \log(-x) - \log(- \log (-x)) + o(1), ~~ x \to 0^{-}, \label{eq:Wlimit2} 
\end{gather}
where $x \to 0^{-}$ means approaching zero from below.
Some intuition behind these expansion is as follows: Direct calculation using the fact that $W(x) e^{W(x)} = x$ means we can deduce that when $W(x) \geq 1$,
we have $x \geq e^{W(x)}$, and hence $W_0(x) \leq \log x$ for $x \geq e$. Similarly, when $W(x) \leq - 1$, we have $e^{W(x)} \leq -x$, and hence $W_{-1}(x) \leq \log(-x)$ for all $x$.

{\bf Concentration Bounds.}
We consider $D_\gamma$ of Definition \ref{def:dgamma} because it naturally arises in tail bounds on binomial
random variables.  Specifically, we will use the following \cite[Ch.~4]{Mot10}: For $Z \sim \Binomial(N,q)$, we have that for any $\epsilon > 0$ that
\begin{align}
    \PP[ Z \le Nq(1-\epsilon) ] &\le \exp\Big( -Nq  D_1(1-\epsilon) 
\Big), \label{eq:bino_le} \\
    \PP[ Z \ge Nq(1+\epsilon) ] &\le \exp\Big( -Nq D_1(1+\epsilon) 
\Big). \label{eq:bino_ge}
\end{align}
The bounds of \eqref{eq:bino_le} and \eqref{eq:bino_ge} are asymptotically tight in certain regimes.  To establish this fact, we will make use of the following binomial coefficient bound \cite[Lemma 4.7.1]{ash}:
$$ \binom{N}{ \delta N} \ge \frac{1}{\sqrt{ 8 N \delta(1-\delta)}} \exp( N h_e(\delta)),$$
where $h_e$ is the binary entropy function in nats. Using this bound, we deduce that for given $\phi \in (0,1)$ such that $N q \phi$ is an integer, we have
\begin{align}
&\PP[ Z = N q \phi]\nonumber \\
& = \binom{N}{ N q \phi} q^{N q \phi} (1-q)^{N (1- q \phi)}  \\
& \ge \frac{1}{\sqrt{ 8 N q \phi(1- q \phi)}} \exp \bigg(  -  N q D_1(\phi)  \nonumber \\ 
    &\hspace*{0.5cm} + N\bigg( q (1-\phi) + (1- q \phi) \log \Big( \frac{ 1-q}{1-q \phi}  \Big) \bigg) \bigg) \label{eq:bino_sharp0}
\\
& \ge \frac{\exp \big(- N (1-\phi)^2 q^2/(1-q) \big)}{\sqrt{ 8 N q \phi(1- q \phi)}} \exp \left(  -  N q D_1(\phi) \right),
 \label{eq:bino_gesharp}
\end{align}
where \eqref{eq:bino_sharp0} uses $qD_1(\phi) = q\phi \log \phi + q(1-\phi)$, and \eqref{eq:bino_gesharp} follows from substituting the value $u = \frac{ 1-q}{1-q \phi}$ in the bound $\log u \geq 1 - \frac{1}{u}$ and rearranging.

We refer to the term preceding the exponential in \eqref{eq:bino_gesharp} as the ``sharpness factor''. Observe that if $\phi$ is constant and we have $N q = \big( a \log p \big)(1+o(1))$, $q = \frac{b}{k} (1+o(1))$, and $k \asymp p^{\theta}$ for some positive constants $a,b,\theta$, then for any $\epsilon' > 0$ this sharpness factor is lower bounded by $p^{-\epsilon'}$ for $p$ sufficiently large. By picking $\phi = 1 \pm \epsilon$, and bounding the tail by the respective point probability, we deduce that \eqref{eq:bino_le} and \eqref{eq:bino_ge} are each tight to within this sharpness factor.

\subsection{Reverse Z-Channel}

Our main result for the reverse Z-channel is written in terms of the following technical definitions.  First, we define 
\begin{equation} \label{eq:kappadef}
    \kappa = \kappa(\theta) :=  - W_{-1} \left( - e^{-1} \rho^{\theta/(1-\theta)} \right),
\end{equation}
where $W_{-1}$ denotes the lower branch of the Lambert $W$-function; that is, $\kappa > 0$ is a solution to the equation $\kappa e^{-\kappa} =   e^{-1} \rho^{\theta/(1-\theta)}$.  Moreover, we write
\begin{align}
    \thetacrit &:= \thetacrit(\rho) = 1 + \frac{ \rho \log \rho}{1-\rho} \label{eq:theta_crit}\\
    \thetaopt &:=\thetaopt(\rho) =  \frac{t(\rho)}{\log \rho + t(\rho)}, \label{eq:theta_opt}
\end{align}
where $t(\rho) := -\log(1-\rho) + \log( -\log(\rho)) + \frac{\log(\rho)}{1-\rho} + 1$.

%

\begin{thm} \label{thm:RZrate} {\em (Reverse Z-Channel)} For noisy group testing under reverse Z-channel noise with parameter $\rho \in (0,1)$, in the regime where $k \asymp p^{\theta}$ with $\theta \in (0,1)$, we have the following under Bernoulli testing:
\begin{enumerate}
\item {\bf [Achievability]} \label{it:RZrate1} 
Under the Bernoulli testing parameter $\nu = 1$, there exists a practical algorithm achieving error probability $\pe \rightarrow 0$ with rate
\begin{equation}
    \hspace*{-2ex} \Runder{RZ}(\theta, \rho) =  \left\{ \begin{array}{ll} 
     \frac{1-\rho}{e \log 2}, & \mbox{ \;\;\;\;  $\theta \leq \thetaopt$,}  \\
      \frac{-\log \rho}{\kappa(\theta) e \log 2}, & \mbox{ \;\;\;\; $\thetaopt \leq \theta \leq \thetacrit$,}  \\
     \frac{(1-\theta)(1- \rho)}{e \log 2}, & \mbox{ \;\;\;\; $  \theta \ge \thetacrit$.}  \\
    \end{array} \right. \label{eq:RZ_ach}
\end{equation}
\item {\bf [Converse]} \label{it:RZrate2}  Under any Bernoulli testing parameter $\nu > 0$, if $\rho < 1/2$, then no algorithm can achieve $\pe \rightarrow 0$ with a rate higher than
\begin{align} \label{eq:RZconverse}
    &\Rbar{RZ}(\theta, \rho) \nonumber \\ &=  \left\{ \begin{array}{ll} 
     \min \left\{ C_Z(\rho), \frac{-\log \rho}{\kappa(\theta) e \log 2} \right\}, & \mbox{$\theta \leq \thetacrit$,}  \\
     \frac{(1-\theta)(1- \rho)}{e \log 2}, & \mbox{$\theta \ge \thetacrit$.} \\
    \end{array} \right. 
\end{align}
where $C_Z(\rho)$ is the Shannon capacity of the channel, given by \eqref{eq:zcap}.
\end{enumerate}
\end{thm}
\begin{proof}
    See Section \ref{sec:RZ_proofs}.
\end{proof}

\begin{figure}
    \begin{centering}
        \includegraphics[width=0.95\columnwidth]{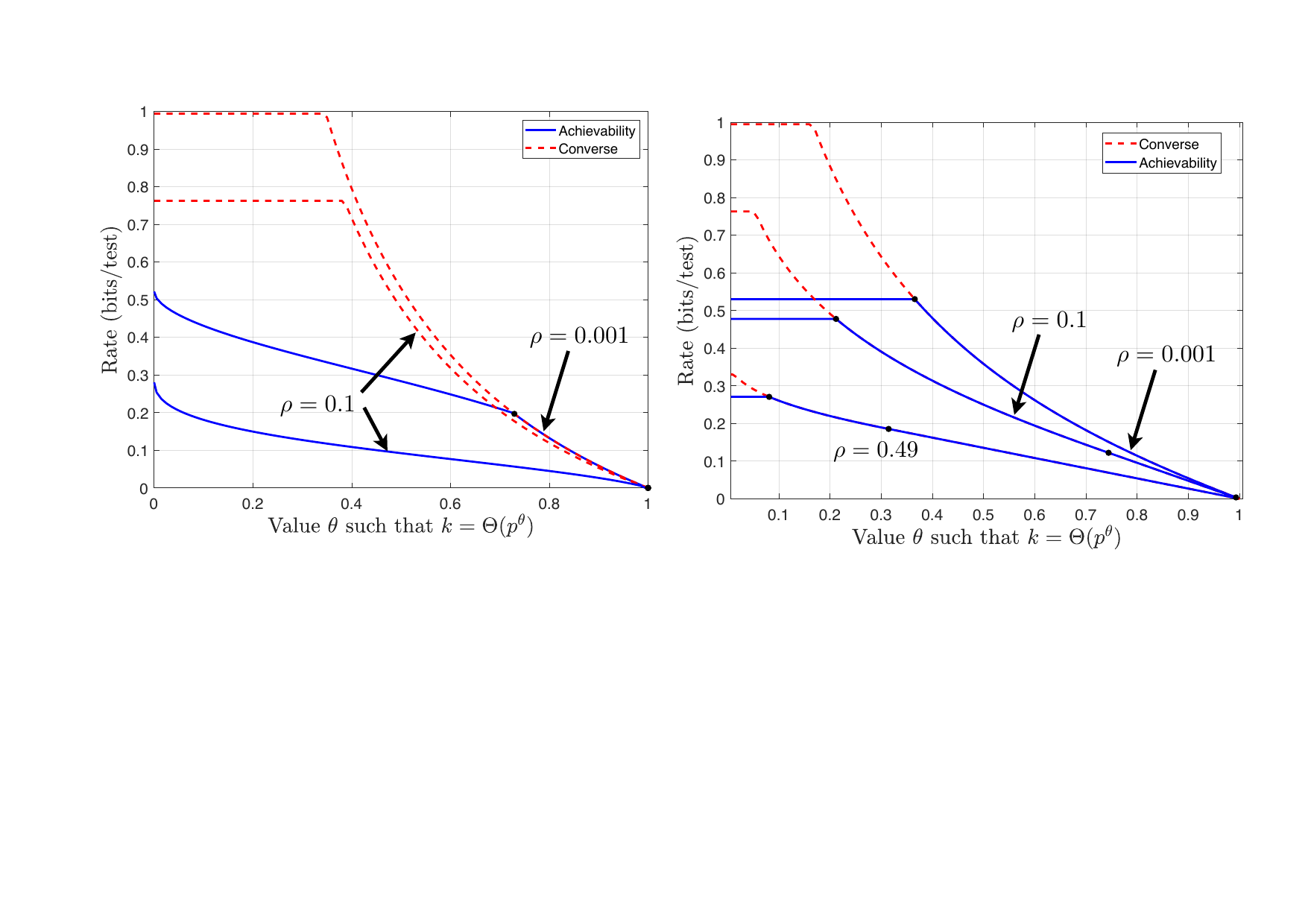}
        \par
    \end{centering}
    
    \caption{Achievable rate and algorithm-independent lower bound under reverse Z-channel noise and Bernoulli testing.  The dots indicate the thresholds $\thetaopt$ and $\thetacrit$.} \label{fig:RZ_rates}
\end{figure}

These rates are illustrated for three different noise levels in Figure \ref{fig:RZ_rates}. 

\begin{rem} \mbox{ } 
\begin{enumerate}
    \item As in the noiseless case, for sufficiently dense problems (i.e., for $\theta \geq \thetaopt$),  we obtain a sharp result, with the achievable and converse rates coinciding.  In this case, this optimal performance is achieved by either a noisy version of the DD algorithm  (for $\thetaopt \leq \theta \leq \thetacrit$) or a trivial extension of the COMP algorithm (for  $\theta \geq \thetacrit$).  See Section \ref{sec:RZ_proofs} for details.
    \item As $\rho \rightarrow 0$, we have $\thetaopt(\rho) \rightarrow 1/2$, and we recover the fact that the optimal performance is achieved by practical algorithms for $\theta \geq 1/2$ \cite{Ald14a}.  However, as soon as we increase the noise level even slightly, the achievability and converse match over a noticeably wider parameter range. For example, for $\rho = 0.001$, this is the case for 
    $\theta \geq \thetaopt(\rho) = 0.3656$, and for $\rho = 0.1$ this widens to $\theta \geq \thetaopt(\rho) = 0.2119$.
    \item As $\rho \rightarrow 0$, we have $\thetacrit \rightarrow 1$, and so for any fixed $\theta$ the final case in \eqref{eq:RZ_ach} does not apply in this limit. Furthermore, as $\rho \rightarrow 0$, \eqref{eq:Wlimit2} gives $\frac{-\log \rho}{\kappa(\theta)} \rightarrow \frac{1-\theta}{\theta}$, and we recover the noiseless results \eqref{eq:ddnoiselessrate}--\eqref{eq:ddnoiselesscap}.
    \item Both the achievable rate and converse provide a curve which is continuous in $\theta$. We can establish continuity at $\thetacrit$ using the fact that $\kappa(\thetacrit) = \frac{1}{\rho}$. This, in turn, follows because we can verify that $\rho^{\thetacrit/(\thetacrit-1)} e = \rho e^{1/\rho}$ (as \eqref{eq:theta_crit} gives $\frac{\thetacrit}{\thetacrit-1} = \frac{1-\rho}{\rho \log \rho} + 1$ and in addition we have $\rho^{\frac{1-\rho}{\rho\log\rho}} = e^{1/\rho - 1}$), and the choice $\kappa = \frac{1}{\rho}$ makes $\kappa e^{-\kappa} =  \big( \rho e^{1/\rho} \big)^{-1}$ in agreement with the definition $\kappa e^{-\kappa} = e^{-1} \rho^{\theta/(1-\theta)}$.
\end{enumerate}
\end{rem}

\noindent In the appendices, we provide two further claims pertaining to converse results under RZ noise:
\begin{enumerate}
    \item (Appendix \ref{sec:dd_spec_conv}) When the noisy DD algorithm is used in conjunction with Bernoulli testing, no rate higher than $\frac{1-\rho}{e \log 2}$ can be achieved.  Therefore, one cannot hope to improve on the first case in \eqref{eq:RZ_ach} (nor on the other cases where an algorithm-independent converse holds) without moving to a different test design and/or a different decoding algorithm.  To our knowledge, this result is new even when specialized to the noiseless case.
    \item (Appendix \ref{sec:high_noise_opt}) In the limit $\theta \to 0$, the achievable rate approaches $\frac{1-\rho}{e \log 2}$, which is also the first-order term in the Z-channel capacity as $\rho \to 1$.  Therefore, under the order of limits $n \to \infty$, $\theta \to 0$, and then $\rho \to 1$, the limiting behavior of the noisy DD rate cannot be improved on even by an adaptive algorithm (since the capacity-based converse holds even for adaptive algorithms \cite{Bal13}).  We support this claim with the rate plot in Figure \ref{fig:VaryingRhoRZ} for $\theta = 0$, where we observe nearly tight bounds for $\rho$ close to one.
\end{enumerate}

\begin{figure}
    \begin{centering}
        \includegraphics[width=0.95\columnwidth]{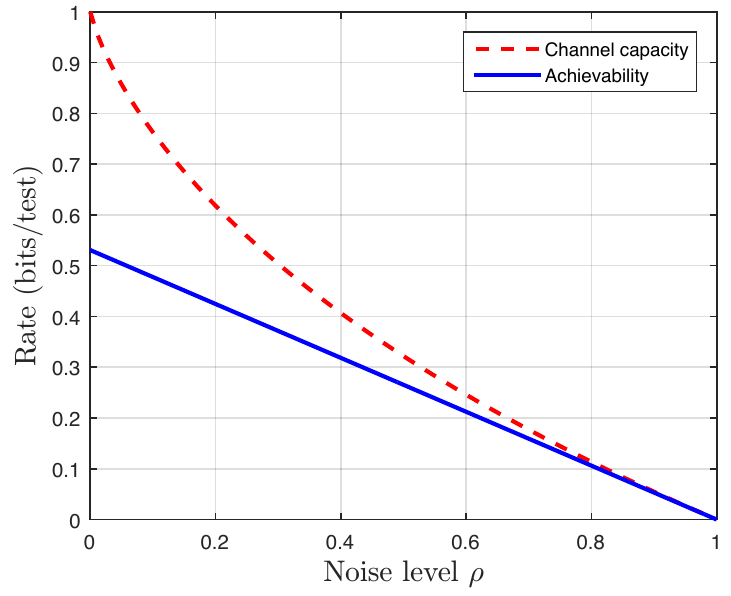}
        \par
    \end{centering}
    
    \caption{Reverse Z-channel model: Limiting achievable rate and algorithm-independent converse as $\theta \to 0$, plotted as a function of the noise level $\rho$.} \label{fig:VaryingRhoRZ}
\end{figure}

\subsection{Z-Channel}

Our rates for the Z-channel are written in terms of the following.  For given values of $\nu$ and $\rho$, define $s = (1-\rho) e^{-\nu}/\rho > 0$ and for $\theta \ne \frac{1}{2}$ write the ratio
\begin{equation} \label{eq:gratio} g(s, \theta) =
    \frac{ 1 + \frac{\theta}{2 \theta -1} s}{ (1+s)^{\theta/(2 \theta-1)}},
\end{equation}
and introduce
\begin{equation} \label{eq:lambdadef}
    \lambda := \lambda(\theta) = \left\{
    \begin{array}{ll}  
    W_{0} \left( - e^{-1} g(s, \theta) \right), & \mbox{ \;\;\;\;\; for $\theta < \frac{1}{2}$,} \\
    W_{-1} \left(  - e^{-1} g(s, \theta) \right), & \mbox{ \;\;\;\;\; for $\theta > \frac{1}{2}$.} \\
    \end{array} \right.
\end{equation}
Moreover, define 
\begin{equation} \label{eq:alphadef}
    \begin{array}{ll} \alpha^*(\theta) = \rho \frac{s}{\log(1+s)} & \mbox{ \;\;\;\;\; for $\theta = \frac{1}{2}$,} \\
    \alpha^*(\theta) = - \frac{\rho}{\lambda(\theta)} \big( 1 +  \frac{\theta s}{2 \theta - 1} \big)    & \mbox{ \;\;\;\;\; for $\theta \neq  \frac{1}{2}$.} \\ \end{array} 
\end{equation}

%

In the two cases $\theta < \frac{1}{2}$ and $\theta > \frac{1}{2}$, the value of $\frac{\theta}{2 \theta - 1}$ is below $0$ and above $1$, respectively. Since these are the two ranges of values of $r$ for which the Bernoulli inequality $(1+ r s) \leq (1+s)^r$ holds, in each case we can deduce that
$g(s, \theta) \leq 1$, which implies that the Lambert functions in \eqref{eq:lambdadef} are well-defined (see Figure \ref{fig:LambertW}).

\begin{thm} \label{thm:Zrate} {\em (Z-Channel)}  For noisy group testing under Z-channel noise with parameter $\rho$, in the regime where $k \asymp p^{\theta}$ with $\theta \in (0,1)$, we have the following under Bernoulli testing with parameter $\nu > 0$:
\begin{enumerate}
\item {\bf [Achieveable rate]} \label{it:Zrate1} 
There exists a practical algorithm with error probability $\pe \rightarrow 0$ with
\begin{align} 
    &\hspace*{-1.2ex}\Runder{Z}(\theta, \rho) = \frac{(1-\theta)(1- \rho) \nu e^{-\nu}}{\theta \log 2 } \nonumber \\
        & \times \min \bigg\{  \frac{\theta}{2 \theta-1} \left(   \frac{\alpha^*(\theta)}{\alpha^*(1/2)} - 1 \right), 1
        \bigg\} ~\mbox{ if }~ \theta \neq \frac{1}{2}, \tag{32a} \label{eq:Zratedoable1}
\end{align}
and
\begin{align} 
    &\Runder{Z}(\theta, \rho) \nonumber \\
     &= \frac{(1- \rho) \nu e^{-\nu}}{ \log 2 }  \min \bigg\{ \frac{1}{\log(1+s)} \log  \frac{ s}{\log(1+s)} \nonumber \\    
        & \qquad - \frac{1}{\log(1+s)} + \frac{1}{s}, 1
        \bigg\} \mbox{ ~ if~ } \theta = \frac{1}{2}.  \tag{32b} \label{eq:Zratedoable2}
\end{align}
\setcounter{equation}{32}
\item {\bf [Converse]} \label{it:Zrate2}  If $\rho < 1/2$, then no algorithm can achieve $\pe \rightarrow 0$ with a rate higher than
\begin{equation} \label{eq:Zrateconverse}
    \Rbar{Z}(\theta, \rho) =  \min \left\{C_Z(\rho), \frac{ (1-\theta)(1-\rho)}{\theta e \log 2} \right\}.
\end{equation}
\end{enumerate}
\end{thm}
\begin{proof}
    See Section \ref{sec:Z_proofs}.
\end{proof}

\begin{figure}
    \begin{centering}
        \includegraphics[width=0.95\columnwidth]{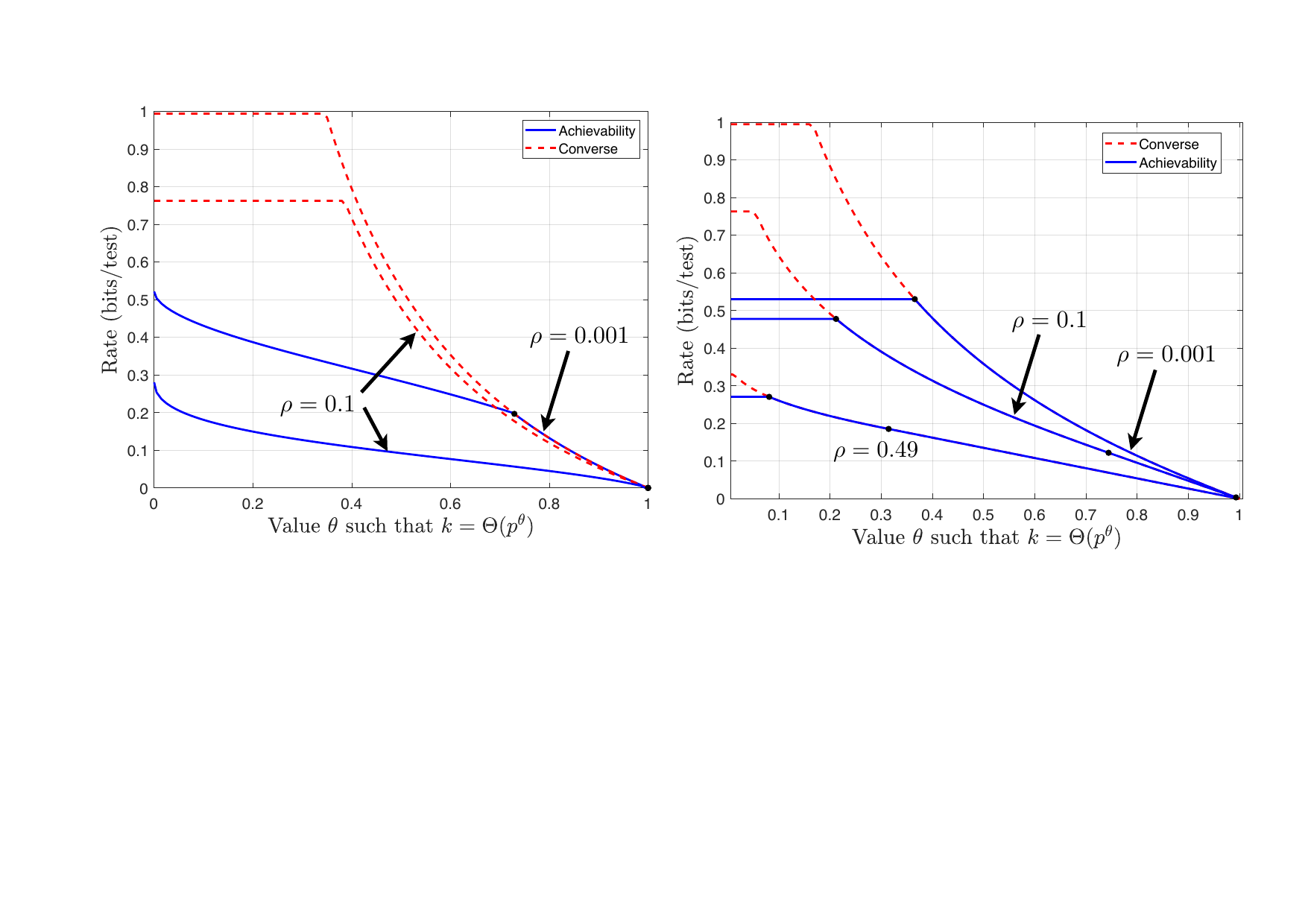}
        \par
    \end{centering}
    
    \caption{Achievable rate and algorithm-independent lower bound under Z-channel noise and Bernoulli testing.  The dots indicate the threshold $\thetacritZ$ (see Remark \ref{rem:Zratedetailsshort}).  \label{fig:Z_rates}}
\end{figure}

These rates are illustrated for two different noise levels in Figure \ref{fig:Z_rates}.  The noisy DD algorithm used for the achievability part is described in Section \ref{sec:Z_proofs}.  We proceed by giving some properties of these rates; see Remark \ref{rem:Zratedetails} in the proof for more details.

\bigskip
\begin{rem} \label{rem:Zratedetailsshort}  \mbox{ }
\begin{enumerate}
\item It can be shown that both the achievable and converse rates are continuous and non-increasing in $\theta$.  
\item There exists some $\thetacritZ$ (which may be equal to $1$) such that the first term achieves the $\min\{\cdot,1\}$ in (32) if and only if $\theta \leq \thetacritZ$.  For $\nu = 1$ and $\theta > \thetacritZ$, the achievability and converse bounds coincide.  As we see in Figure \ref{fig:Z_rates}, this is indeed observed for small $\rho$; however, even for $\rho = 0.1$, the bounds do not match for any value of $\theta < 1$, due to the fact that $\thetacritZ = 1$.
\item As $\theta \rightarrow 1$, we have $g(s, \theta) \rightarrow 1$, so that $\lambda(\theta) \rightarrow -1$ and 
$\alpha^*(\theta) \rightarrow \rho(1+s)$.  The implies that the first term inside the minimum in \eqref{eq:Zratedoable1} converges
to $ ((1+s) \log(1+s) - s)/s$, which is less than $1$ if and only if $s > 3.922$. In the case $\nu = 1$, this corresponds to the
fact that the first term in (32) always gives the minimum (i.e., $\thetacritZ = 1$) for $\rho < 0.0858$.
\item As $\rho \rightarrow 0$, the converse result \eqref{eq:Zrateconverse} clearly tends to the noiseless converse result of
\eqref{eq:ddnoiselesscap}. The corresponding argument for the achievability rate is more delicate. However,
as described in Remark \ref{rem:Zratedetails} in the proof, we indeed recover the noiseless achievable rate  of \eqref{eq:ddnoiselessrate} as $\rho \rightarrow 0$.
\item The achievability result includes the Bernoulli testing parameter $\nu > 0$, which can be optimized. Since $s$ depends on $\nu$, there appears to be no closed form expression for the optimal choice. However, our numerical findings suggest that the value $\nu=1$  is near-optimal, particularly for small $\rho$.
Comparing the achievability and converse parts, we see that $\nu = 1$ is certainly optimal when $\theta \geq \thetacritZ$.
\end{enumerate}
\end{rem}

\subsection{Comparison of the Channels} \label{sec:Z_vs_RZ}

In Figure \ref{fig:both_rates}, we compare the rates for the Z and reverse Z-channel noise models.  For $\theta$ close to one, we observe that the former is provably easier to handle: The Z achievability curve lies above the RZ converse curve.  On the other hand, RZ noise appears to be easier to handle for small to moderate $\theta$ (though we cannot say this conclusively, as we have not yet verified whether the Z achievability curve is the best possible).

Some intuition behind this behavior can be obtained by noting that, like the noiseless version, the noisy DD algorithms first use negative tests to find a set of ``possible defectives'' (PD), and then estimate the defective set based on positive tests containing a single PD.  The rate turns out to be dictated by the first step for small $\theta$, and by the second step for large $\theta$.   Given that this is the case, the behavior in Figure \ref{fig:both_rates} is to be expected:
\begin{itemize}
    \item The first step is easier under RZ noise, since negative test outcomes are perfectly reliable.
    \item The second step is easier under Z noise, since positive test outcomes are perfectly reliable.
\end{itemize}
The fact that Z noise is preferable for $\theta$ close to one was also observed in the {\em adaptive} setting in \cite{Sca18}.

\begin{figure}
    \begin{centering}
        \includegraphics[width=0.95\columnwidth]{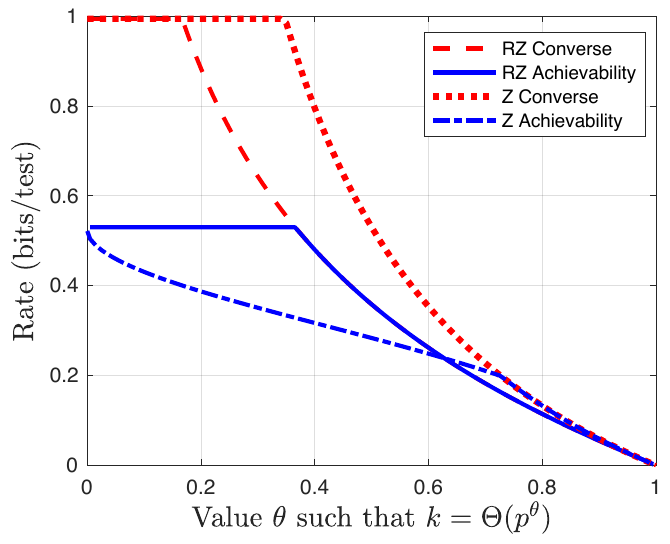}
        \par
    \end{centering}
    
    \caption{Comparison of rates under Z and reverse Z-channel noise. \label{fig:both_rates}}
\end{figure}

\subsection{General Binary Noise and Symmetric Noise}

Our techniques can be extended to general binary channels ({\em cf.}, \eqref{eq:gt_gen_model}), including the widely-considered symmetric model ({\em cf.}, \eqref{eq:gt_symm_model}).  However, for such channels, we have not yet proved a matching achievability and converse in any regime, other than the low-noise limit.  We therefore defer our results on these models to Appendix \ref{app:symmetric}.

\section{Proof of Theorem \ref{thm:RZrate} (Reverse Z-Channel)} \label{sec:RZ_proofs}

\subsection{Achievability via Noisy DD and COMP}

Recall the reverse Z-channel of Definition \ref{def:chanmod}. We first describe a noisy DD algorithm for this model, exploiting the fact that if an item appears in a test with $Y=0$ then we can be certain that it is non-defective. \\

\noindent\fbox{
    \parbox{0.95\columnwidth}{
        \textbf{Noisy DD algorithm for RZ channel noise:}
        \begin{enumerate}
            \item For each $j \in [p]$, let $\Nnegj$ be the number of negative tests in which item $j$ is included.  In the first step, we construct the following set of items that are definitely non-defective:
            \begin{equation}
            \NDhat = \bigg\{ j \in [p] \,:\, \Nnegj > 0 \bigg\}. \label{eq:NDhat_RZ}
            \end{equation}
            The remaining items, $\PDhat = [p] \setminus \NDhat$, are called ``possibly defective''.
            \item For each $j \in \PDhat$, let $\Npposj$ be the number of positive tests that include item $j$ and no other item from $\PDhat$.  In the second step, we fix a constant $\beta \in (\rho,1)$, and estimate the defective set as follows:
            \begin{equation}
            \Shat = \bigg\{ j \in \PDhat \,:\, \Npposj \ge \frac{\beta n \nu e^{-\nu}}{k} \bigg\}.  \label{eq:Shat_RZ}
            \end{equation}
        \end{enumerate}
    }
} \medskip

In addition to this noisy variation of DD, we can directly apply the noiseless COMP algorithm (e.g., see \cite{Cha11}) to the reverse Z-channel model.  As in the noiseless case, we know that even if reverse Z-channel noise is present, an item appearing in a negative test is definitive proof that it is not defective.  Therefore, we can consider taking $\PDhat$ above to be the estimate of $S$.  We analyze the performance of this algorithm using essentially the same argument as in \cite{Cha11}. We wish to ensure that each non-defective item has some test where it is tested with no defective present, and that the resulting test outcome is not changed by the reverse Z-channel (we say that the non-defective item {\em survives} this test).

\begin{lem} \label{lem:COMP_RZ}
    {\em (COMP under RZ noise)}
    Consider the reverse Z-channel noisy group testing setup with parameter $\rho \in (0,1)$, number of defectives $k \asymp p^{\theta}$ (where $\theta \in (0,1)$), and i.i.d.~Bernoulli testing with parameter $\nu > 0$. Under the COMP algorithm, we have $\pe \to 0$ as long as $n \geq \ncomp (1+\eta)$ for arbitrarily small $\eta > 0$, where 
    \begin{equation} 
       \ncomp =  \frac{1}{(1-\rho) \nu e^{-\nu}}  k \log p. \label{eq:ncomp_bound_RZ}
    \end{equation}
\end{lem}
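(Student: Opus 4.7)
The plan is to follow the classical COMP analysis of \cite{Cha11}, adapted for the RZ channel, where the key structural property is $P_{Y|U}(0|1)=0$: a zero test outcome is irrefutable proof that the pool contained no defective. Consequently every item placed in $\NDhat$ by \eqref{eq:NDhat_RZ} is genuinely non-defective, so $\Shat = \PDhat$ always contains the true defective set $S$. The COMP estimate can err only by failing to eliminate some non-defective item $j$, that is, by having $j$ never appear in a test whose outcome is $0$. It therefore suffices to show that, with high probability, every non-defective is eliminated at least once.

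Next, I would fix a non-defective item $j$ and compute the per-test elimination probability. A single test eliminates $j$ iff three independent events occur: $j$ is included in the test (probability $\nu/k$), no defective is included so that $U=0$ (probability $(1-\nu/k)^k = e^{-\nu}(1+o(1))$, using $k \to \infty$), and the RZ noise does not flip $0$ to $1$ (probability $1-\rho$). Multiplying these gives a per-test elimination probability of $\frac{(1-\rho)\nu e^{-\nu}}{k}(1+o(1))$, uniformly in the choice of $j$.

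Since the $n$ tests are independent, the probability that $j$ survives all $n$ tests is bounded by $\exp\bigl(-n(1-o(1))\cdot\frac{(1-\rho)\nu e^{-\nu}}{k}\bigr)$. A union bound over the $p-k = \Theta(p)$ non-defective items yields $\pe \le p\exp\bigl(-n(1-o(1))\cdot\frac{(1-\rho)\nu e^{-\nu}}{k}\bigr)$, which tends to zero whenever $n \geq \ncomp(1+\eta)$ with $\ncomp$ as in \eqref{eq:ncomp_bound_RZ}, since the exponent then exceeds $-(1-\eta/2)\log p$ for all sufficiently large $p$.

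There is no substantive obstacle in this argument: it is essentially the classical COMP analysis, with the RZ noise contributing only the extra $(1-\rho)$ factor in the per-test elimination probability. The only routine care is ensuring the $o(1)$ terms are uniform in $j$ and handling the transition from the exact per-test probability to its asymptotic form, which is justified by $k = \Theta(p^{\theta})$ with $\theta \in (0,1)$ so that $k \to \infty$ and $(1-\nu/k)^k \to e^{-\nu}$; the fixed margin $\eta>0$ in the statement comfortably absorbs all such perturbations.
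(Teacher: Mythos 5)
Your proof is correct and follows essentially the same route as the paper: you identify that under RZ noise $\PDhat$ always contains $S$, compute the per-test elimination probability $(1-\rho)\frac{\nu}{k}\bigl(1-\frac{\nu}{k}\bigr)^k$ for a non-defective item, bound the survival probability by $\exp(-n\cdot(\cdot))$, and finish with a union bound over the $\Theta(p)$ non-defectives. The only cosmetic difference is terminology (you say ``eliminated'' where the paper says ``survives a test''); the calculations match line by line.
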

\begin{proof}  
The probability that any particular non-defective item (is included in and) survives any particular test is $(1-\rho) \nu/k (1- \nu/k)^k$.
Using $n = \gamma k \log p$ tests, we can control $\pe$ using the union bound by
{\allowdisplaybreaks
\begin{align}
\pe & \leq \pr \left( \bigcup_{j \in S^c} \{ \mbox{item $j$ doesn't survive any test} \} \right) \\
    &\leq  p \pr \left( \mbox{item $j$ doesn't survive any test} \right) \\
    &= p \bigg( 1 -  (1-\rho) \frac{\nu}{k} \Big(1- \frac{\nu}{k}\Big)^k \bigg)^n \\
    &\leq  p \exp \left( \left(  - (1-\rho) \nu (1- \nu/k)^k \gamma \right) \log p \right) \label{eq:COMP_Step4} \\
    &\leq  \exp \left( \left( 1 - (1-\rho) \nu (1- \nu/k)^k \gamma \right) \log p \right),
\end{align}}
where \eqref{eq:COMP_Step4} uses $1-z \le e^{-z}$.  Since $\big(1-\frac{\nu}{k}\big)^k = e^{-\nu}(1+o(1))$, we deduce that if $n \geq \ncomp(1+\eta)$ (or equivalently
$  (1- \rho) \nu e^{-\nu} \gamma \ge 1+\eta$), then $(1-\rho) \nu (1- \nu/k)^k \gamma \ge 1 + \eta/2$ for $k$ sufficiently large,
so $\pe \leq p^{-\eta/2}$, which converges to 0. 
\end{proof}

The main step towards proving the achievability part of Theorem \ref{thm:RZrate} is to establish the following.

\begin{thm} \label{thm:ndd_RZ}
    {\em (NDD under RZ noise)}
    Consider the reverse Z-channel noisy group testing setup with parameter $\rho \in (0,1)$, number of defectives $k \asymp p^{\theta}$ (where $\theta \in (0,1)$), and i.i.d.~Bernoulli testing with parameter $\nu > 0$.  For any $\beta \in (\rho,1)$ and $\xi \in (0,\theta)$, the noisy DD algorithm achieves $\pe \to 0$  as long as $n \ge \ndd(1+\eta)$ for arbitrarily small $\eta > 0$, where
    \begin{equation}
    \ndd = \max\big\{ \nind, \niid, \niind \big\}, \label{eq:ndd_bound_RZ}
    \end{equation}
    and where
    \begin{align}
        \nind &= \frac{1-\xi}{(1-\rho) \nu e^{-\nu}} \cdot k \log p,  \label{eq:nind_RZ} \\
        \niid &= \frac{1}{\nu e^{-\nu} D_1(\beta)} 
\cdot k \log k,  \label{eq:niid_RZ} \\
        \niind &= \frac{\xi}{\nu e^{-\nu} \rho D_1(\beta/\rho)} 
\cdot k \log p.  \label{eq:niind_RZ}
    \end{align}
\end{thm}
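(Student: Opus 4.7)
The plan is to bound three error events corresponding to the three terms $\nind$, $\niid$, $\niind$ in \eqref{eq:ndd_bound_RZ}: (a) stage 1 retains too many non-defectives in $\PDhat$; (b) some defective in $S$ has stage-2 count $\Ntilpos$ below the threshold $\beta n \nu e^{-\nu}/k$; and (c) some non-defective in $\PDhat$ has that count above the threshold. A union bound over these three events will give $\pe \to 0$.

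For (a), I will exploit that for any non-defective $k'$, the event $\{k' \in \PDhat\}$ requires that in every test containing $k'$, either a defective item is also present or the outcome is flipped to positive by the noise. Factoring across the independent tests gives $\PP[k' \in \PDhat] = \bigl(1-(1-\rho)(\nu/k)(1-\nu/k)^k\bigr)^n$. Substituting $n \ge \nind(1+\eta)$ and $(1-\nu/k)^k \to e^{-\nu}$ yields $\EE[|T|] \le p^{\xi - \eta(1-\xi) + o(1)}$ with $T := \PDhat \cap S^c$, so Markov's inequality gives $|T| \le p^{\xi}$ with probability tending to one.

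The crucial technical idea for (b) and (c) is to condition on the defective-item columns $\mathbf{X}_S := (X_{ij})_{i \in [n],\, j \in S}$ and on the noise $\mathbf{Z}$. This fixes the set of negative tests $\mathcal{N} = \{i : U_i = 0,\ Z_i = 0\}$, the set of flipped-negative tests $\mathcal{P} = \{i : U_i = 0,\ Z_i = 1\}$, and, for each defective $j^\ast$, the set $\mathcal{A}_{j^\ast} = \{i : X_{ij^\ast} = 1,\ X_{ij} = 0\ \forall j \in S \setminus \{j^\ast\}\}$ of tests in which $j^\ast$ is alone among the defectives. The sets $\mathcal{A}_{j^\ast}, \mathcal{N}, \mathcal{P}$ are pairwise disjoint, since $\mathcal{A}_{j^\ast}$ forces $U_i = 1$ while $\mathcal{N} \cup \mathcal{P}$ forces $U_i = 0$; consequently, for any non-defective $k'$, the portions of the column $X_{\cdot,k'}$ indexed by these three sets are mutually independent. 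This is what decouples the stage-1 survival event from the stage-2 counts and dissolves the main technical obstacle, namely the coupling between $T$ and the stage-2 statistics.

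For (b), I will write $\Ntilpos(j^\ast) = |\mathcal{A}_{j^\ast}| - M'(j^\ast)$, where $M'(j^\ast)$ counts tests in $\mathcal{A}_{j^\ast}$ that also contain some $k' \in T$. The count $|\mathcal{A}_{j^\ast}|$ is $\Binomial(n,(\nu/k)(1-\nu/k)^{k-1})$, so \eqref{eq:bino_le} together with $n \ge \niid(1+\eta)$ gives $|\mathcal{A}_{j^\ast}| \ge (\beta+\delta) n\nu e^{-\nu}/k$ for small $\delta$ with failure probability at most $k^{-1-\eta/2}$, surviving a union bound over the $k$ defectives; this is the source of the $\log k$ factor and the exponent $D_1(\beta)$ in $\niid$. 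Conditional on $\{|T| \le p^\xi\}$ and $(\mathbf{X}_S, \mathbf{Z})$, each $i \in \mathcal{A}_{j^\ast}$ contributes to $M'(j^\ast)$ independently with probability at most $1-(1-\nu/k)^{|T|} \le \nu p^{\xi-\theta} = o(1)$, and a further Chernoff bound then forces $M'(j^\ast) \le \delta|\mathcal{A}_{j^\ast}|$ with superpolynomially small failure probability, hence $\Ntilpos(j^\ast) \ge \beta n\nu e^{-\nu}/k$. For (c), a non-defective $j$ being the unique $\PDhat$ item in a positive test forces $U_i = 0$ and $Z_i = 1$, so $\Ntilpos(j) \le |\{i \in \mathcal{P}: X_{ij} = 1\}|$. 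Given $(\mathbf{X}_S, \mathbf{Z})$, the events $\{j \in \PDhat\}$ and $\{|\{i \in \mathcal{P}: X_{ij}=1\}| > \beta n\nu e^{-\nu}/k\}$ depend only on $X_{\cdot,j}$ restricted to $\mathcal{N}$ and $\mathcal{P}$ respectively, hence are independent; the former has probability $(1-\nu/k)^{|\mathcal{N}|} \le p^{-(1+\eta)(1-\xi)(1-o(1))}$ by concentration of $|\mathcal{N}|$ around $n e^{-\nu}(1-\rho)$, and \eqref{eq:bino_ge} with mean $|\mathcal{P}|\nu/k \approx n\nu e^{-\nu}\rho/k$ and target ratio $\beta/\rho > 1$ bounds the latter by $p^{-(1+\eta)\xi(1-o(1))}$. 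Their product is $p^{-(1+\eta)(1-o(1))}$, so summing over the $p-k$ non-defectives still gives $o(1)$.
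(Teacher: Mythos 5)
Your proof is correct, and it follows the same high-level strategy as the paper (first-stage suppression of all but $p^{\xi}$ non-defectives; concentration of the stage-two count for defectives; rarity of false positives for non-defectives, all via binomial Chernoff bounds driven by $D_1$). The useful technical observation in both cases is that, conditional on the defective columns $\mathbf{X}_S$ and the noise $\mathbf{Z}$, the classification of tests into $\mathcal{N}$, $\mathcal{P}$, and the $\mathcal{A}_{j^*}$ is determined, these sets are pairwise disjoint, and the non-defective columns restricted to disjoint row sets remain independent. The paper phrases this more informally (``one can envision the non-defective items as being placed in each test\ldots after the test outcomes have been produced'') and works with conditional probabilities per test; you make the independence structure explicit, which is arguably clearer.

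There are two places where your bookkeeping departs from the paper's in a genuinely different (and slightly slicker) way. For defectives, you decompose $\Ntilpos(j^*) = |\mathcal{A}_{j^*}| - M'(j^*)$ and handle the contamination $M'(j^*)$ by a separate Chernoff bound against its mean $O(p^{\xi-\theta})|\mathcal{A}_{j^*}|$; the paper instead absorbs $g=o(k)$ directly into a $(1+o(1))$ factor in the success probability of a single conditional Binomial for $\Ntilpos(j^*)$. More substantively, for non-defectives you exploit the independence of $\{j\in\PDhat\}$ (a function of $X_{\cdot,j}|_{\mathcal N}$) and the positive-count upper bound $|\{i\in\mathcal P:X_{ij}=1\}|$ (a function of $X_{\cdot,j}|_{\mathcal P}$) to multiply the two tail probabilities $p^{-(1+\eta)(1-\xi)(1+o(1))}$ and $p^{-(1+\eta)\xi(1+o(1))}$, then union-bound over all $p-k$ non-defectives. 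The paper instead conditions on the stage-one event $|T|\le p^{\xi}$ and union-bounds over the at most $p^{\xi}$ items in $T$; both routes combine $\nind$ and $\niind$, giving the same final constraint. Your version has the modest advantage that part~(c) does not need to be conditioned on the stage-one success, so the three conditions cover the three error events less intricately. One minor caveat: your labeling of ``three error events corresponding to the three terms'' is not quite a one-to-one map (event~(c) uses both $\nind$ and $\niind$), but since all three conditions are assumed simultaneously this is only a phrasing issue, not a gap.
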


Once this result and Lemma \ref{lem:COMP_RZ} are in place, proving the achievability of the rate \eqref{eq:RZ_ach} boils down to algebraic manipulations.  Since these are somewhat tedious, they are deferred to Appendix \ref{app:manip_ach_RZ}.  In the remainder of this subsection, we focus on the proof of Theorem \ref{thm:ndd_RZ}.

Let $\Nneg$ and $\Npos$ respectively denote the number of negative and positive tests.  Both of these quantities follow a binomial distribution; the probability of a given test $i$ being positive is given by
\begin{align}
\PP[ Y_i = 1 ] 
&= \bigg(1 - \Big( 1 - \frac{\nu}{k} \Big)^k \bigg) + \rho\Big( 1 - \frac{\nu}{k} \Big)^k \\
&= \Big( (1-e^{-\nu}) + \rho e^{-\nu} \Big) (1+o(1)),
\end{align}
since $\big( 1 - \frac{\nu}{k} \big)^k \to e^{-\nu}$ as $k \to \infty$, and we similarly have
\begin{align}
\PP[ Y_i = 0 ] =  (1-\rho) e^{-\nu} \cdot (1+o(1)). \label{eq:p_neg}
\end{align}
Hence, and using \eqref{eq:bino_le}--\eqref{eq:bino_ge}, we have with probability approaching one that
\begin{align}
\Nneg &= n \cdot (1-\rho) e^{-\nu} \cdot (1+o(1)), \label{eq:n0_conc_RZ} \\
\Npos &= n \cdot \Big( (1-e^{-\nu}) + \rho e^{-\nu} \Big) \cdot (1+o(1)). \label{eq:n1_conc_RZ}
\end{align}
It will be useful to split the $\Npos$ positive tests into two types: those that contain a defective item and whose noise does not flip the outcome, and those that contain no defective items but whose noise flips the outcome.  The number of such tests are denoted by $\Npd$ and $\Npnd$, respectively.  A given test falls into the first category with probability $(1-e^{-\nu}) (1+o(1))$, and the second category with probability $\rho e^{-\nu}(1+o(1))$. Therefore, the concentration bounds \eqref{eq:bino_le}--\eqref{eq:bino_ge} yield
\begin{align}
\Npd &= n \cdot (1-e^{-\nu}) \cdot (1+o(1)), \label{eq:n1d_conc_RZ} \\
\Npnd &= n \cdot \rho e^{-\nu} \cdot  (1+o(1)). \label{eq:n1nd_conc_RZ}
\end{align}
with probability approaching one.

Throughout the remainder of the analysis, we implicitly condition on the defective set $S$ taking a fixed value, say $S = \{1,\dotsc,k\}$.  By the symmetry of the random test design, the conditional error probability is the same for any such realization of cardinality $k$.

{\bf Analysis of First Step.}
Since negative tests can never include a defective item for the reverse Z-channel, the set $\PDhat = [p] \setminus \NDhat$ contains all of the defective items.  We proceed by establishing a sufficient condition such that with high probability, it also contains at most $p^\xi$ non-defectives, for some constant $\xi \in (0,\theta)$.   We denote the complement of this event by $\peind$, where the subscript and superscript respectively denote the step number and the consideration of non-defective items. 

{\em Analysis of non-defective items.} Let $\peind(\nneg)$ be defined similarly to $\peind$, but conditioned on $\Nneg$ taking a given value $\nneg$.  It suffices to establish that  $\peind(\nneg) \to 0$ for all $\nneg$ satisfying the concentration bound \eqref{eq:n0_conc_RZ}, since this bound holds with probability approaching one.

Since the test outcomes depend only on the defective items, one can envision the non-defective items as being placed in each test with probability $\frac{\nu}{k}$ {\em after} the test outcomes have been produced.  As a result, given $\Nneg = \nneg$, the number of negative tests $\Nnegj$ including a given item $j \notin S$ is distributed as $(\Nnegj | \nneg) \sim \Binomial\big(\nneg,\frac{\nu}{k}\big)$.  Hence, for any $j \notin S$, we have
\begin{align}
\PP[j \notin \NDhat \,|\, \nneg] 
&= \PP\big[ \Nnegj = 0 \,\Big|\, \nneg \big] \\
&= \bigg(1 - \frac{\nu}{k}\bigg)^{\nneg} \\
&\le e^{-\frac{\nneg \nu}{k}}, \label{eq:pe1nd_4_RZ}
\end{align}
where we have applied $1 - \zeta \le e^{-\zeta}$.  As a result, letting $G$ denote the number of non-defective items in $\PDhat = [p] \setminus \NDhat$, we find that $\EE[G \,|\,\nneg]$ is upper bounded by $p-k$ times the right-hand side of \eqref{eq:pe1nd_4_RZ}.  Applying Markov's inequality, we obtain
\begin{equation}
\PP[G \ge p^\xi \,|\,\nneg] \le p^{1-\xi} e^{-\frac{\nneg \nu}{k}}.
\end{equation}
Since $\nneg$ satisfies \eqref{eq:n0_conc_RZ}, we find that we can achieve $\peind \to 0$ under the condition
\begin{equation}
n \ge \bigg(\frac{1-\xi}{(1-\rho)\nu e^{-\nu}} \cdot k \log p \bigg) (1+o(1)). \label{eq:n_final_1nd_RZ}
\end{equation}

{\bf Analysis of Second Step.} Recall that the final estimate includes all $j \in \PDhat$ for which $\Npposj \ge \frac{\beta n \nu e^{-\nu}}{k}$, where $\Npposj$ is the number of tests containing item $j$ and no other item from $\PDhat$.  To characterize the distribution of $\Npposj$, we make use of a multinomial conditioning argument analogous to that used in the noiseless DD algorithm \cite{Ald14a}.  Since this is used multiple times throughout the paper, we first state the relevant result in generic notation.

\begin{lem} \label{lem:multi_cond}
    {\em \cite[Lemma C.1]{Ald14a}}
    Fix a positive integer $m$, and let $(W_0,W_1,W_2)$ have a multinomial distribution with $m$ trials and probabilities $(r_0,r_1,r_2)$.  Associate an observation $(W_0,W_1,W_2) = (w_0,w_1,w_2)$ with an unordered list of $m$ class labels (class $0$, $1$, or $2$), and suppose that each label in class $1$ is independently changed to some sub-class $1'$ with probability $\gamma \in [0,1]$, and to some sub-class $1''$ with probability $1-\gamma$ (where $\gamma$ may depend on $w_0$).  Then, conditioned on $W_0 = w_0$, the corresponding random variables $(W'_1,W''_1,W_2)$ counting the transformed class labels have a multinomial distribution with $m-w_0$ trials and the following probability parameters:
    \begin{equation}
        \bigg( \frac{r_1 \gamma}{1-r_0}, \frac{r_1 (1-\gamma)}{1-r_0}, \frac{r_2}{1-r_0} \bigg).
    \end{equation}
\end{lem}

%

{\em Analysis of defective items.} To apply Lemma \ref{lem:multi_cond}, we first fix a defective $j \in S$ and consider the triplet $(\Nneg,\Ntilposj,\Nother)$, where:
\begin{itemize}
    \item $\Nneg$ is the number of negative tests; recall from \eqref{eq:p_neg} that the probability of a negative test is $\qneg = \big((1-\rho) e^{-\nu}\big)(1+o(1))$.
    \item $\Ntilposj$ is the number of tests containing $j$ but no other defective item; the probability of a given test satisfying this condition is $\qtil_j = \frac{\nu}{k}\big(1 - \frac{\nu}{k}\big)^{k-1} = \big( \frac{\nu e^{-\nu}}{k} \big)(1+o(1))$.
    \item $\Nother$ is the number of remaining tests, with associated probability $\qother = 1 - \qneg - \qtil_j$.
\end{itemize}
Hence, $(\Nneg,\Ntilposj,\Nother)$ has a multinomial distribution with $n$ trials and parameters $(\qneg,\qtil_j,\qother)$.

We now consider conditioning on the negative tests, and consequently on $\Nneg = \nneg$, $\PDhat = \pdhat$, and $G = g$, which are determined from the first step using only these tests.  We assume that $\pdhat$ contains all defective items and $g \le p^{\xi}$ non-defectives, and that $\nneg$ satisfies the concentration bound \eqref{eq:n0_conc_RZ}; recall that these events all occur with probability approaching one.

Conditioned on $(\pdhat, g, \nneg)$, we consider ``splitting'' the $\Ntilposj$ tests mentioned above into two sub-classes with counts $(\Npposj, \Ntilposj - \Npposj)$.  By the definitions of $\Ntilposj$ and $\Npposj$, each test will fall in the second sub-class if it contains a non-defective from $\pdhat$, and in the first class otherwise.  Hence, the associated conditional probability of falling in the first sub-class is\footnote{In Lemma \ref{lem:multi_cond} we only condition on $W_0 = w_0$, which plays the role of $\Nneg = \nneg$ here.  Since the tests are independent and the conditioning on $\PDhat = \pdhat$ and $G = g$ only depends on the negative tests, this additional conditioning does not affect the conditional joint distribution of $(\Ntilposj,\Nother)$ given $\Nneg = \nneg$; it only affects the probability $\gamma$ appearing in Lemma \ref{lem:multi_cond}. }
\begin{equation}
    \gamma = \bigg(1 - \frac{\nu}{k}\bigg)^g = 1 - o(1), \label{eq:gamma_calc}
\end{equation}
where we used the fact that $g = o(k)$ (since $g \le p^{\xi}$ with $\xi < \theta$, whereas $k \asymp p^{\theta}$).

Combining the above observations and applying Lemma \ref{lem:multi_cond}, we find that the joint distribution of $(\Npposj, \Ntilposj - \Npposj, \Nother)$ given $(\pdhat, g, \nneg)$ is multinomial with $n - \nneg$ trials, with the first probability parameter being 
\begin{align}
    \qpposj 
        &= \frac{\qtil_j \gamma}{1 - \qneg} \\
        &= \bigg( \frac{\nu e^{-\nu}}{k (1 - \qneg)} \bigg)(1+o(1))
\end{align}
by the above calculations of $\qtil_j$ and $\gamma$.
Since the marginal of a multinomial distribution is binomial and $n - \nneg = n(1-\qneg)(1+o(1))$ (see \eqref{eq:n0_conc_RZ}), we deduce that
\begin{multline}
    \hspace*{-1.5ex} (\Npposj \,|\, \pdhat, g, \nneg) \sim \Binomial\bigg( n(1-\qneg)(1+o(1)), \\ \frac{\nu e^{-\nu}}{k (1 - \qneg)} (1+o(1)) \bigg). \label{eq:Ntil1_dist_RZ}
\end{multline}
Observe that the mean $\mutild_1$ of this distribution satisfies
\begin{equation}
\mutild_1 = \bigg(\frac{n}{k} \nu e^{-\nu} \bigg) (1+o(1)). \label{eq:mutil_RZ}
\end{equation}
Recall from \eqref{eq:Shat_RZ} that a given item $j$ is included in the final estimate $\Shat$ if $\Npposj \ge \frac{\beta n \nu e^{-\nu}}{k}$.  Under the definition $\epsiid = 1 - \beta$, we find that the threshold $\frac{\beta n \nu e^{-\nu}}{k}$ equals $\frac{n}{k} \nu e^{-\nu}(1 - \epsiid)$, and hence, the probability that a given defective item $j$ is incorrectly excluded from $\Shat$ satisfies
\begin{align}
&\PP[ j \notin \Shat \,|\, \pdhat, g, \nneg ] \nonumber \\
&= \PP\bigg[ \Npposj < \frac{n}{k} \nu e^{-\nu}(1 - \epsiid) \,\Big|\,\pdhat, g, \nneg \bigg] \\
&\le \exp\bigg( -\frac{n}{k} \nu e^{-\nu} \cdot D_1(1-\epsiid) \cdot (1+o(1)) \bigg) \label{eq:step2d_8_RZ} 
\end{align}
by \eqref{eq:Ntil1_dist_RZ}--\eqref{eq:mutil_RZ} and the concentration bound in \eqref{eq:bino_le}.

By the union bound, the probability of there existing some $j \in S$ failing to be included in $\Shat$ is at most $k$ times the right-hand side of \eqref{eq:step2d_8_RZ}.  By re-arranging, we deduce that $\peiid \to 0$ as $p \to \infty$ under the condition
\begin{equation}
n \ge \bigg(\frac{1}{\nu e^{-\nu}} \cdot \frac{1}{D_1(\beta)} \cdot k \log k\bigg) (1+o(1)), \label{eq:n_final_2d_RZ}
\end{equation}
where we have substituted the choice $\epsiid = 1 - \beta$.

{\em Analysis of non-defective items.} To characterize the probability of a non-defective incorrectly being included in the final estimate, we apply a similar argument to the one following Lemma \ref{lem:multi_cond}.  Due to the level of similarity, we omit some details and focus on the main differences.

We consider the triplet $(\Nneg,\Ntilposj,\Nother)$ defined in the same way as above, but now with $\Ntilposj$ being the number of {\em positive} tests containing a given {\em non-defective} $j$ and none of the defective items.  The associated probability is $\qtil_j = \rho \frac{\nu}{k}\big(1 - \frac{\nu}{k}\big)^{k} = \big( \frac{\rho \nu e^{-\nu}}{k} \big)(1+o(1))$, and the probability $\qother = 1 - \qneg - \qtil_j$ changes accordingly.

We again condition on $(\pdhat, g, \nneg)$, and consider the splitting the $\Ntilposj$ tests into two sub-classes with counts $(\Npposj, \Ntilposj - \Npposj)$.  The same argument as \eqref{eq:gamma_calc} gives $\gamma = 1 - o(1)$, and we obtain the following analog of \eqref{eq:Ntil1_dist_RZ}:
\begin{multline}
\hspace*{-1.5ex}(\Npposj \,|\, \nneg,\pdhat,g) \sim \Binomial\bigg(  n(1-\qneg)(1+o(1)), \\ \frac{\rho\nu e^{-\nu}}{k (1-\qneg)} (1+o(1)) \bigg), \label{eq:Ntil1_dist_nd_RZ}
\end{multline}
with the mean satisfying
\begin{equation}
\mutilnd_1 = \bigg(\frac{n}{k} \rho \nu e^{-\nu} \bigg) (1+o(1)). \label{eq:mutilnd_RZ}
\end{equation}
Defining $\epsiind = \frac{\beta - \rho}{\rho}$, we find that the threshold $\frac{\beta n \nu e^{-\nu}}{k}$ equals $\frac{n}{k} \rho \nu e^{-\nu}(1 + \epsiind)$, and hence, the probability of $j \notin S$ incorrectly being included in $\Shat$ satisfies
\begin{align}
&\PP[ j \in \Shat \,|\, \Ec_1^c, g, \npd ] \nonumber \\
&= \PP\bigg[ \Npposj \ge \frac{n}{k} \rho \nu e^{-\nu}(1 + \epsiind) \,\Big|\, \Ec_1^c, g, \npd \bigg] \\
&\le \exp\bigg( -\frac{n}{k} \rho \nu e^{-\nu} \cdot D_1(1+\epsiind) \cdot (1+o(1)) \bigg) \label{eq:step2nd_8_RZ} 
\end{align}
by \eqref{eq:Ntil1_dist_nd_RZ}--\eqref{eq:mutilnd_RZ} and the concentration bound in \eqref{eq:bino_ge}.

By the union bound, the probability of any $j \in \PDhat \setminus S$ incorrectly being included in $\Shat$ is at most $g \le p^{\xi}$ times the right-hand side of \eqref{eq:step2nd_8_RZ}.  By re-arranging, we deduce that $\peiind \to 0$  as $p \to \infty$ under the condition
\begin{equation}
n \ge \bigg(\frac{\xi}{\rho\nu e^{-\nu}} \cdot \frac{1}{D_1(\beta/\rho)} \cdot k \log p\bigg) (1+o(1)), \label{eq:n_final_2nd_RZ}
\end{equation}
where we have substituted $\epsiind = \frac{\beta - \rho}{\rho} = \frac{\beta}{\rho} - 1$.

{\em Wrapping up.} Combining the conditions in \eqref{eq:n_final_1nd_RZ}, \eqref{eq:n_final_2d_RZ}, and \eqref{eq:n_final_2nd_RZ}, we deduce Theorem \ref{thm:ndd_RZ}.

\subsection{Algorithm-Independent Converse}

We will prove a converse that holds for all algorithms under Bernoulli testing.  Since $S$ is assumed to be uniform over the subsets of cardinality $k$, the decoding rule that minimizes $\pe$ is maximum likelihood:
\begin{equation}
    \Shat_{\rm ML}(\yv,\Xv) = \argmax_{S \,:\, |S| = k} \, \PP[ \Yv = \yv | \Xv, S ].\label{eq:ml}
\end{equation}
Hence, it suffices to lower bound the number of tests required for vanishing error probability under this decoding rule.  We first define two families of random variables:
\begin{enumerate} 
\item Given a defective item $i \in S$, as in \cite{Ald14a}, we write $M_i$ for the number of tests containing defective item $i$ but no other defective. 
\item We consider the set $\TT$ of intruding possible defectives, defined as follows: $j \in \TT$ if item $j$ is non-defective and does not appear in any
negative tests.  Given an intruding possible defective $j \in \TT$, we write $N_j$ for the number of false positive tests containing $j$.
\end{enumerate}
Let $\Npnd$ denote the overall total number of false positive tests introduced by the reverse Z-channel.
The key idea is to look for a defective item $i \in S$ with $M_i \leq c$ (for some $c$ to be determined), and an intruding possible defective item $j \in \TT$ with $N_j > c$. If we can find such items, then the set $S \setminus \{ i \} \cup \{ j \}$ explains the outcomes better than $S$ (since it requires at most $c + (\Npnd-N_j) < \Npnd$ errors to be made by the reverse Z-channel, yielding a higher likelihood assuming $\rho < 1/2$). As a result, the optimal maximum likelihood (ML)  algorithm will make a mistake.

Suppose that $n = \gamma k \log p$, which equates to a rate of $\frac{1-\theta}{\gamma \log 2}$. We first argue that there exists a defective item that is not the unique one in too many tests.

\begin{lem} \label{lem:lonedef} 
    Fix $\gamma > 0$ and $\Phi < 1$, and let $c = \Phi \nu e^{-\nu} \gamma \log p$ and $n = \gamma k \log p$. If $\theta > D_1(\Phi) \nu e^{-\nu} \gamma$, then with probability approaching one there exists a defective item $i \in S$ such that $M_i \leq c$.
\end{lem}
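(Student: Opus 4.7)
The plan is to compute the marginal distribution of each $M_i$, lower-bound $\PP[M_i \leq c]$ via the sharpness bound \eqref{eq:bino_gesharp}, and then aggregate across $i \in S$ using negative association. A naive first-moment bound only yields that the expected number of defectives with $M_i \leq c$ diverges, not that such a defective exists with high probability; negative association is what will let us factorize $\PP[\forall i \in S: M_i > c] \leq \prod_{i \in S}\PP[M_i > c]$ and close the argument.

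Carrying this out: by symmetry of Bernoulli testing, for each $i\in S$ we have $M_i \sim \Binomial(n, q)$ with $q = \frac{\nu}{k}\big(1-\frac{\nu}{k}\big)^{k-1}$, so $nq = \nu e^{-\nu} \gamma \log p \cdot (1+o(1))$ and $c = \Phi \cdot nq \cdot (1+o(1))$. Applying \eqref{eq:bino_gesharp} with $\phi = \lfloor c\rfloor/(nq) = \Phi(1+o(1))$, and noting that the pre-exponential sharpness factor is bounded below by $p^{-\epsilon'}$ for any fixed $\epsilon' > 0$ (because $nq = \Theta(\log p)$ and $q \to 0$), yields
\begin{equation*}
\PP[M_i \leq c] \;\geq\; \PP[M_i = \lfloor c \rfloor] \;\geq\; p^{-D_1(\Phi)\,\nu e^{-\nu}\gamma \,-\, \epsilon'}.
\end{equation*}
For the aggregation, let $U_t^{(i)}$ denote the indicator that test $t$ contains item $i$ and no other defective item, so $M_i = \sum_{t=1}^n U_t^{(i)}$. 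For each fixed $t$, the vector $\big(U_t^{(i)}\big)_{i \in S}$ has at most one nonzero coordinate, which makes it a single multinomial trial and hence negatively associated; since NA is preserved under coordinate-wise sums of independent NA vectors, $(M_i)_{i \in S}$ is NA. Applying NA to the increasing events $\{M_i > c\}$ together with $1-x \leq e^{-x}$ then gives
\begin{equation*}
\PP\bigl[\forall i \in S: M_i > c\bigr] \;\leq\; \prod_{i \in S}\PP[M_i > c] \;\leq\; \exp\bigl(-k\,\PP[M_1 \leq c]\bigr).
\end{equation*}
Combining the two displays, $k\,\PP[M_1 \leq c] \geq p^{\theta - D_1(\Phi)\nu e^{-\nu}\gamma - \epsilon'}$, and the hypothesis $\theta > D_1(\Phi)\nu e^{-\nu}\gamma$ lets us pick $\epsilon'$ small enough that this exponent is positive, forcing $\PP[\exists i \in S: M_i \leq c] \to 1$.

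The main obstacle is the negative association step; the rest is routine binomial manipulation using the tools already assembled earlier in the paper. A route that avoids invoking NA directly is to condition on $L = \sum_{i \in S} M_i$, which is $\Binomial(n, kq)$ and sharply concentrated around $n\nu e^{-\nu}$: given $L$ the vector $(M_i)_{i \in S}$ is exactly multinomial (a classical NA instance), so one bounds $\PP[\forall i : M_i > c \mid L]$ by the product formula on the high-probability event that $L$ lies in its typical range, and the same asymptotics emerge.
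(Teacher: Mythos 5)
Your proof is correct and follows essentially the same route as the paper: compute the binomial marginal of each $M_i$, lower-bound $\PP[M_i\le c]$ by a polynomial in $p$ using the sharpness of the binomial concentration bound (your use of \eqref{eq:bino_gesharp} is exactly what the paper's sharpness-factor discussion is there to justify), and then apply negative association of the multinomial vector $(M_i)_{i\in S}$ to get $\PP[\min_i M_i > c]\le (1-\PP[M_1\le c])^k$ and conclude via $\theta>D_1(\Phi)\nu e^{-\nu}\gamma$. The only cosmetic difference is that the paper cites NA for multinomials directly from \cite[Property P3]{joag-dev} rather than re-deriving it from the per-test decomposition as you do, and your write-up is if anything cleaner than the paper's (whose display \eqref{eq:binombd} contains a transcription slip in the direction of the inequalities, though the intended argument is the one you give).
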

\begin{proof}
As in \cite{Ald14a}, the $M_i$ (together with other random variables corresponding to there being no defectives and multiple defectives in a test) are jointly multinomially distributed. Writing $r = \nu (1- \nu/k)^{k-1}  \leq \nu e^{-\nu} $,
the marginal distribution of each 
$M_i$ is  $\Binomial(n, r/k)$, so $\ep[M_i] =r  \gamma \log p$.  In addition, we have
\begin{align} \label{eq:binombd}  
\pr\left[ M_i \leq c \right]  &= \pr \left[ M_i \geq \Phi \nu e^{-\nu} \gamma \log p \right] \nonumber \\
    & \leq 
\pr \left[ M_i \geq \Phi r \gamma \log p \right]. 
\end{align}
  Using the concentration bound \eqref{eq:bino_le}
(with $N = n = \gamma k \log p$, $q =r/k$, and $\epsilon = 1 - \Phi$,  also implying $N q = r  \gamma  \log p$), we deduce that
\begin{equation} \label{eq:binombd2}  
    \pr\left[ M_i \leq c \right] \leq p^{- D_1(\Phi) r \gamma}. 
\end{equation}
Next, we use the key fact (see \cite[Section 3.1]{joag-dev})  that multinomial random variables satisfy the so-called negative association property, which implies for any $c$ that \cite[Property P3]{joag-dev}
\begin{align} \label{eq:multi-NA}
\pr \left[ \min_{i \in S} M_i \geq c+1 \right] 
    &= \pr \left[ \bigcap_{i \in S} \{ M_i \geq c+1 \} \right] \nonumber \\
    & \leq \prod_{i \in S} \pr[M_i \geq c+1] \nonumber \\    
    &= \left( 1 - \pr[M_i \leq c] \right)^k.
\end{align}
Using \eqref{eq:binombd} and \eqref{eq:multi-NA}, we deduce that
\begin{align}
    \pr \left[ \min_{i \in S} M_i \leq c \right]  & =  1-  \pr \left[ \min_{i \in S} M_i \geq c+1 \right] \\
    &\geq 1 - \left( 1 - \pr[M_i \leq c] \right)^k  \\
    & \geq 1 - \exp( - k \pr[M_i \leq c])  \\
    & = 1 - \exp( - p^{\theta - D_1(\Phi) r \gamma}). \label{eq:tobd1}
\end{align}
Recalling that $r =  \nu (1- \nu/k)^{k-1} = \big(\nu e^{-\nu}\big)(1+o(1))$, we deduce that if $\theta > D_1(\Phi) \nu e^{-\nu} \gamma$, then $\pr \left[ \min_{i \in S} M_i \leq c \right] \rightarrow 1$.
\end{proof}

Similarly, we can find an intruding possible defective that appears in sufficiently many false positive tests.

\begin{lem} \label{lem:intruding} 
Fix $\gamma > 0$ and $\Phi > \rho$, and let $c = \Phi \nu e^{-\nu} \gamma \log p$ and $n = \gamma k \log p$. If  $\gamma \nu e^{-\nu} (D_\rho(\Phi)+ 1 - \rho) < 1$, then with probability tending to 1 there exists a non-defective item $j$ such that $N_j > c$. 
\end{lem}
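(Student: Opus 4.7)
The plan is to mirror Lemma \ref{lem:lonedef}, but with an extra conditioning step that reveals an independence structure across non-defective items. First, decompose the tests into $B_-$, the set of all negative tests (equivalently, tests containing no defective whose noise variable left the outcome at $0$), and $B_+$, the set of tests containing no defective whose noise variable flipped the outcome to $1$. By the binomial concentration bounds already established in \eqref{eq:n0_conc_RZ}--\eqref{eq:n1nd_conc_RZ}, one has $|B_-| = n(1-\rho)e^{-\nu}(1+o(1))$ and $|B_+| = n\rho e^{-\nu}(1+o(1))$ with probability $1-o(1)$, and I condition on this high-probability event throughout.

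For a fixed non-defective item $j$, the event $j \in \TT$ is precisely that $j$ appears in no test of $B_-$, and $N_j$ is exactly the number of tests in $B_+$ containing $j$. Since the column of $\Xv$ corresponding to $j$ consists of $n$ independent $\Bernoulli(\nu/k)$ entries, independent of everything used to define $B_\pm$,
\begin{equation*}
\pr(j \in \TT) = (1 - \nu/k)^{|B_-|} \geq p^{-\gamma\nu e^{-\nu}(1-\rho)(1+o(1))},
\end{equation*}
and, conditional on $j \in \TT$, $N_j \sim \Binomial(|B_+|, \nu/k)$ with mean $\gamma\nu e^{-\nu}\rho \log p \cdot (1+o(1))$. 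Applying the sharp lower-tail bound \eqref{eq:bino_gesharp} with $\phi = \Phi/\rho > 1$ (valid since $\Phi > \rho$), and using $\rho D_1(\Phi/\rho) = D_\rho(\Phi)$ from \eqref{eq:handy}, yields $\pr(N_j > c \mid j \in \TT) \geq p^{-\gamma\nu e^{-\nu} D_\rho(\Phi)(1+o(1))}$, the sharpness factor contributing only a $p^{-o(1)}$ term in this scaling.

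The crucial observation is that, conditional on $(B_-, B_+)$, the event $\{j \in \TT\} \cap \{N_j > c\}$ depends only on the column of $\Xv$ corresponding to item $j$, and these columns are mutually independent across non-defective items. Therefore, these events are genuinely independent across $j$, and combining the two bounds above gives
\begin{equation*}
\pr(\text{no such non-defective } j) \leq \left(1 - p^{-\gamma\nu e^{-\nu}(D_\rho(\Phi) + 1 - \rho)(1+o(1))}\right)^{p-k} \leq \exp\bigl(-(p-k)\, p^{-\gamma\nu e^{-\nu}(D_\rho(\Phi) + 1 - \rho)(1+o(1))}\bigr),
\end{equation*}
which tends to $0$ exactly when $\gamma\nu e^{-\nu}(D_\rho(\Phi) + 1 - \rho) < 1$, matching the hypothesis. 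Removing the initial conditioning on the concentration of $|B_\pm|$ contributes only $o(1)$.

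The main obstacle is setting up the conditional independence cleanly: it works because $B_\pm$ is determined entirely by the noise variables and the columns of $\Xv$ indexed by $S$, while the columns for distinct non-defectives are mutually independent and independent of this data. A secondary issue is the sharpness factor in \eqref{eq:bino_gesharp}, but, as noted immediately after that bound, it decays only as $p^{-o(1)}$ in the regime $k = \Theta(p^\theta)$ and so does not perturb the leading-order exponent. Unlike Lemma \ref{lem:lonedef}, no appeal to negative association is needed here, since the conditioning on $(B_-, B_+)$ produces honest conditional independence.
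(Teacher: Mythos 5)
Your proof is correct and follows essentially the same route as the paper's: condition on the concentration of the negative and false-positive test counts, and then exploit the fact that, given that partition, the columns of $\Xv$ for non-defective items are i.i.d.\ and independent of the partition. The only real difference is organizational — you analyze the single joint event $\{j \in \TT\} \cap \{N_j > c\}$ per item, whereas the paper first lower-bounds $|\TT|$ and then examines $N_j$ on $\TT$ via a ``similar argument to Lemma~\ref{lem:lonedef}'' — and your remark that no appeal to negative association is needed is accurate, since genuine conditional independence is what both versions are actually using.
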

\begin{proof}
The probability of a particular test containing no defectives is $(1- \nu/k)^k$; hence, the probability that a test is negative is $(1-\rho)
(1- \nu/k)^k  $ and the probability that a test is a false positive is $\rho (1-\nu/k)^k$.  The total number of such tests (which we
refer to as $\Nneg$ and $\Npnd$ respectively)  both have a  binomial marginal distribution, due to the independence among tests. Hence,
since $(1-\nu/k)^k \rightarrow e^{-\nu}$ from below,
 using concentration results of the form \eqref{eq:bino_le} and \eqref{eq:bino_ge}, for arbitrarily small $\epsilon' > 0$ we may assume that 
$\Nneg \leq \ep[\Nneg](1+\epsilon') \leq n(1-\rho) e^{-\nu} (1+\epsilon')$ and $\Npnd \geq \ep[\Npnd](1-\epsilon'/2) \geq n \rho e^{-\nu} (1-\epsilon')$ for $p$ sufficiently large. 

Conditioned on any such $\Nneg = \nneg$, the probability of a given $j \in S^c$ belonging to $\TT$ (i.e., being an intruding non-defective) is
\begin{align}
    &(1- \nu/k)^{\nneg} \nonumber \\
    &\geq  (1- \nu/k)^{n e^{-\nu} (1-\rho)(1+\epsilon')} \\
        &= \exp \left( - \nu e^{-\nu} (1-\rho)(1+\epsilon') \gamma \log p (1-o(1)) \right),
\end{align}
where we recall that $n = \gamma k \log p$.
Hence, again applying binomial concentration (with $p-k$ trials), we have with probability tending to one that the expected number of intruding possible defectives satisfies $| \TT | \geq  p^{\tau}/2$, where
\begin{equation}    
    \tau = 1 - \nu e^{-\nu} (1-\rho) (1+ 2\epsilon') \gamma. \label{eq:tau}
\end{equation}
We seek to find a possible defective item lying in at least $c$ of the $\Npnd$ false positive tests. Since (conditioned on $\Npnd = \npnd$) it holds that
 $N_j \sim \Binomial(\npnd, \nu/k)$, and since  by assumption $\npnd 
\geq n \rho e^{-\nu} (1-\epsilon')$, we know that $N_j$ is stochastically dominated by $N^{\dagger}_j$, defined to be $\Binomial(n \rho e^{-\nu} (1-\epsilon'), \nu/k)$.
Since $ \rho < \Phi$,  by a similar argument to the proof of Lemma \ref{lem:lonedef}, we have
\begin{align}
    \pr \left[ \max_{j \in \TT} N_j > c \right] 
        &\geq  \pr \left[ \max_{j \in \TT} N^{\dagger}_j > c \right] \nonumber \\
        & \geq 1 - \exp \left( -  \frac{1}{2} p^{\tau} \pr[ N^{\dagger}_j > c] \right). \label{eq:tobd2}
\end{align}
Taking $N = n e^{-\nu} \rho$ and $q = \nu/k$ and $\epsilon = \Phi/\rho - 1$, we know from the discussion following \eqref{eq:bino_gesharp} that \eqref{eq:bino_ge} has a matching lower bound up to a $p^{-\epsilon'}$ pre-factor.  Combining this observation with 
\eqref{eq:handy} gives
\begin{align} 
&p^{\tau} \pr[ N^{\dagger}_j > c] \nonumber \\ 
    & \geq p^{\tau - \epsilon'} \exp \left( - \gamma \nu e^{-\nu}  \rho \log p D_1( \Phi/\rho)  \right) \label{eq:ptau1} \\
    & = \exp \left( \log p \left( \tau - \epsilon' - \gamma e^{-\nu} \nu D_\rho(\Phi) \right) \right) \label{eq:ptau2} \\
    &=
    \exp \big( \log p \big( 1 -\epsilon' -\gamma \nu e^{-\nu}(D_\rho(\Phi) \nonumber \\
        &\hspace*{3cm} + (1 - \rho)(1+2 \epsilon') \big) \big), \label{eq:ptau3}
\end{align}
where \eqref{eq:ptau2} uses \eqref{eq:handy}, and \eqref{eq:ptau3} substitutes the definition of $\tau$ in \eqref{eq:tau}.  Hence, if  $\gamma \nu e^{-\nu}(D_\rho(\Phi)+ 1 - \rho) < 1$, then we can choose $\epsilon'$ sufficiently small such that in \eqref{eq:tobd2}, the probability $ \pr \left[ \max_{j \in \TT} N_j > c \right] $ tends to one.
\end{proof}

We now put Lemmas \ref{lem:lonedef} and \ref{lem:intruding} together.  Suppose that there 
exists $\gamma > 0$ and $\Phi \in ( \rho, 1)$ such that the rate $R= \frac{1-\theta}{\gamma \log 2}$ satisfies
\begin{multline} \label{eq:tohold2}
\frac{1-\theta}{\gamma \log 2}  \geq  
\max\bigg\{
\left( \frac{ (1-\theta) \nu e^{-\nu}}{\log 2} \right) \frac{ D_1(\Phi)}{\theta}, \\
\left( \frac{ (1-\theta) \nu e^{-\nu}}{\log 2} \right) (D_\rho(\Phi)+ 1 - \rho)
\bigg\}.
\end{multline}
Then, with probability approaching one, by Lemma \ref{lem:lonedef} there exists a defective item
$i$ in fewer than $c$ tests with no other defective, and by Lemma \ref{lem:intruding} a non-defective item $j$ which appears in more than $c$ false positive
tests. In this case, the set $S \setminus \{ i \} \cup \{ j \}$ is preferred by the ML decoder to the true defective set $S$, and an error occurs.

If we perform too few tests, then $\gamma$ will be small, and so $\frac{1}{\gamma}$ will be large, meaning \eqref{eq:tohold2} eventually holds. We would like to find the largest value for $\gamma$ that allows such a $\Phi$ to exist.  Similarly to the achievability proof, this final step amounts to somewhat tedious algebra, so the details are deferred to Appendix \ref{app:manip_conv_RZ}.

\section{Proof of Theorem \ref{thm:Zrate} (Z-Channel)} \label{sec:Z_proofs}

While the achievability and converse proofs below use similar ideas to those of the previous section, the details differ enough that we consider it worthwhile to include most steps.

\subsection{Achievability via Noisy DD}

Recall the Z-channel of Definition \ref{def:chanmod}. We first describe a noisy DD algorithm for this model. \medskip

\noindent\fbox{
    \parbox{0.95\columnwidth}{
        \textbf{Noisy DD algorithm for Z-channel noise:}
        \begin{enumerate}
            \item For each $j \in [p]$, let $\Nnegj$ be the number of negative tests in which item $j$ is included.  In the first step, we fix a constant $\alpha \in (\rho,1)$ and construct the following set of items that are believed to be non-defective:
            \begin{equation}
            \NDhat = \bigg\{ j \in [p] \,:\, \Nnegj \ge \frac{\alpha n \nu}{k} \bigg\}.  \label{eq:NDhat_Z}
            \end{equation}
            The remaining items, $\PDhat = [p] \setminus \NDhat$, are believed to be ``possibly defective''.
            \item For each $j \in \PDhat$, let $\Npposj$ be the number of positive tests that include item $j$ and no other item from $\PDhat$.  In the second step, we estimate the defective set as follows:
            \begin{equation}
            \Shat = \bigg\{ j \in \PDhat \,:\, \Npposj > 0 \bigg\}.  \label{eq:Shat_Z}
            \end{equation}
        \end{enumerate}
    }
} \medskip

\smallskip
Since positive tests must contain a defective item under the Z-channel model, we deduce that as long as the first step is correct (in the sense that $S \subseteq \PDhat$),  the second step will never add a defective item to $\Shat$.

\begin{thm} \label{thm:ndd_Z}
    Consider the Z-channel noisy group testing setup with parameter $\rho \in (0,1)$, number of defectives $k \asymp p^{\theta}$ (where $\theta \in (0,1)$), and i.i.d.~Bernoulli testing with parameter $\nu > 0$.  For any $\alpha \in (\rho,1)$, the noisy DD algorithm achieves $\pe \to 0$  as long as 
    \begin{equation}
    n \ge \max\big\{ \nid, \nind, \niid \big\} (1+\eta) \label{eq:ndd_bound_Z}
    \end{equation}
    for arbitrarily small $\eta > 0$, where defining $\zeta = e^{-\nu} +  \rho(1- e^{-\nu})$, we have
    \begin{align}
    \nid &= \frac{1}{\nu \rho D_1(\alpha/\rho)} 
 \cdot k \log k, \label{eq:nid_Z} \\
    \nind &= \frac{1}{\nu \zeta D_1(\alpha/\zeta)}
\cdot k \log \frac{p}{k},  \label{eq:nind_Z} \\
    \niid &= \frac{1}{(1-\rho)\nu e^{-\nu}} \cdot  k \log k.  \label{eq:niid_Z}
    \end{align}
\end{thm}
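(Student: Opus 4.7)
The plan is to mirror the RZ proof (Theorem~\ref{thm:ndd_RZ}) while carefully accounting for the asymmetry of the Z-channel. I would first collect the basic concentration facts: since the probability that a single test is negative is $\zeta(1+o(1))$ with $\zeta = e^{-\nu} + \rho(1-e^{-\nu})$ (covering either the case of no defective in the pool, or a defective being flipped to $0$ by the noise), binomial concentration via \eqref{eq:bino_le}--\eqref{eq:bino_ge} gives high-probability bounds on $\Nneg$ and $\Npos$. For Step~1 applied to a defective $j \in S$, conditioning on $X_{ij}=1$ forces the noiseless outcome to be $1$, so the test is negative exactly when the Z-channel flips; hence $\Nneg(j)$ is distributed as $\Binomial(n,\rho\nu/k\,(1+o(1)))$. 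Applying \eqref{eq:bino_ge} with $\alpha/\rho>1$ and a union bound over the $k$ defectives yields $\PP[\exists j \in S : \Nneg(j) \ge \alpha n \nu/k] \to 0$ whenever $n \ge \nid(1+\eta)$, reproducing \eqref{eq:nid_Z}.

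For Step~1 applied to a non-defective $j \notin S$, conditioning on $X_{ij}=1$ leaves the defective columns untouched, and the test is negative with asymptotic probability $\zeta$; hence $\Nneg(j) \sim \Binomial(n,\zeta\nu/k\,(1+o(1)))$. Using \eqref{eq:bino_le} in the useful range $\alpha < \zeta$ gives $\PP[j \in \PDhat] \le \exp(-\frac{n\zeta\nu}{k} D_1(\alpha/\zeta)(1+o(1)))$. Rather than take a union bound over all $p-k$ non-defectives (which would produce a $\log p$ numerator), I would bound $G := |\PDhat \setminus S|$ via Markov's inequality, exactly as in the derivation of $\peind$ in the RZ proof: for any $\xi \in (0,\theta)$, $\PP[G \ge p^{\xi}] \le p^{1-\xi}\exp(-\frac{n\zeta\nu}{k} D_1(\alpha/\zeta)(1+o(1)))$, which vanishes whenever $n \ge \frac{(1-\xi)k\log p}{\nu\zeta D_1(\alpha/\zeta)}(1+o(1))$. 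Choosing $\xi$ arbitrarily close to $\theta$ replaces $(1-\xi)\log p$ by $\log(p/k)(1+\eta)$, matching $\nind$ in \eqref{eq:nind_Z}.

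For Step~2, I would condition on the high-probability event $\Ec_1$ that Step~1 succeeds, so $S \subseteq \PDhat$ and $|\PDhat|=k+G\le k(1+o(1))$. Under this conditioning, no non-defective can be added to $\Shat$: a positive test under the Z-channel must contain a defective, that defective lies in $\PDhat$, so no non-defective is ever alone among $\PDhat$ in a positive test. Thus it suffices to show $\Ntilpos(j)>0$ for every $j \in S$. For a defective $j$ and test $i$, the probability that the test is positive (not flipped, probability $1-\rho$) and contains $j$ alone among $\PDhat$ equals $\frac{\nu}{k}(1-\rho)(1-\nu/k)^{|\PDhat|-1} = \frac{(1-\rho)\nu e^{-\nu}}{k}(1+o(1))$, where the asymptotic simplification uses $G=o(k)$. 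Hence $\PP[\Ntilpos(j)=0 \mid \Ec_1] \le \exp(-\frac{n(1-\rho)\nu e^{-\nu}}{k}(1+o(1)))$, and a union bound over the $k$ defectives produces the condition $n \ge \niid(1+\eta)$.

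The main obstacle, as for the RZ case, is to properly justify the conditional independence used in Step~2, following the template of \eqref{eq:step2d_1_RZ}--\eqref{eq:Ntil1_dist_RZ}. One must verify that conditional on $\Ec_1$ and on a specific value of $G$, the rows of $\Xv$ entering the count $\Ntilpos(j)$ still have the required independence, which holds because $\Ec_1$ and $G$ are determined by columns other than $j$ together with the noise variables; the dependence on $G$ enters only through $|\PDhat|-1=k-1+G$ in the exponent of $(1-\nu/k)$, and is absorbed into the $1+o(1)$ factor since $G\le p^{\xi}=o(k)$. Combining the three conditions on $n$ then yields \eqref{eq:ndd_bound_Z}.
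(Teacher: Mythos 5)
Your proof follows the paper's own argument for Theorem~\ref{thm:ndd_Z} essentially step for step: the same decomposition into a union bound for defectives and a Markov bound on $G$ for non-defectives in Step~1, the same observation that under $Z$-noise a positive test must contain a defective and hence Step~2 can never add a non-defective once $S\subseteq\PDhat$, and the same binomial-concentration calculations leading to $\nid$, $\nind$, $\niid$. The one cosmetic difference is that in Step~2 you compute the joint per-test probability of ``test $i$ positive and $j$ alone among $\PDhat$'' directly, whereas the paper conditions on $\Npos=\npos$ and evaluates the conditional probability of aloneness via the chain \eqref{eq:step2d_1_Z}--\eqref{eq:step2d_4_Z}; both routes yield $\frac{(1-\rho)\nu e^{-\nu}}{k}(1+o(1))$ and the same condition $\niid$.

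There is, however, a concrete error in the independence discussion you flag as ``the main obstacle.'' You assert that ``$\Ec_1$ and $G$ are determined by columns other than $j$ together with the noise variables.'' That is not true: the event $\Ec_1^c$ includes the constraint $j\in\PDhat$, i.e.\ $\Nneg(j)=\sum_i X_{ij}Z^{(i)}<\alpha n\nu/k$, which directly involves column $j$; and $G$ is built from counts $\Nneg(j'')=\sum_{i:Y_i=0}X_{ij''}$, whose indexing set $\{i:Y_i=0\}$ is a function of \emph{all} defective columns, $j$ included. Moreover, even if column-$j$ independence held, your per-test probability also requires the other $\PDhat$ columns' row-$i$ entries to remain Bernoulli after the conditioning, which the ``columns other than $j$'' claim would not deliver. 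The decoupling that actually makes the calculation go through is by \emph{rows}, not columns: under the $Z$-channel the conditioning events are functions of the entries of $\Xv$ in \emph{negative} test rows (together with the noise, since $Z^{(i)}=1$ forces $Y_i=0$), while $\Ntilpos(j)$ only counts events in \emph{positive} rows. This is precisely the separation the paper invokes after \eqref{eq:Ntil1_dist_Z} when it conditions on $\npos$ and argues that the positive rows remain conditionally independent given $(\Ec_1^c,g,\npos)$. Replacing your column-based justification with this row-based one recovers a sound argument; as written, that step does not hold up.
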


Similarly to the previous section, the remaining details in proving the achievability part of Theorem \ref{thm:Zrate} using Theorem \ref{thm:ndd_Z} amount to tedious algebra.  We therefore defer these to Appendix \ref{app:manip_ach_Z}, and focus on proving Theorem \ref{thm:ndd_Z}.

Let $\Nneg$ and $\Npos$ respectively denote the number of negative and positive tests.  Both of these quantities follow a binomial distribution; the probability of a given test $i$ being positive is given by
\begin{align}
\PP[ Y_i = 1 ] 
&= (1-\rho)\bigg(1 - \Big( 1 - \frac{\nu}{k} \Big)^k \bigg) \\
&= \Big( (1-\rho) (1-e^{-\nu}) \Big) (1+o(1)),
\end{align}
since $\big( 1 - \frac{\nu}{k} \big)^k \to e^{-\nu}$ as $k \to \infty$, and we similarly have
\begin{align}
\PP[ Y_i = 0 ] = \Big( e^{-\nu}  + \rho (1-e^{-\nu}) \Big) (1+o(1)). \label{eq:n0_p_Z}
\end{align}
Hence, and using \eqref{eq:bino_le}--\eqref{eq:bino_ge}, we have with probability approaching one that
\begin{align}
\Nneg &= n\Big( e^{-\nu} + \rho(1-e^{-\nu}) \Big) (1+o(1)), \label{eq:n0_conc_Z} \\
\Npos &= n\Big( (1-\rho) (1-e^{-\nu}) \Big) (1+o(1)). \label{eq:n1_conc_Z}
\end{align}
We again henceforth condition on the defective set $S$ taking a fixed value, say $S = \{1,\dotsc,k\}$.  

{\bf Analysis of First Step.}
For the first step, we seek to ensure that, with probability approaching one, $\PDhat = [p] \setminus \NDhat$ contains all of the defective items, and at most $p^\xi$ non-defectives, for some $\xi \in (0,\theta)$ (this will later be taken arbitrarily close to $\theta$).   We denote the complements of these events by $\peid$ and $\peind$ respectively.

{\em Analysis of defective items.} For any defective item $j \in S$, the number of negative tests in which $j$ is included is distributed as $\Nnegj \sim \Binomial\big(n,\frac{\rho \nu}{k}\big)$. 
Moreover,
\begin{align}
\peid &= \PP\bigg[ \bigcup_{j \in S} \{ j \in \NDhat \} \bigg] \label{eq:pe1d_1_Z}\\
&\le k \PP\bigg[ \Nnegj \ge \frac{\alpha n \nu}{k} \bigg] \label{eq:pe1d_2_Z} \\
&=  k \PP\bigg[ \Nnegj \ge \bigg(1 + \frac{\alpha - \rho}{\rho} \bigg) \cdot \EE[\Nnegj]  \bigg],  \label{eq:pe1d_3_Z}
\end{align}
where \eqref{eq:pe1d_2_Z} holds for an arbitrary fixed $j \in S$ by the union bound and the definition of $\NDhat$ ({\em cf.}, \eqref{eq:NDhat_Z}), and \eqref{eq:pe1d_3_Z} follows since $\EE[\Nnegj] = \frac{n\rho\nu}{k}$ and $1 + \frac{\alpha - \rho}{\rho} = \frac{\alpha}{\rho}$.  Applying the binomial concentration bound \eqref{eq:bino_ge} to \eqref{eq:pe1d_3_Z}, we deduce that
\begin{equation}
\peid \le k \exp\Big( -\frac{n\rho\nu}{k} \cdot D_1(1+\epsid) \Big),
\end{equation}
where $\epsid = \frac{\alpha - \rho}{\rho}$.  As a result, we can achieve $\peid \to 0$ with a number of tests satisfying
\begin{equation}
n \ge \bigg(\frac{1}{\rho\nu} \cdot \frac{1}{ D_1(\alpha/\rho)} \cdot k \log k\bigg) (1+o(1)), \label{eq:n_final_1d_Z}
\end{equation}
since $1 + \epsid = \frac{\alpha}{\rho}$.

{\em Analysis of non-defective items.} Let $\peind(\nneg)$ be defined similarly to $\peind$, but conditioned on $\Nneg$ taking a given value $\nneg$.  It suffices to establish that  $\peind(\nneg) \to 0$ for all $\nneg$ satisfying \eqref{eq:n0_conc_Z}. 

Since the test outcomes depend only on the defective items, one can envision the non-defective items as being placed in each test with probability $\frac{\nu}{k}$ {\em after} the test outcomes have been produced.  As a result, given $\Nneg = \nneg$, the number of negative tests $\Nnegj$ including a given item $j \notin S$ is distributed as $(\Nnegj | \nneg) \sim \Binomial\big(\nneg,\frac{\nu}{k}\big)$.  Hence, for any $j \notin S$, we have
\begin{align}
&\PP[j \notin \NDhat \,|\, \nneg] \nonumber \\
&\qquad= \PP\bigg[ \Nnegj < \frac{\alpha n \nu}{k} \,\Big|\, \nneg \bigg] \\
&\qquad\le \PP\bigg[ \Nnegj \le (1-\epsind) \nonumber \\
    &\qquad\qquad\times \EE[\Nnegj | \nneg] (1+o(1)) \,\Big|\, \nneg \bigg], \label{eq:pe1nd_2_Z}
\end{align}
where $\epsind$ is defined in such a way that $\frac{\alpha n \nu}{k} \le (1-\epsind) \frac{\nneg \nu}{k} (1+o(1))$ for all $\nneg$ satisfying \eqref{eq:n0_conc_Z}.  By some simple re-arrangements, we find that we can choose $\epsind = 1 - \frac{\alpha}{ e^{-\nu} + \rho(1-e^{-\nu}) }$.  Applying the concentration bound \eqref{eq:bino_le} to \eqref{eq:pe1nd_2_Z}, we find that
\begin{multline}
\PP[j \notin \NDhat \,|\, \nneg] \\ \le \exp\bigg( \frac{\nneg \nu}{k} \cdot D_1(1-\epsind) \cdot (1+o(1))  \bigg). \label{eq:pe1nd_4_Z}
\end{multline}
As a result, letting $G$ denote the number of non-defective items in $\PDhat = [p] \setminus \NDhat$, we find that $\EE[G \,|\,\nneg]$ is upper bounded by $p-k$ times the right-hand side of \eqref{eq:pe1nd_4_Z}.  Applying Markov's inequality, we obtain
\begin{multline}
\PP[G \ge p^\xi \,|\,\nneg] \\ \le p^{1-\xi} \exp\bigg( \frac{\nneg \nu}{k} \cdot D_1(1-\epsind) \cdot (1+o(1)) \bigg).
\end{multline}
Since $\nneg$ satisfies \eqref{eq:n0_conc_Z}, we find that we can achieve $\peind \to 0$ under the condition
\begin{multline}
\hspace*{-1.5ex} n \ge \bigg(\frac{1-\xi}{\nu} \cdot \frac{1}{ e^{-\nu} + \rho(1-e^{-\nu})}\cdot \frac{1}{D_1(\alpha/\zeta)} \cdot k \log p \bigg) \\ \times (1+o(1)), \label{eq:n_final_1nd_Z}
\end{multline}
since by definition $\epsind = 1 - \frac{\alpha}{ \zeta }$.

{\bf Analysis of Second Step.}
We follow a similar argument to the one following Lemma \ref{lem:multi_cond}; due to these similarities, we omit some details and focus on the differences.  Note that here we only need to focus on defective items: As long as $\PDhat$ contains all defectives, the the second step will never include a non-defective item in $\Shat$ ({\em cf.}, \eqref{eq:Shat_Z}), as positive tests must always contain a defective item.

Fix a defective $j \in S$ and consider the triplet $(\Nneg,\Ntilposj,\Nother)$, defined similarly to the RZ-channel case:
\begin{itemize}
    \item $\Nneg$ is the number of negative tests; for the Z-channel, the associated probability is $\qneg = \big(e^{-\nu} + \rho(1-e^{-\nu})\big)(1+o(1))$ (see \eqref{eq:n0_p_Z}).
    \item $\Ntilposj$ is the number of {\em positive} tests containing $j$ but no other defective item; the probability of a given test satisfying this condition is $\qtil_j = (1-\rho)\frac{\nu}{k}\big(1 - \frac{\nu}{k}\big)^{k-1} = \big( \frac{(1-\rho)\nu e^{-\nu}}{k} \big)(1+o(1))$.
    \item $\Nother$ is the number of remaining tests, with associated probability $\qother = 1 - \qneg - \qtil_j$.
\end{itemize}
Hence, $(\Nneg,\Ntilposj,\Nother)$ has a multinomial distribution with $n$ trials and parameters $(\qneg,\qtil_j,\qother)$.

We condition on $(\pdhat, g, \nneg)$ with $\nneg$ satisfying the concentration bound \eqref{eq:n0_conc_Z}, $\pdhat$ containing all of the defective items, and $g \le p^{\xi}$, in accordance with the first step.  Similarly to the RZ-channel case, we consider splitting the $\Ntilposj$ tests into sub-classes with counts $(\Npposj, \Ntilposj - \Npposj)$.  Similarly to \eqref{eq:gamma_calc}, the associated conditional probability of the first sub-class is $\gamma = 1-o(1)$.

Applying Lemma \ref{lem:multi_cond}, we deduce that given  $(\pdhat, g, \nneg)$, the triplet $(\Npposj, \Ntilposj - \Npposj, \Nother)$ is multinomial with $n - \nneg$ trials, with the first probability being 
\begin{align}
    \qpposj 
        &= \frac{\qtil_j \gamma}{1 - \qneg} \\
        &= \bigg( \frac{(1-\rho)\nu e^{-\nu}}{k (1 - \qneg)} \bigg)(1+o(1)).
\end{align}
This gives the following analog of \eqref{eq:Ntil1_dist_RZ}:
\begin{multline}
    \hspace*{-1.5ex} (\Npposj \,|\, \nneg,\pdhat,g) \sim \Binomial\bigg(  n(1-\qneg)(1+o(1)), \\ \frac{(1-\rho)\nu e^{-\nu}}{k (1 - \qneg)} (1+o(1)) \bigg). \label{eq:Ntil1_dist_Z}
\end{multline}
Recall from \eqref{eq:Shat_Z} that a given item $j$ is included in the final estimate $\Shat$ if $\Npposj > 0$.  Consequently, we deduce from \eqref{eq:Ntil1_dist_Z} that any defective item $j \in S$ yields
\begin{align}
&\PP[j \notin \Shat \,|\, \nneg,\pdhat,g] \nonumber \\
&= \bigg(1 - \frac{(1-\rho)\nu e^{-\nu}}{k (1 - \qneg)} (1+o(1))\bigg)^{n(1-\qneg)(1+o(1))} \\
&\le \exp\bigg( -\frac{n}{k} (1-\rho) \nu e^{-\nu} \cdot (1+o(1)) \bigg), \label{eq:step2d_8_Z} 
\end{align}
where we have applied $1 - \zeta \le e^{-\zeta}$. 
By the union bound, the probability of there existing some $j \in S$ failing to be included in $\Shat$ is at most $k$ times the right-hand side of \eqref{eq:step2d_8_Z}.  By re-arranging, we deduce that $\peiid \to 0$ as $p \to \infty$ under the condition
\begin{equation}
n \ge \bigg(\frac{1}{(1-\rho)\nu e^{-\nu}} \cdot k \log k\bigg) (1+o(1)). \label{eq:n_final_2d_Z}
\end{equation}

{\em Wrapping up.} Combining the conditions in \eqref{eq:n_final_1d_Z}, \eqref{eq:n_final_1nd_Z}, and \eqref{eq:n_final_2d_Z}, and noting that $\xi$ can be arbitrarily close to $\theta$ in \eqref{eq:n_final_1nd_Z} (yielding the logarithmic term $(1-\theta)\log p = \big(\log \frac{p}{k}\big)(1+o(1))$), we deduce Theorem \ref{thm:ndd}. 

\subsection{Algorithm-Independent Converse}

The proof of the converse part of Theorem \ref{thm:Zrate} proceeds along similar lines to the corresponding argument for Theorem \ref{thm:RZrate}.  We consider the optimal ML decoder of the form \eqref{eq:ml}, identify the suitable error event, analyze the error probability using concentration inequalities, and studying the intersection of relevant $D_\gamma$ functions.

Let $\Ntilposi$ denote the number of positive tests containing $i \in S$ and no other defectives, and $\Ntilnegi$ denote the number of negative tests containing $i \in S$ and no other defectives. The probability of a test containing no defectives is $(1- \nu/k)^k$, so we write
$\Nnnd \sim \Binomial(n, (1-\nu/k)^k)$ for the number of tests that contain no defectives (rightfully negative). By binomial concentration, we can assume that $\Nnnd = n e^{-\nu}(1 + o(1))$, as this occurs with probability approaching one.

The argument used to prove a converse is the following:
\begin{enumerate}
\item We look for a defective item $i \in S$ with $\Ntilposi = 0$ and $\Ntilnegi \geq d = \Psi \nu e^{-\nu} \gamma \log p$ (for some $\Psi$ to be determined). We call such a defective item ``strongly masked''.
\item Given a strongly masked item $i$, we look for an non-defective item  that  appears in fewer
than $d$  of the $\Nnnd + \Ntilnegi$ of the tests that would be rightfully negative if $i$ were removed
from the defective set. We call such an item ``weakly intruding''.
\end{enumerate}
If there exists such a strongly masked item $i \in S$ and a  weakly intruding item $j \notin S$, then since $d \leq \Ntilnegi$ the ML decoder will prefer the set $S \setminus \{i \} \cup \{ j \}$ to the true  defective set (assuming $\rho < 1/2$), and hence will make a mistake.  We first argue that there exists a strongly masked defective item.

\begin{lem} \label{lem:tamedefective}
Fix $\gamma > 0$, $\rho \leq \Psi < 1$, and $d = \Psi \nu e^{-\nu} \gamma \log p = \Psi \nu e^{-\nu} n/k$ with $n = \gamma k \log p$. If 
$ \theta > e^{-\nu} \nu \gamma (D_\rho(\Psi) + 1 - \rho)$, then with probability tending to one there exists a strongly masked item
 $i \in S$ (with $\Ntilposi = 0$ and $\Ntilnegi \geq d$).
\end{lem}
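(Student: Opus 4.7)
The plan is to adapt the negative-association (NA) argument used in the proof of Lemma~\ref{lem:lonedef} to the joint event $A_i := \{M_i^+ = 0\} \cap \{M_i^- \geq d\}$ required for a strongly masked item. Writing $p' := (\nu/k)(1-\nu/k)^{k-1}$, each test independently belongs to one of $2k+1$ categories: ``alone-with-$j$ with $Y=1$'' (probability $p'(1-\rho)$) or ``alone-with-$j$ with $Y=0$'' (probability $p'\rho$) for each $j \in S$, or ``other'' (residual probability). The counts $(M_i^+, M_i^-)_{i\in S}$ arising from this multinomial are therefore jointly NA.

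For a fixed $i \in S$, I would first derive the per-item estimate $q := \pr[A_i]$ via the single-point lower bound $\pr[A_i] \geq \binom{n}{d}(p'\rho)^d(1-p')^{n-d}$, which is exact because $(M_i^+, M_i^-, n - M_i^+ - M_i^-)$ is trinomial with probabilities $(p'(1-\rho), p'\rho, 1-p')$. Combining the sharpness factor following \eqref{eq:bino_gesharp} with \eqref{eq:handy} and routine asymptotics,
\[ q \;\geq\; p^{-\nu e^{-\nu}\gamma(D_\rho(\Psi) + 1 - \rho)(1+o(1))}, \]
so that the expected number of strongly masked items $kq$ diverges as a positive power of $p$ under the hypothesis on $\theta$.

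The main obstacle is converting this first-moment estimate into an existence statement, because one cannot invoke NA directly on $\pr[\cap_i \overline{A_i}] \leq \prod_i \pr[\overline{A_i}]$: the complement $\overline{A_i} = \{M_i^+ > 0\} \cup \{M_i^- < d\}$ is increasing in $M_i^+$ but decreasing in $M_i^-$. The remedy is to condition on $\mathcal{G} := \sigma\big((M_j^-)_{j \in S}\big)$ and let $S' := \{j \in S : M_j^- \geq d\}$. Then $\overline{A_i}$ reduces to $\{M_i^+ > 0\}$ for $i \in S'$ and is automatic for $i \notin S'$. Given $\mathcal{G}$, the residual vector $(M_i^+)_{i \in S}$ is multinomial on $n - \sum_j M_j^-$ trials with updated cell probabilities, so the increasing events $\{M_i^+ > 0\}$ live on disjoint coordinates and NA gives
\[ \pr\!\big[\,\textstyle\bigcap_i \overline{A_i} \,\big|\, \mathcal{G}\big] \;\leq\; \prod_{i \in S'}\!\big(1 - \pr[M_i^+ = 0 \mid \mathcal{G}]\big) \;\leq\; \exp\!\big( -|S'|\cdot \pr[M_i^+ = 0 \mid \mathcal{G}] \big). \]
A short calculation confirms that, for typical $\sum_j M_j^-$ (of order $n$), the modification of both the effective trial count and the per-trial probability cancels to leading order, so $\pr[M_i^+ = 0 \mid \mathcal{G}] \geq p^{-(1-\rho)\nu e^{-\nu}\gamma(1+o(1))}$, matching the unconditional exponent.

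Finally, I would lower bound $|S'|$. Since $(\mathbf{1}\{M_j^- \geq d\})_{j\in S}$ are monotone functions of disjoint NA multinomial coordinates, they are themselves NA, and each marginal obeys $\pr[M_j^- \geq d] \geq p^{-\nu e^{-\nu}\gamma D_\rho(\Psi)(1+o(1))}$ by another application of the sharp binomial lower bound combined with \eqref{eq:handy}. A Chernoff-type inequality for sums of NA Bernoullis then yields $|S'| \geq p^{\theta - \nu e^{-\nu}\gamma D_\rho(\Psi) - \epsilon}$ with probability tending to one, for arbitrarily small $\epsilon > 0$. Assembling the pieces,
\[ \pr\!\big[\,\textstyle\bigcap_i \overline{A_i}\big] \;\leq\; o(1) \;+\; \exp\!\big( -p^{\theta - \nu e^{-\nu}\gamma(D_\rho(\Psi) + 1 - \rho) - 2\epsilon} \big), \]
which vanishes under the hypothesis $\theta > \nu e^{-\nu}\gamma(D_\rho(\Psi) + 1 - \rho)$ once $\epsilon$ is taken small enough, establishing the lemma.
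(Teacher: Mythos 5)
Your proposal is correct, but it follows a genuinely different route than the paper.  The paper's own proof is a classical second-moment argument: it writes $V = \sum_{i\in S} V_i$ for the number of strongly masked items, lower-bounds $\ep[V]$ by the point-mass estimate you also use, upper-bounds $\ep[V_i V_j]$ via negative association of the underlying multinomial, and then concludes via $\var(V)/(\ep V)^2 \to 0$ and Chebyshev's inequality.  You instead bypass the variance computation: having correctly identified that a direct NA union bound fails because $\overline{A_i}=\{M_i^+>0\}\cup\{M_i^-<d\}$ is not monotone, you condition on $\Gc=\sigma((M_j^-)_{j\in S})$, reduce the problem to the monotone events $\{M_i^+>0\}$ for $i\in S'$, exploit the fact that the conditional law of $(M_i^+)_{i\in S}$ is again multinomial (hence NA), and separately establish $|S'| = p^{\theta - \nu e^{-\nu}\gamma D_\rho(\Psi)-o(1)}$ with high probability via NA of the indicators $\openone\{M_j^-\ge d\}$ plus a Chernoff bound.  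The cancellation you flag between the reduced trial count $n-\sum_j M_j^-$ and the renormalized cell probability $p'(1-\rho)/(1-kp'\rho)$ is a real and necessary step (the two do \emph{not} cancel for $\sum_j M_j^-$ far from its mean, and the constant $kp'\rho\to\nu e^{-\nu}\rho$ does not vanish), but the concentration of $\sum_j M_j^-$ pushes that case into the $o(1)$ term as you indicate.  Your argument trades the paper's variance/covariance bookkeeping for a stratified NA argument; both yield the same threshold on $\theta$, and yours gives a (conditionally) stretched-exponential failure probability rather than the polynomial rate Chebyshev provides, though that gain is washed out by the $o(1)$ terms from the atypical events.  The only loose end is the Chernoff bound for NA Bernoullis used to lower-bound $|S'|$; that step is standard but should be accompanied by a citation (the NA analogue of Hoeffding/Chernoff appears, for instance, in the negative-association literature following Joag-Dev and Proschan).
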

\begin{proof}
    For each defective $i \in S$, we write $V_i$ for the indicator of the event that  it is strongly masked, and $V = \sum_{i \in S} V_i$ for the
    total number of strongly masked items.
    As in \cite{Ald14a}, we know that $(\Ntilposi, \Ntilnegi)$ are components of a multinomial distribution, with respective parameters $(n, q_+, q_-)$, where $q_+ = (1-\nu/k)^{k-1} (\nu/k)(1-\rho)$ and $q_- = (1-\nu/k)^{k-1} (\nu/k) \rho$
    are the probabilities of a particular test containing  defective $i$ and no other defectives, and being positive or negative respectively. Further, observe that
    \begin{equation}
        (\Ntilnegi \,|\, \Ntilposi = 0) \sim \Binomial(n, q_-/(1-q_+)) \label{eq:bino1}
    \end{equation}
    by a simpler version of Lemma \ref{lem:multi_cond} with $\gamma = 1$.
    
    We will apply the binomial concentration bound \eqref{eq:bino_ge} and its matching lower bound \eqref{eq:bino_gesharp} to $\Ntilnegi$ conditioned on $\Ntilposi = 0$.  Since we consider the event $\Ntilnegi \ge d$ and we have assumed $d = \Psi \nu e^{-\nu} n/k$, we introduce a constant $\epsilon_p$ defined to satisfy
    \begin{equation}
        (1+\epsilon_p) \frac{n q_-}{1-q_+} = d = \Psi \nu e^{-\nu} \frac{n}{k}, \label{eq:eps_n}
    \end{equation}
    so that the left hand side is slightly above the conditional mean of $\Ntilnegi$.  Since  $k q_+ \rightarrow e^{-\nu} \nu (1-\rho)$ and $k q_- \rightarrow e^{-\nu} \nu \rho$ by definition, we readily deduce from \eqref{eq:eps_n} that $1 + \epsilon_p \to \Psi/\rho$ as $p \to \infty$, which is in the range $\big[ 1,\frac{1}{\rho} \big]$ by the assumption $\Psi \in [\rho,1]$.
    
    Combining the above and using binomial concentration, we obtain
    \begin{align}
    \ep[V_i] 
        &= \pr[ \Ntilposi = 0, \Ntilnegi \geq d] \\
        & =   \pr[ \Ntilposi = 0] \pr[ \Ntilnegi \geq d | \Ntilposi = 0] \\
        &= \left( 1 -  q_+ \right)^n \PP\Big[ \Binomial\big(n,{q_-}/({1-q_+})\big) \ge d \Big]  \label{eq:EVi1}  \\
        & \le  \left( 1 -  q_+ \right)^n \exp \left( - \frac{n q_-}{1-q_+} D_1 (1+\epsilon_p)  \right) \label{eq:EVi2} \\
        &\le \exp \left( - n \Big(q_+ + \frac{q_-}{1-q_+} D_1 (1+\epsilon_p) \Big) \right) \label{eq:EVi3} \\
        &= \exp \Big( - \frac{n}{k} (e^{-\nu} \nu) \left( 1 - \rho + \rho  D_1(\Psi/\rho) \right) \nonumber \\
            & \hspace*{4cm} \times (1+o(1)) \Big)  \label{eq:EVi4}\\
        &= p^{ - \gamma  e^{-\nu} \nu\left( 1- \rho + D_\rho(\Psi) \right) \cdot (1+o(1)) }, \label{eq:EVi5}
    \end{align}
    where \eqref{eq:EVi1} follows from \eqref{eq:bino1}, \eqref{eq:EVi2} follows from the concentration bound \eqref{eq:bino_ge},\eqref{eq:EVi3}  uses $1-q_+ \le e^{-q_+}$, \eqref{eq:EVi4} applies the above-mentioned asymptotics of $(q_+,q_-,\epsilon_p)$, and \eqref{eq:EVi5} follows since $n = \gamma k \log p$.
    
    We can apply the same argument to lower bound $\EE[V_i]$ via \eqref{eq:bino_gesharp}.  As in the proof of Lemma \ref{lem:intruding} above the sharpness factor  is at least $p^{-\epsilon'}$ for any $\epsilon' > 0$ and $p$ sufficiently large; in other words, it behaves as $p^{o(1)}$.  Combining this lower bound with the upper bound \eqref{eq:EVi5}, we conclude that $\ep[V_i] = p^{-\gamma e^{-\nu} \nu ( 1 - \rho + D_\rho(\Psi) ) \cdot (1+o(1)) + o(1)}$.  Hence, since $k \asymp p^{\theta}$, the expectation $\ep[V] = k \ep[V_i]$ tends to infinity under the assumed condition $\theta > \gamma e^{-\nu} \nu (1 - \rho + D_\rho(\Psi) )$.
    
    %
    
    Furthermore, we can bound the variance of $V$ by a negative association argument (using \cite[Property P3]{joag-dev} as above) to deduce that for any two defectives $i \neq j$:
    \begin{align}
     & \ep[V_i V_j] \nonumber \\
     & =   \pr[ \Ntilposi = 0, \Ntilnegi \geq d, \Ntilposj = 0, \Ntilnegj \geq d] \\
    & \leq (1-q_+)^{2n} \left( \pr[ \Binomial(n, q_-/(1-2q_+)) \ge d] \right)^2. \label{eq:bino2} 
    \end{align}
    Here the factor of $1-2q_+$ can by understood as multinomial conditioning ({\em cf.}, Lemma \ref{lem:multi_cond}) on {\em both} $\Ntilposi = 0$ and $\Ntilposj = 0$.  Continuing, we write
    \begin{align}
        \frac{ \var(V)}{ (\ep[V])^2} 
            &= \frac{ k \var[V_i] +  k(k-1) \cov[V_i, V_j]}{ k^2 (\ep [V_i])^2} \\
            &= \frac{1}{k \EE[V_i]} + \frac{ \EE[V_i V_j] }{ (\EE[V_i])^2 } - 1, \label{eq:var_ratio}
    \end{align}
    where we used $\var[V_i] = \EE[V_i] - (\EE[V_i])^2$ for binomial random variables, and the fact that $\cov[V_i, V_j] = \EE[V_i V_j] - \EE[V_i] \EE[V_j] = \EE[V_i V_j] - (\EE[V_i])^2$.
    We have already established that $k \EE[V_i] \to \infty$, so the first term in \eqref{eq:var_ratio} tends to zero as $p \to \infty$.  In addition, from \eqref{eq:EVi1} and \eqref{eq:bino2}, we have
    \begin{align}
        \frac{ \EE[V_i V_j] }{ (\EE[V_i])^2 } 
            &\le \bigg( \frac{ \PP\big[ \Binomial\big(n,\frac{q_-}{1-2q_+}\big) \ge d \big] }{\PP\big[ \Binomial\big(n,\frac{q_-}{1-q_+}\big) \ge d \big]} \bigg)^2 \\
            &\le 1+o(1), \label{eq:lim_ratio}
    \end{align}
    where \eqref{eq:lim_ratio} is proved in Appendix \ref{app:lim_ratio}.  

    Combining the preceding observations with \eqref{eq:var_ratio} gives $\frac{ \var(V)}{ (\ep[V])^2}  \to 0$.  This implies that $V$ is sufficiently concentrated around its mean to deduce (via Chebyshev's inequality) that there exists a strongly masked defective (i.e., $V > 0$) with probability approaching one, as desired.
\end{proof}

We now argue that there exists a weakly intruding non-defective. First note that the analysis so far has only considered the columns of the test matrix corresponding to defective items, and the i.i.d.~design of the test matrix means that columns corresponding to non-defectives are independent of those. Recall that an item is weakly intruding if it appears in fewer than $d$ of the  $\Nnnd + \Ntilnegi$ tests that would be rightfully negative if $i$ were removed, and that $\Nnnd = n e^{-\nu}(1+o(1))$ ({\em cf.}, start of this subsection).  Again using the binomial concentration bound in \eqref{eq:bino_ge} along with $\Ntilnegi \sim \Binomial(n, q_-/(1-q_+))$ ({\em cf.}, proof of Lemma \ref{lem:tamedefective}), we can also assume that $\Ntilnegi \leq C e^{-\nu} \nu n/k$ for fixed $C > \rho$, as this holds with probability approaching one.

\begin{lem} \label{lem:weakintrude}
Fixing $\gamma > 0$, $\Psi < 1$, and $C > \rho$, and setting $d = \Psi \nu e^{-\nu} \gamma \log p$ and $n = \gamma k \log p$, conditioned on there existing a strongly masked $i \in S$ with $\Ntilnegi \le C e^{-\nu} \nu n/k$, there also exists a weakly intruding non-defective item with probability tending to one if $1 > \gamma \nu e^{-\nu} D_1(\Psi)$.
\end{lem}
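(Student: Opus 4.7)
The plan is to exploit the fact that the columns of the test matrix corresponding to non-defective items are, by construction, independent of the columns corresponding to defective items. So once a strongly masked item $i$ is fixed (which is an event measurable with respect to the defective columns and noise), the set $\mathcal{R}$ of tests that would be rightfully negative if $i$ were removed from $S$ is determined by the defective columns and noise, and is therefore independent of the non-defective columns. By assumption $|\mathcal{R}| = \Nnnd + M_i^-$, with $\Nnnd = n e^{-\nu}(1+o(1))$ and $M_i^- \le C e^{-\nu}\nu n/k = o(n)$, so $|\mathcal{R}| = n e^{-\nu}(1+o(1))$.

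For any non-defective item $j \notin S$, let $W_j$ be the number of tests in $\mathcal{R}$ that contain $j$. Conditioned on $\mathcal{R}$ (in particular on $|\mathcal{R}|$), the random variables $\{W_j\}_{j \notin S}$ are i.i.d.\ $\Binomial(|\mathcal{R}|, \nu/k)$. Since the common mean is $\nu|\mathcal{R}|/k = \nu e^{-\nu} \gamma \log p \cdot (1+o(1))$, and the threshold is $d = \Psi \nu e^{-\nu} \gamma \log p$, being weakly intruding corresponds to $W_j$ lying below $\Psi$ times its mean, with $\Psi < 1$ so that $D_1(\Psi) > 0$.

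The next step is to lower bound $\PP[W_j < d]$. I would do this in the same way Lemma \ref{lem:intruding} handles an upper tail: bound the lower-tail probability below by the single point probability $\PP[W_j = \lfloor d\rfloor]$ and apply \eqref{eq:bino_gesharp}. Under our scaling $|\mathcal{R}|\cdot(\nu/k)$ is $\Theta(\log p)$, so the sharpness factor is $p^{-o(1)}$, and we obtain
\begin{equation}
\PP[W_j < d] \;\ge\; p^{-\gamma \nu e^{-\nu} D_1(\Psi)\,(1+o(1))}.
\end{equation}

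Finally, by conditional independence of $\{W_j\}_{j \notin S}$ given $\mathcal{R}$, the probability that no non-defective is weakly intruding is at most
\begin{equation}
\bigl(1 - p^{-\gamma \nu e^{-\nu} D_1(\Psi)\,(1+o(1))}\bigr)^{p-k} \;\le\; \exp\!\bigl(-(p-k)\, p^{-\gamma \nu e^{-\nu} D_1(\Psi)\,(1+o(1))}\bigr),
\end{equation}
and the hypothesis $\gamma \nu e^{-\nu} D_1(\Psi) < 1$ forces the exponent to tend to $-\infty$, so the probability vanishes. The only mildly delicate step is checking that the sharpness factor in \eqref{eq:bino_gesharp} is indeed $p^{-o(1)}$ in this regime (which follows exactly as in the discussion after \eqref{eq:bino_gesharp} since $|\mathcal{R}|\nu/k = \Theta(\log p)$ and $\nu/k = \Theta(p^{-\theta})$), and verifying that the conditioning on ``strongly masked $i$ exists with $M_i^- \le C e^{-\nu}\nu n/k$'' does not disturb the distribution of the non-defective columns — which is immediate from the independence of the defective and non-defective columns under Bernoulli testing.
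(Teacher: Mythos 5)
Your proposal is correct and follows essentially the same route as the paper: both use the independence of the non-defective columns from the defective columns and noise under Bernoulli testing, lower bound the probability of being weakly intruding via the point-probability matching lower bound \eqref{eq:bino_gesharp}, and invoke the condition $\gamma \nu e^{-\nu} D_1(\Psi) < 1$. The only cosmetic difference is in the last step, where you use the direct product bound $(1-q)^{p-k} \le e^{-(p-k)q}$ while the paper computes the expected number of weakly intruding items and applies binomial concentration; the two are interchangeable given the conditional independence you correctly establish.
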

\begin{proof}
Conditioned on $\Nnnd$ and $\Ntilnegi$, the number of rightfully negative tests that each non-defective item appears in is distributed as $\Binomial( \Nnnd + \Ntilnegi, \nu/k )$ (note that the relevant columns of $\Xv$ for non-defective items are independent of those that determine $\Nnnd$ and $\Ntilnegi$).  By the above-mentioned assumptions $\Nnnd = n e^{-\nu}(1+o(1))$ and $\Ntilnegi \leq C e^{-\nu} \nu n/k$, the probability of this binomial random variable being less than $d$ is lower bounded by that of $Z \sim \Binomial(n e^{-\nu} (1 + \delta_p + C/k), \nu/k)$, where $\delta_p = o(1)$. 
    
Defining $\epsilon'_p$ to be such that 
\begin{equation}
    (1-\epsilon'_p) \frac{n}{k} e^{-\nu} \nu (1 + \delta_p + C/k)  = d = \Psi \nu e^{-\nu} \frac{n}{k},
\end{equation}
we observe that $1-\epsilon'_p \rightarrow \Psi$.  Hence, we have 
$(1 + \delta_p + C/k)  D_1(1-\epsilon'_p) = D_1(\Psi) (1+o(1))$.  In addition, when bounding $\pr[ Z < d]$, the discussion following \eqref{eq:bino_gesharp} reveals that the concentration bound \eqref{eq:bino_le} has a matching lower bound with a $p^{-\epsilon'}$ pre-factor.  Combining the above observations, we obtain for any $j \notin S$ that
\begin{align}
 &\pr[ \mbox{ item $j$  is weakly intruding } ] \nonumber \\
     & \geq    \pr[ Z < d] \\
    & \ge p^{-\epsilon'} \exp\left( -  \frac{n}{k} e^{-\nu} \nu (1 + \delta_p + C/k)  D_1(1-\epsilon'_p) \right) \label{eq:weak_intr2} \\
     & \ge p^{-\epsilon'} \exp \left( - \gamma (\log p) e^{-\nu} \nu  D_1(\Psi) \cdot (1+o(1)) \right), \label{eq:weak_intr3}
\end{align}
where and \eqref{eq:weak_intr3} uses $n = \gamma k \log p$ in addition to the above observations.

By \eqref{eq:weak_intr3}, the expected number of weakly intruding non-defective items is at least
$$ (p-k) \cdot p^{-\epsilon'} \cdot p^{- \gamma \nu e^{-\nu} D_1(\Psi) \cdot (1+o(1))},$$
which grows to infinity as $\Omega(p^{\tau})$ for some $\tau > 0$, due to the assumption  $1 > \gamma \nu e^{-\nu} D_1(\Psi)$ and the fact that $\epsilon'$ may be arbitrarily small.  Finally, the i.i.d.~design of the matrix means that non-defective items are weakly intruding independently of one another, so by binomial concentration, the actual number of of weakly intruding items is positive with probability approaching one.
\end{proof}

By the definition of rate in \eqref{eq:defrateeq} along with $n = \gamma k \log p$ and $k \asymp p^{\theta}$, we can rephrase Lemma \ref{lem:tamedefective} to say that there exists a strongly masked defective with high probability when the
rate $R \sim \frac{1-\theta}{\gamma \log 2}$ satisfies
\begin{equation} \label{eq:Zrateconst}
    \frac{1-\theta}{\gamma \log 2} >  \left( \frac{ (1-\theta) \nu e^{-\nu}}{\log 2} \right) \frac{D_\rho(\Psi) + 1 - \rho}{\theta}.
\end{equation}
Similarly, Lemma \ref{lem:weakintrude} states that there exists a weakly intruding item with high probability if
\begin{equation} 
    \frac{1-\theta }{\gamma \log 2} > \left( \frac{ (1-\theta) \nu e^{-\nu}}{\log 2} \right) D_1(\Psi). \label{eq:neededA} 
\end{equation}

Combining \eqref{eq:Zrateconst} and \eqref{eq:neededA}, we see that the error event will occur if the rate is bigger
than
\begin{equation} \label{eq:maxZchan}
\frac{ (1-\theta) \nu e^{-\nu}}{\log 2} \max \left\{ \frac{D_\rho(\Psi) + 1 - \rho}{\theta}, D_1(\Psi) \right\}.
\end{equation}
We optimize this expression with respect to $\Psi$ in Appendix \ref{app:manip_conv_Z} to complete the proof of Theorem \ref{thm:Zrate}.

%
%

\section{Conclusion} \label{sec:conclusion}

We have introduced and analyzed variants of the definite defectives (DD) algorithm for noisy group testing, with an emphasis on the Z-channel and reverse Z-channel models.  Under RZ noise, our achievability result (part of which also uses the COMP algorithm) matches an algorithm-independent converse for Bernoulli testing for a broad range of dense scaling regimes, and matches a converse specific to DD for Bernoulli testing in sparse regimes.  While more significant gaps remain for the Z-channel and general binary channel (see Appendix \ref{app:symmetric}), the bounds are still matching or near-matching in several low-noise high-sparsity regimes.  Further closing these gaps, either by improved achievability or improved converse bounds (or both), poses an interesting direction for further research.

\appendix

\subsection{Technical Lemma Regarding Intersection of $D_{\gamma}$ Functions} \label{app:lemtech}
We present a result giving an explicit formula for the intersection of the $D_\gamma$ functions introduced in \eqref{eq:ddef}, i.e., $D_\gamma(t) = t\log\frac{t}{\gamma} - t + \gamma$.
\begin{lem} \label{lem:tech} Fixing $0 \leq \gamma_1 \leq \gamma_2$, $c \geq 0$, and $d \geq 0$, we have the following:
\begin{enumerate}
\item The equation
\begin{equation} \label{eq:tosolve} \varphi(t) := D_{\gamma_1}(t) -  c D_{\gamma_2}(t) +d = 0 \end{equation}
has a unique solution for $t \in [\gamma_1, \gamma_2]$ if and only if
\begin{equation} \label{eq:diffcond}  d \leq c D_{\gamma_2}(\gamma_1).\end{equation}
\item If \eqref{eq:diffcond} fails, the smallest value of $\varphi(t)$ is achieved by $t = \gamma_1$, and equals $- c D_{\gamma_2}(\gamma_1) + d  > 0$.
\item If \eqref{eq:diffcond} holds, the  solution to \eqref{eq:tosolve} is given by 
\begin{align}
t^* & = \frac{\gamma_2 - \gamma_1 - d}{\log(\gamma_2/\gamma_1)}  \mbox{\;\;\;\; for $c=1$}  \label{eq:tvalue1} \\
t^* & = \gamma_1^{1/(1-c)} \gamma_2^{-c/(1-c)} e^{z^*+1}  \mbox{\;\;\;\; for $c \neq 1$}  \label{eq:tvalue2} \\
 & = - \frac{ \gamma_1 + d - c \gamma_2}{(1-c) z^*}, \label{eq:tvalue3}
\end{align}
where $z^*$ is a solution to 
\begin{equation} \label{eq:zlambert} z e^z = -\frac{1}{e} \frac{1}{1-c} \left( 1 + \frac{d - c \gamma_2}{\gamma_1} \right) \left( \frac{\gamma_2}{\gamma_1} \right)^{c/(1-c)}.\end{equation}
This can be found using the appropriate branch of the Lambert $W$-function.
\end{enumerate}
\end{lem}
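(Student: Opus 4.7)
\textbf{Proof plan for Lemma \ref{lem:tech}.}

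My plan is to analyze $\varphi$ via its monotonicity on $[\gamma_1,\gamma_2]$ first (which handles parts~1 and~2 essentially for free) and then grind out the explicit formula in part~3 by an algebraic reduction to an equation of the form $ze^z = \text{const}$.

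First, recall $D_\gamma'(t) = \log(t/\gamma)$, so $D_\gamma$ is strictly decreasing on $(0,\gamma)$ and strictly increasing on $(\gamma,\infty)$. Hence on the interval $[\gamma_1,\gamma_2]$, the function $D_{\gamma_1}$ is nondecreasing while $D_{\gamma_2}$ is nonincreasing. Since $c\ge 0$, this means $\varphi(t) = D_{\gamma_1}(t) - cD_{\gamma_2}(t) + d$ is (strictly, whenever $\gamma_1<t<\gamma_2$) increasing on $[\gamma_1,\gamma_2]$. Evaluating at the endpoints, $\varphi(\gamma_1) = -cD_{\gamma_2}(\gamma_1)+d$ and $\varphi(\gamma_2) = D_{\gamma_1}(\gamma_2)+d \ge 0$. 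Therefore a root in $[\gamma_1,\gamma_2]$ exists (and is necessarily unique, by monotonicity) if and only if $\varphi(\gamma_1)\le 0$, i.e.\ $d \le cD_{\gamma_2}(\gamma_1)$. This proves part~1, and also part~2: when the condition fails, $\varphi$ is positive and increasing on $[\gamma_1,\gamma_2]$, so its minimum is $\varphi(\gamma_1) = -cD_{\gamma_2}(\gamma_1)+d > 0$.

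For part~3, I would expand $\varphi(t)=0$ into
\[
(1-c)\,t\log t \;+\; t\bigl(c\log\gamma_2 - \log\gamma_1 + c - 1\bigr) \;+\; (\gamma_1 - c\gamma_2 + d) \;=\; 0.
\]
The case $c=1$ collapses to a linear equation $t\log(\gamma_2/\gamma_1) = \gamma_2-\gamma_1-d$, immediately yielding \eqref{eq:tvalue1}. For $c\neq 1$, divide by $(1-c)$, write
\[
A \;=\; \frac{c\log\gamma_2 - \log\gamma_1}{1-c} - 1, \qquad B \;=\; \frac{\gamma_1 - c\gamma_2 + d}{1-c},
\]
and obtain $t\log t + At + B = 0$, equivalently $\log t = -A - B/t$. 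Now introduce $z$ by the substitution $t = \gamma_1^{1/(1-c)}\gamma_2^{-c/(1-c)}\,e^{\,z+1}$; one checks $\log t = -A + z$, so $z = -B/t$, i.e.\ \eqref{eq:tvalue3}. Substituting $t = -B/z$ back into the defining substitution yields $z\,e^z = -(B/e)\,\gamma_1^{-1/(1-c)}\gamma_2^{c/(1-c)}$, and routine simplification of the right-hand side using $B = (\gamma_1 - c\gamma_2 + d)/(1-c) = (\gamma_1/(1-c))(1+(d-c\gamma_2)/\gamma_1)$ gives exactly the Lambert equation \eqref{eq:zlambert}. The existence of a real solution $z^*$ on one of the two branches of $W$ is guaranteed by part~1, which ensures a valid $t^*\in[\gamma_1,\gamma_2]$ exists; the appropriate branch is selected by the sign/size of $z^*$ needed to place $t^*$ in this interval.

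The only mildly delicate step is the algebraic bookkeeping in the Lambert reduction, in particular choosing a substitution that absorbs both the $t\log t$ and the linear-in-$t$ terms simultaneously. Everything else is just monotonicity of the convex function $D_\gamma$ and direct evaluation at the endpoints.
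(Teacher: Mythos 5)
Your proof is correct and follows essentially the same approach as the paper: both use the monotonicity of $D_{\gamma_1}$ (increasing) and $D_{\gamma_2}$ (decreasing) on $[\gamma_1,\gamma_2]$ together with endpoint evaluation for parts~1 and~2, and both use the identical reparameterization $t = \gamma_1^{1/(1-c)}\gamma_2^{-c/(1-c)}e^{z+1}$ to reduce part~3 to a Lambert equation. The only cosmetic difference is that you first expand $\varphi(t)=0$ into $(1-c)t\log t + At + B = 0$ before substituting, whereas the paper plugs the reparameterization directly into $D_{\gamma_1}(t)-cD_{\gamma_2}(t)$ and observes the cancellation of the $\gamma$-powers inside the logarithm.
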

\begin{proof}
We have the following:
\begin{enumerate}
\item Observe that $D_{\gamma_1}(t)$  is strictly increasing for $t \in (\gamma_1,\gamma_2)$ and  $D_{\gamma_2}(t)$ is strictly decreasing for $t$ in this range,
so $\varphi(t)$  is strictly increasing in $t \in (\gamma_1,\gamma_2)$. Furthermore, $\varphi(\gamma_2) = D_{\gamma_1}(\gamma_2) + d \geq 0$ by definition, and $\varphi(\gamma_1) = - c D_{\gamma_2}(\gamma_1) + d$, which is non-positive if and only if \eqref{eq:diffcond} holds. In other words, \eqref{eq:diffcond} is equivalent to the existence of a change of sign of $\varphi$ at some point in $[\gamma_1,\gamma_2]$.
\item If \eqref{eq:diffcond} fails then $\varphi(\gamma_1) > 0$, so since $\varphi$ is increasing we know that it is minimized at this point.
\item For $c=1$, we can solve directly using the fact that 
\begin{align}
    \varphi(t) &= D_{\gamma_1}(t) -  D_{\gamma_2}(t) +d \nonumber \\  
    &= t \log \left( \frac{\gamma_2}{\gamma_1} \right) + \gamma_1 - \gamma_2 + d,
\end{align}
yielding \eqref{eq:tvalue1}.
For $c \neq 1$, we consider reparameterizing $t = \gamma_1^{1/(1-c)} \gamma_2^{-c/(1-c)} e^{z+1}$ by another variable $z$ to obtain
\begin{align}
 D_{\gamma_1}(t) & = t \log \left( \frac{\gamma_1^{1/(1-c)} \gamma_2^{-c/(1-c)} e^z}{\gamma_1} \right) + \gamma_1 \\
 D_{\gamma_2}(t) & = t \log \left( \frac{\gamma_1^{1/(1-c)} \gamma_2^{-c/(1-c)} e^z}{\gamma_2} \right) + \gamma_2.
\end{align}
Since $\frac{1}{1-c} -1 = \frac{c}{1-c}$ and $-\frac{c}{1-c} - 1 = -\frac{1}{1-c}$, we find that evaluating $ D_{\gamma_1}(t) - c D_{\gamma_2}(t)$ leads to a cancellation of powers of $\gamma_1$ and $\gamma_2$ inside the logarithm, and hence
\begin{align}  
    \varphi(t) &= D_{\gamma_1}(t) - c D_{\gamma_2}(t) +d  \\
        &= tz(1-c) + \gamma_1 - c\gamma_2 + d \label{eq:alt_phi} \\
        &= e (1-c) \gamma_1^{1/(1-c)} \gamma_2^{-c/(1-c)} \cdot z e^{z}  \nonumber \\  
            &\hspace*{3cm} + (d + \gamma_1 - c \gamma_2),
\end{align}
and \eqref{eq:zlambert} follows. We deduce \eqref{eq:tvalue2} from the definition of $t$, and \eqref{eq:tvalue3} by equating \eqref{eq:alt_phi} with zero. 
\end{enumerate}
\end{proof}



\subsection{Proving Theorem \ref{thm:RZrate} (achievability part, RZ noise) via Theorem \ref{thm:ndd_RZ}} \label{app:manip_ach_RZ}

We consider the rates achievable by COMP and NDD separately ({\em cf.}, Lemma \ref{lem:COMP_RZ} and Theorem \ref{thm:ndd_RZ}). Since $\frac{k \log_2(p/k)}{k \log p} \sim \frac{1-\theta}{\log 2}$, Lemma \ref{lem:COMP_RZ}  tells us that the limiting rate in \eqref{eq:defrateeq} for COMP becomes
\begin{align}  \label{eq:comprate} 
R  &\sim  \frac{k \log_2\frac{p}{k}}{\ncomp}  = \frac{k \log_2\frac{p}{k}}{k \log p}   \frac{k \log p}{\ncomp}  \nonumber \\
&\sim \frac{(1-\theta)}{\log 2} (1- \rho) \nu e^{-\nu}
= \frac{(1-\theta)(1- \rho)}{e \log 2} ,
\end{align}
taking the choice of $\nu = 1$ that maximizes $\nu e^{-\nu}$.

To find the rate achievable by NDD,
we rewrite \eqref{eq:nind_RZ}, \eqref{eq:niid_RZ} and \eqref{eq:niind_RZ}  respectively, using \eqref{eq:handy} in the third case,
to obtain
\begin{align}
\frac{k \log p}{n_1^{(D)}} & = \nu e^{-\nu} \frac{1-\rho}{1-\xi} \label{eq:rewriteRZ1} \\
\frac{k \log p}{n_2^{(D)}} & = 
\nu e^{-\nu} \frac{ D_1(\beta)}{\theta} \label{eq:rewriteRZ2} \\
\frac{k \log p}{n_2^{(ND)}} & = \nu e^{-\nu} \frac{ \rho D_1(\beta/\rho)}{\xi} =  \nu e^{-\nu} \frac{ D_\rho(\beta)}{\xi}, \label{eq:rewriteRZ3} 
\end{align}
since $k \log_2(p/k) = \big((1-\theta)k\log_2p\big) (1+o(1))$.  

Note that \eqref{eq:rewriteRZ1} is increasing in $\xi$, \eqref{eq:rewriteRZ3} is decreasing in $\xi$,  \eqref{eq:rewriteRZ2} is decreasing in $\beta
\in (\rho,1)$ and \eqref{eq:rewriteRZ3} is increasing in $\beta \in (\rho,1)$, so we need to choose the parameters $\xi$ and $\beta$ to balance these terms.
Using a similar argument to \eqref{eq:comprate} above, and again
making the optimal choice $\nu = 1$, the limiting rate in \eqref{eq:defrateeq} becomes
\begin{align} \label{eq:threeterm}
R &\sim \frac{k \log_2(p/k)}{n} \nonumber \\
    &\sim \frac{ (1-\theta) }{e \log 2} \min \left\{  \frac{1-\rho}{1-\xi}, \frac{ D_1(\beta)}{\theta},
\frac{ D_\rho(\beta)}{\xi} \right\}.
\end{align}

Since $D_1$ and $D_{\rho}$ in \eqref{eq:threeterm} are continuous in their arguments, we can consider choosing $\beta \in [\rho,1]$ and $\xi \in [0,\theta]$,\footnote{As $\xi \to 0$, the third term in \eqref{eq:threeterm} grows unbounded, so for $\xi = 0$ we simply lower bound the minimum of three terms by that of the first two terms.} where previously we excluded the endpoints.  If there exists a value $\beta^* \in [\rho,1]$  that makes 
\begin{equation} \label{eq:liketrue} \frac{ D_1(\beta^*)}{\theta}= D_\rho(\beta^*) + 1 - \rho, 
\end{equation}
then taking $\xi = \xi^* := \frac{D_{\rho}(\beta^*)}{D_\rho(\beta^*) + 1 - \rho}$ (if valid, i.e., in $[0,\theta]$) would make all bracketed terms in \eqref{eq:threeterm} become equal to $ D_\rho(\beta^*) + 1 - \rho$, suggesting that a putative rate of 
\begin{equation} \label{eq:putativerate} \frac{(1-\theta) (D_\rho(\beta^*) + 1 - \rho) }{e \log 2} \end{equation} 
might be possible.
We can consider whether there exists a solution to \eqref{eq:liketrue} by taking $\gamma_1 = \rho$, $\gamma_2 = 1$, $c=\frac{1}{\theta}$ and $d= 1-\rho$ in Lemma \ref{lem:tech}. Examining \eqref{eq:diffcond}, there exists a solution $\beta^* \in (\rho,1)$ if and only if
$(1-\rho) \theta \leq D_1(\rho) = \rho \log \rho + 1- \rho$, or equivalently if $\theta \leq \thetacrit = 1 + \frac{\rho \log \rho}{1-\rho}$ ({\em cf.}, \eqref{eq:theta_crit}).

{\bf Case 1 ($\theta \leq \thetacrit$).} Equation \eqref{eq:tvalue2} in Lemma \ref{lem:tech} shows that $\beta^* = \rho^{1/(1-1/\theta)} e^{z+1} = \rho^{\theta/(\theta-1)} e^{z+1}$, where   $z$ is a solution to $z e^z = - e^{-1} \rho^{\theta/(1-\theta)}$ (see \eqref{eq:zlambert}, and observe that $ \frac{1}{1-c} \big( 1 + \frac{d - c \gamma_2}{\gamma_1} \big) = \frac{1}{1 - 1/\theta}\big( 1 + \frac{1-\rho-1/\theta}{\rho} \big) = \frac{1}{\rho}$), i.e., $z = -\kappa$ in the notation of \eqref{eq:kappadef}. Substituting the
value of $\beta^*$ into the definition of $D_{\rho}$ and applying some algebra,\footnote{Write $z e^z = - e^{-1} \rho^{\theta/(1-\theta)}$ as $e^{z+1} \rho^{\theta/(\theta-1)} = -\frac{1}{z}$, or equivalently $\beta^* = \frac{1}{\kappa}$.  Then $D_{\rho}(\beta^*) + 1 - \rho = \beta^* \log \frac{\beta^*}{\rho} - \beta^* + 1 = \beta^* \log \frac{\beta^*}{\rho e} + 1
= \beta^* \frac{  - \log \rho}{1-\theta} - \beta^* \kappa  +1 = \beta^* \frac{  - \log \rho}{1-\theta} $,
since $\frac{\beta^*}{\rho e} = \rho^{-1/(1-\theta)} e^{-\kappa}$. } we obtain
\begin{equation} 
    \label{eq:bestvalue} (1-\theta)(D_\rho(\beta^*) + 1 - \rho) = \frac{ - \log \rho}{\kappa(\theta)}.
\end{equation}
We need to verify whether the corresponding parameter $\xi^*$ satisfies $\xi^* < \theta$, which is equivalent to $1 - \xi^* > 1 - \theta$ and in turn (recalling $\xi^* = \frac{D_{\rho}(\beta^*)}{D_\rho(\beta^*) + 1 - \rho}$) to
\begin{equation} 
    \label{eq:Dbd} (1-\theta)(D_\rho(\beta^*) + 1 - \rho) < 1-\rho, 
\end{equation}  
or equivalently $-\log \rho < \kappa (1-\rho)$ ({\em cf.}, \eqref{eq:bestvalue}). Direct calculation shows that this is satisfied if and only if $\theta > \thetaopt$, where $\thetaopt$ is defined in \eqref{eq:theta_opt}; this is deduced by substituting the ``endpoint'' value $\kappa = \frac{-\log \rho}{1-\rho}$ into $\kappa e^{-\kappa} =  - e^{-1} \rho^{\theta/(1-\theta)}$.\footnote{In more detail, this choice gives $\log\big( \kappa e^{-\kappa} \big) = \frac{\log \rho}{1 - \rho} + \log\big( \frac{-\log \rho}{1 - \rho} \big)$, and we can also rewrite $\kappa e^{-\kappa} =  - e^{-1} \rho^{\theta/(1-\theta)}$ as $\theta = \big(1 - \frac{\log \rho}{ 1 + \log(\kappa e^{-\kappa}) }\big)^{-1} = \frac{1+\log\big( \kappa e^{-\kappa} \big)}{1 + \log\big( \kappa e^{-\kappa} \big) - \log \rho}$ by applying $\frac{\theta}{1-\theta} = \frac{1}{1/\theta - 1}$ and re-arranging.  Combining these two facts gives $\theta = \theta_{\rm opt}$.} 

Hence for $\thetaopt < \theta \leq \thetacrit$,  using the fact that $(1-\theta)(D_\rho(\beta^*) + 1 - \rho) = \frac{- \log \rho}{\kappa(\theta)}$ as per \eqref{eq:putativerate}, we obtain a rate of $\frac{-\log \rho}{\kappa(\theta)  e \log 2}$ as claimed. 

For $\theta \leq \thetaopt$, $\xi^*$ is not a legitimate choice, but we can obtain a rate of $\frac{1-\rho}{e \log 2}$ by picking $\beta = \beta^*$ and $\xi = \theta$. In this case, the first term (i.e., $\frac{1-\rho}{1-\theta}$) provides the minimum in \eqref{eq:threeterm} since the reverse of \eqref{eq:Dbd} holds, implying 
\begin{gather}
 \frac{1-\rho}{1-\theta} \leq D_\rho(\beta^*) + 1 - \rho = \frac{D_1(\beta^*)}{\theta}, \\
 \frac{1-\rho}{1-\theta} \leq \frac{D_\rho(\beta^*)}{\theta},
\end{gather}
 where the equality in the first expression applies \eqref{eq:liketrue}, and the second expression follows by rewriting the reverse of \eqref{eq:Dbd} as $(1-\theta)D_{\rho}(\beta^*) \ge (1-\rho) - (1-\theta)(1-\rho) = \theta(1-\rho)$.

{\bf Case 2 ($\theta > \thetacrit$).} In this case, the second part of Lemma \ref{lem:tech} tells us that the optimal choice is to take $\beta = \rho$ and $\xi$ arbitrarily close to zero (to keep the third term in \eqref{eq:threeterm} zero while maximizing the first term). However, in analogy with \eqref{eq:liketrue}, for $\theta > \thetacrit$ it holds that 
$\frac{D_1(\rho)}{\theta} < 1- \rho$, so the minimum in \eqref{eq:threeterm} is strictly smaller than the first term. In other words, the optimized NDD rate is strictly less than $\frac{(1-\theta)(1-\rho)}{e \log 2}$, which we know from \eqref{eq:comprate} above is achievable by COMP.
In other words, for sufficiently dense problems, the NDD rate bound of Theorem \ref{thm:ndd_RZ} is worse than that attained by COMP.  Therefore, in this regime, we get the required rate in Theorem \ref{thm:ndd_RZ} from COMP instead of NDD.

\subsection{Proving Theorem \ref{thm:RZrate} (converse part, RZ noise) via \eqref{eq:tohold2}} \label{app:manip_conv_RZ}

By the discussion following \eqref{eq:tohold2}, we want to find the value of $\Phi$ that gives the smallest value of 
\begin{equation}    
    \max \left\{ \frac{ D_1(\Phi)}{\theta},  D_\rho(\Phi)+ 1 - \rho \right\}.
\end{equation}
This is precisely the problem considered in \eqref{eq:liketrue} (without any need to consider constraints on $\xi$) and recall the following observations that we established via Lemma \ref{lem:tech}.

{\bf Case 1 ($\theta \leq \thetacrit$).} We know that the smallest value is given by the intersection of the two curves.  In accordance with \eqref{eq:bestvalue}, if
\begin{equation}  
R > \frac{- \log \rho}{\kappa(\theta) e \log 2} \label{eq:optrate1}
\end{equation}
then the success probability of the ML algorithm tends to zero (again making the optimal choice $\nu = 1$)

{\bf Case 2 ($\theta \geq \thetacrit$).} Recall from the arguments following \eqref{eq:liketrue} that the maximum of $\frac{D_1(\Phi)}{\theta}$ and $D_\rho(\Phi)+ 1 - \rho$ is achieved
by $D_\rho(\Phi)+ 1 - \rho$, and the smallest such value  is attained when $\Phi = \rho$. Therefore, if the rate satisfies
\begin{equation}  
R \geq \left( \frac{ 1-\theta }{e \log 2} \right) (1-\rho), \label{eq:optrate2}
\end{equation}
then the success probability of the ML algorithm tends to zero (again using the fact that $\nu = 1$ maximizes $\nu e^{-\nu}$).

Finally, the presence of the (reverse) Z-channel capacity $C_{\mathrm{Z}}(\rho)$ in Theorem \ref{thm:RZrate} needs no further justification, as such a bound was proved in \cite{Bal13}.

\subsection{Proving Theorem \ref{thm:Zrate} (achievability part, Z noise) via Theorem \ref{thm:ndd_Z}} \label{app:manip_ach_Z}

In the following, recall that $\zeta = e^{-\nu} +  \rho(1- e^{-\nu}) \geq \rho$.  In a similar way to the proof of Theorem \ref{thm:RZrate}, we can rewrite \eqref{eq:nid_Z}, \eqref{eq:nind_Z}  and \eqref{eq:niid_Z} as
\begin{align}
\frac{k \log k}{n_1^{(D)}} &  
= \nu D_\rho(\alpha) \label{eq:rewriteRZ4} \\
\frac{k \log \frac{p}{k}}{n_1^{(ND)}} &  
= \nu D_\zeta(\alpha) \label{eq:rewriteRZ5} \\
\frac{k \log k}{n_2^{(D)}} & = \nu e^{-\nu} (1-\rho). \label{eq:rewriteRZ6} 
\end{align}
Using the fact that $\frac{k \log_2 (p/k)}{k \log k} = \frac{1-\theta}{\theta \log 2} (1+o(1))$, we deduce that for any choice
of parameters $\nu$ and $\alpha$, an achievable
rate is given by
\begin{multline} \label{eq:Zachrate}  
\frac{1}{\log 2} \min \bigg\{  \frac{(1-\theta) \nu D_\rho(\alpha)}{\theta},  \\ \nu D_\zeta(\alpha), \frac{(1-\theta)(1-\rho) \nu e^{-\nu}}{\theta} \bigg\}.\end{multline}
Again, we maximize \eqref{eq:Zachrate}, first equating the first two terms using Lemma \ref{lem:tech} with
$\gamma_1 = \rho$, $\gamma_2 = \zeta$, $c = \frac{\theta}{1-\theta}$ and $d=0$, for which \eqref{eq:diffcond} trivially holds and so a unique $\alpha$ solving $ \frac{(1-\theta) \nu D_\rho(\alpha)}{\theta} =  \nu D_\zeta(\alpha)$ exists. If $\theta = \frac{1}{2}$, then \eqref{eq:tvalue1} gives 
$$\alpha^*(1/2) =  \frac{\zeta-\rho}{\log(\zeta/\rho)}  = \frac{ \rho s}{\log(1+s)},$$
since $s = \frac{\zeta}{\rho} - 1$ by the definitions of $s$ and $\zeta$.
Note that the required\footnote{We may again include the endpoints $\alpha = \rho$ and $\alpha = \zeta$ due to the continuity of $D_{\rho}$ and $D_{\zeta}$, similarly to the reverse Z-channel model.} bound $\rho \leq \alpha^*(1/2) \leq \zeta$ holds because it is equivalent to $\log(1+s) \leq s \leq (1+s) \log(1+s)$.
We obtain \eqref{eq:Zratedoable2} by substituting this value in $D_\rho(\alpha)$ in \eqref{eq:Zachrate} (note that $D_\rho(\alpha^*(1/2)) = \frac{\rho s}{ \log(1+s) } \cdot \log\frac{s}{\log(1+s)} - \frac{\rho s}{\log(1+s)} + \rho$ and $\rho s = \zeta - \rho = (1-\rho)e^{-\nu}$).

In the case $\theta \neq \frac{1}{2}$,   \eqref{eq:zlambert} can be expressed in the form\footnote{The relevant terms are evaluated as follows: (i) $\frac{1}{1-c} = \frac{1-\theta}{1-2\theta}$, (ii) $\big( 1 + \frac{d - c \gamma_2}{\gamma_1} \big) = 1 - \frac{\theta}{1-\theta} \frac{\zeta}{\rho}$, (iii)  $\big( \frac{\gamma_2}{\gamma_1} \big)^{c/(1-c)} = \big( \frac{\zeta}{\rho} \big)^{\frac{\theta}{1-2\theta}}$.  The latter two of these are further simplified using $\frac{\zeta}{\rho} = 1+s$. }
\begin{equation} e^{z^*} z^* = -\frac{1}{e} g(s, \theta), \end{equation}
for the function $g$ given in \eqref{eq:gratio}. This
is solved by taking $z^* = \lambda(\theta)$ with $\lambda(\theta)$ given in \eqref{eq:lambdadef}. Our choice of branch of the Lambert $W$-function is justified in Remark \ref{rem:branchchoice} below. The value of $\alpha^*$ given in \eqref{eq:alphadef} then follows via \eqref{eq:tvalue3}.\footnote{In more detail, $- \frac{ \gamma_1 + d - c \gamma_2}{(1-c) z^*}$ evaluates to $-\frac{\rho - \frac{\theta}{1-\theta}\zeta}{(1 - \frac{\theta}{1-\theta})\lambda}$.  Multiplying and dividing by $\rho$ and using $\frac{\zeta}{\rho} = 1 + s$, this simplifies to $-\frac{\rho}{\lambda}\big(\frac{1-\theta-\theta(1+s)}{ 1-2\theta }\big)$ and in turn to \eqref{eq:alphadef}.}

Next, note the following two facts:
\begin{itemize}
    \item By \eqref{eq:tvalue2}, we have $\frac{\alpha^*}{\rho e} = \big(\frac{1}{1+s}\big)^{c/(1-c)} e^{z^*}$ (this is established via $\rho^{\frac{1}{1-c}} = \rho \cdot \rho^{\frac{c}{1-c}}$ and $\frac{\rho}{\zeta} = \frac{1}{1+s}$).
    \item By \eqref{eq:tvalue3}, we have $\alpha^* z^* = - \rho - \frac{\rho s \theta}{2 \theta -1}$ (this is established via $-\frac{ \rho - \frac{\theta}{1-\theta} \zeta }{ 1 - \frac{\theta}{1-\theta}} = -\frac{ \rho(1-2\theta) - (\zeta - \rho) }{ 1 - 2\theta }$ and $\zeta - \rho = \rho s$).
\end{itemize}
Using the definition of $D_{\rho}$ followed by these two facts, the first (and therefore also the second) bracketed term in \eqref{eq:Zachrate} equates to
\begin{align}
&\frac{(1-\theta) \nu D_\rho(\alpha^*) }{\theta}\nonumber \\
 & = \frac{(1-\theta) \nu}{\theta} \left( \alpha^* \log \left( \frac{ \alpha^*}{\rho e} \right) + \rho \right) \\
& = \frac{(1-\theta) \nu}{\theta} \left( (\alpha^* z^* + \rho) - \alpha^* \frac{c}{1-c} \log(1+s) \right) \\
& = \frac{(1-\theta) \nu}{\theta} \left( - \frac{\rho  s \theta}{2 \theta -1}  + \alpha^* \frac{\theta}{2 \theta-1} \log(1+s)  \right) \label{eq:commonrate} \\
& = \frac{(1-\theta) (1-\rho) \nu e^{-\nu}}{2 \theta-1} \left( - 1  +  \frac{\alpha^* \log(1+s)}{\rho s} \right), \label{eq:commonrate2}
\end{align}
where \eqref{eq:commonrate2} uses $\rho s = \zeta - \rho = (1-\rho)e^{-\nu}$.  We obtain the desired rate in \eqref{eq:Zratedoable1} upon substituting into \eqref{eq:Zachrate} and noting that $\frac{\log(1+s)}{\rho s} = \alpha^*(1/2)^{-1}$.

\begin{rem} \label{rem:branchchoice}
    The choice of branch of the Lambert $W$-function in \eqref{eq:lambdadef} follows from the fact that (using the parametrization
$t = \gamma_1^{1/(1-c)} \gamma_2^{-c/(1-c)} e^{z+1}$) we require
    $\rho \leq  \rho (1+s)^{\theta/(2 \theta-1)} e^{z+1} \leq \zeta$ in Lemma \ref{lem:tech}. 
    Rearranging, and using $\frac{\zeta}{\rho} = 1+s$, the lower bound is equivalent to the fact that
    \begin{equation} \label{eq:branchchoice1}
    z \geq - 1 - \frac{\theta}{2 \theta -1} \log (1+s), \end{equation}
    and the upper bound is equivalent to the fact that
    \begin{equation}\label{eq:branchchoice2}
    z \leq - 1 - \frac{1-\theta}{2 \theta -1} \log (1+s). \end{equation}
    Hence, for $\theta < \frac{1}{2}$, \eqref{eq:branchchoice1} tells us that $z \geq -1$, so we need to take the $W_0$ branch. Similarly,
    for $\theta > \frac{1}{2}$, \eqref{eq:branchchoice2} tells us that $z \leq -1$, so we take the $W_{-1}$ branch.
\end{rem}

We can now justify in more detail some of the claims made earlier in Remark \ref{rem:Zratedetailsshort} regarding the behavior of the achievable rate.

\begin{rem} \label{rem:Zratedetails}
    \begin{enumerate}
        \item For given $s$, the function $\alpha^*(\theta)$ is continuous at $\theta = \frac{1}{2}$. This follows because as $\theta \rightarrow \frac{1}{2}$ from below, $-e^{-1} g(s, \theta) \rightarrow \infty$, and \eqref{eq:Wlimit} gives $W_0(x) \sim \log x$ for $x$ large, yielding
        \begin{multline}    
            \lambda(\theta) \sim \log( -e^{-1} g) \\ = - \frac{\theta}{2 \theta - 1} \log (1+s) - 1 + \log \left( \frac{ \theta s}{1- 2 \theta} - 1 \right)
        \end{multline}
        by the definition of $g$. (Here $\sim$ means that the ratio of the terms tends to 1). Similarly, as $\theta \rightarrow \frac{1}{2}$ from above, $-e^{-1} g(s, \theta) \rightarrow 0$,  and \eqref{eq:Wlimit2} gives $W_{-1}(x) \sim \log(-x)$ for $x$ close to zero, yielding
        \begin{multline}
            \lambda(\theta) \sim \log( e^{-1} g) \\ = - \frac{\theta}{2 \theta - 1} \log (1+s) - 1 + \log \left( \frac{ \theta s}{2 \theta -1} + 1 \right).
        \end{multline}
        In either case, we deduce that $(2 \theta - 1) \lambda(\theta) \rightarrow - \frac{\log(1+s)}{2}$, and $\alpha^*(\theta) \rightarrow \frac{\rho s}{\log(1+s)}$ so the definition in \eqref{eq:alphadef} is continuous at this point.
        Since $\alpha^*(\theta)$ is continuous at $\theta = 1/2$, the definition of the rate $\Runder{Z}(\theta, \rho)$ given in (32) is also continuous at $\theta = 1/2$ (since $\Runder{Z}$ is obtained by substituting $\alpha^*$ into a continuous
        function $D_\rho(\alpha)$).
        By \eqref{eq:Wlimit}, we know that $W_0(x) \leq \log x$ for $x$ sufficiently large, and hence $\alpha^*(\theta) < \alpha^*(1/2)$ for $\theta$ in some left-neighborhood of $1/2$. Similarly, $W_1(x) \leq \log(-x)$ for all $x$ sufficiently large (and negative), and hence $\alpha^*(\theta) > \alpha^*(1/2)$ for $\theta$ in some right-neighborhood of $1/2$.
\item In the limit as $\rho$ tends to zero, a sub-optimal but useful choice of $\alpha$ is
\begin{equation}    
    \alpha = \frac{ e^{-\nu} \theta}{1-\theta} \frac{1}{(-\log \rho)}, \label{eq:alpha}
\end{equation}
        which is greater than $\rho$ (as required for $\alpha > \rho$) when $\rho$ is sufficiently small. Using $a D_{b}(t) = D_{ab}(at)$ (a simple generalization of \eqref{eq:handy}), the first two bracketed terms of \eqref{eq:Zachrate} simplify to
        \begin{align*}
            &\frac{\nu e^{-\nu}}{\log 2} \min\bigg\{  \frac{ (1-\theta) e^\nu}{\theta} D_\rho(\alpha), e^\nu D_\zeta(\alpha) \bigg\} \\
            &=  \frac{\nu e^{-\nu}}{\log 2} \min\bigg\{  D_{\rho (1-\theta) e^\nu/\theta} \left( \frac{\alpha (1-\theta) e^\nu}{\theta}  \right), \\
            &\hspace*{5cm} D_{e^\nu\zeta}(\alpha e^\nu) \bigg\}
        \end{align*}
and under the above choice of $\alpha$, both $D_{(\cdot)}$ terms tend to 1 as $\rho \rightarrow 0$.\footnote{For the first term, this is established by substituting \eqref{eq:alpha} into the definition of $D_{(\cdot)}$ to obtain a ratio of the form $\frac{-\log \rho + o(\log \rho)}{- \log \rho + o(\log \rho)}$.  For the second term, simply note that $\zeta e^{\nu} \to 1$ and $\alpha \to 0$.} Setting $\nu = 1$, we obtain $\frac{1}{e \log 2}$, and 
taking into account the  third term of  \eqref{eq:Zachrate} we recover the noiseless rate of \eqref{eq:ddnoiselessrate} in the limit as $\rho \to 0$.
    \end{enumerate}
\end{rem}

\subsection{Proving Theorem \ref{thm:Zrate} (converse part, Z noise) via \eqref{eq:maxZchan}} \label{app:manip_conv_Z}


Recall from \eqref{eq:maxZchan} that the error probability cannot vanish when the rate is above
\begin{equation}
    \frac{ (1-\theta) \nu e^{-\nu}}{\log 2} \max \left\{ \frac{D_\rho(\Psi) + 1 - \rho}{\theta}, D_1(\Psi) \right\}. \label{eq:R_repeated_Z}
\end{equation}
The analysis of this term is simpler than in the reverse Z-channel case.
We again use Lemma \ref{lem:tech}, taking $\gamma_1 = \rho$, $\gamma_2 = 1$, $c= \theta$ and $d= 1-\rho$.  In this case, we find that \eqref{eq:diffcond} does not hold, since
\begin{equation}
    1- \rho \geq \theta  (1-\rho) \geq \theta( 1- \rho + \rho \log \rho) = c D_1(\rho).
\end{equation}
In other words, the maximum in \eqref{eq:R_repeated_Z} is always provided by the first term, and by choosing $\Psi$ arbitrarily close to $\rho$, we deduce (also using $D_{\rho}(\rho) = 0$) that any rate above
\begin{equation}
    \frac{ (1-\theta) \nu e^{-\nu}}{\log 2} \frac{(1-\rho)}{\theta}
\end{equation}
will ensure the error probability $\pe$ does not converge to zero. Finally, using the fact that $\nu e^{-\nu} \leq 1/e$, we deduce that
using any rate larger than
\begin{equation}
    \frac{ (1-\theta)(1-\rho)}{\theta e \log 2}
\end{equation}
will ensure that $\pe$ does not converge to zero.

\subsection{Proof of Equation \eqref{eq:lim_ratio} in Lemma \ref{lem:tamedefective}} \label{app:lim_ratio}

It suffices to show that
\begin{equation}
    \frac{ \PP\big[ \Binomial\big(n,\frac{q_-}{1-q_+}\big) \ge d \big] }{\PP\big[ \Binomial\big(n,\frac{q_-}{1-2q_+}\big) \ge d \big]} \ge 1 + o(1).
\end{equation}
To simplify the notation, we write $P$ and $Q$ for binomial PMFs (with $n$ trials) with probabilities $p = \frac{q_-}{1-q_+}$ and $q = \frac{q_-}{1-2q_+}$.  We are interested in lower bounding $\frac{\sum_{x \ge d} P(x)}{ \sum_{x \ge d} Q(x) }$.  To do this, we write $\sum_{x \ge d} Q(x) = \sum_{x \ge d} P(x) + \sum_{x \ge d} \big( Q(x) - P(x) \big)$ and seek to show that
\begin{align}
    \sum_{x \ge d} \big( Q(x) - P(x) \big) \ll \sum_{x \ge d} P(x), \label{eq:goal}
\end{align}
in the sense that the left-hand side is upper bounded by a vanishing fraction of the right-hand side.

Towards establishing \eqref{eq:goal}, note that
\begin{align}
    &\sum_{x \ge d} \big( Q(x) - P(x) \big) \nonumber \\
        &\quad= \sum_{x \ge d} P(x) \bigg( \frac{Q(x)}{P(x)} - 1 \bigg) \\
        &\quad= \sum_{x \ge d} P(x) \bigg( \Big( \frac{q}{p} \Big)^x \Big( \frac{1-q}{1-p} \Big)^{n-x} - 1 \bigg). \label{eq:bino_ratio}
\end{align}
Observe that under the choices $p = \frac{q_-}{1-q_+}$ and $q = \frac{q_-}{1-2q_+}$, we have
\begin{align}
    \frac{p}{q} = \frac{1-2q_+}{1-q_+} = 1 - \frac{q_+}{1-q_+} < 1.
\end{align}
We also have $\frac{1-p}{1-q} > 1$ (since $p < q$), so we can upper bound  $\big( \frac{1-q}{1-p} \big)^{n-x}$ in \eqref{eq:bino_ratio} by one.  Combining these findings gives
\begin{align}
\sum_{x \ge d} \big( Q(x) - P(x) \big) 
    &\le \sum_{x \ge d} P(x) \bigg( \Big( \frac{q}{p} \Big)^x - 1 \bigg) \\
    &= \sum_{x \ge d} P(x) \bigg( \Big( \frac{1}{1 - \frac{q_+}{1-q_+}}  \Big)^x - 1 \bigg) \\
    &= \sum_{x \ge d} P(x) \big( e^{x q_+ (1+o(1))} - 1 \big), \label{eq:sum_x}
\end{align}
since $q_-$ and $q_+$ are both $o(1)$.

To complete the proof, recall that $d = O(\log p)$, whereas $q_+ = O\big(\frac{1}{k}\big)$ with $k \asymp p^{\theta}$ and $\theta > 0$.  We fix $\xi < \theta$ and split the summation in \eqref{eq:sum_x} as follows:
\begin{align}
    &\sum_{x \ge d} P(x) \big( e^{x q_+ (1+o(1))} - 1 \big) \nonumber \\
        &= \sum_{d \le x \le p^{\xi}} P(x) \big( e^{x q_+ (1+o(1))} - 1 \big) \nonumber \\
        &~~ +  \sum_{p^{\xi} \le x \le n} P(x) \big( e^{x q_+ (1+o(1))} - 1 \big). \label{eq:two_sums}
\end{align}
The first summation behaves as $o\big( \sum_{x \ge d} P(x) \big)$, since the conditions $x \le p^{\xi}$, $q_+ = O\big(\frac{1}{k}\big)$, $k \asymp p^{\theta}$, and $\xi \in (0,\theta)$ collectively imply $e^{x q_+ (1+o(1))} = 1+o(1)$.  For the second summation, we recall that $P(x)$ is the PMF of a $\Binomial(n,q_-/(1-q_+))$ random variable, and make use of the concentration bound \eqref{eq:bino_le}.  Equating $x$ with $n q_-/(1-q_+) (1+\epsilon)$ gives $\epsilon = \frac{x(1-q^+)}{nq_-} -1$, which yields $\epsilon = \Omega\big( \frac{p^{\xi}}{\log p} \big)$ due to the fact that $nq_- \asymp \log p$ and $x \ge p^{\xi}$.  The concentration bound in \eqref{eq:bino_le} gives an exponent of $\frac{nq_-}{1-q_+} D_1(1+\epsilon)$, and we have $D_1(1+\epsilon) \asymp \epsilon \log \epsilon$ when $\epsilon \to \infty$; combining these observations yields an exponent on the order of $x \log \epsilon$, or equivalently, on the order of $x \log p$.  Substituting $P(x) \le e^{- \Omega(x \log p)}$ into the second summation of \eqref{eq:two_sums}, we readily obtain $\sum_{p^{\xi} \le x \le n} P(x) \big( e^{x q_+ (1+o(1))} - 1 \big) \le e^{- \Omega( p^{\xi} \log p )}$.  This is much faster decay than the $p^{-O(1)}$ decay of $\sum_{x \ge d} P(x)$ shown following \eqref{eq:EVi1}.

Recalling that \eqref{eq:two_sums} is equal to \eqref{eq:sum_x}, we deduce that \eqref{eq:goal} holds with  $p = \frac{q_-}{1-q_+}$ and $q = \frac{q_-}{1-2q_+}$, as desired.

\subsection{DD-Specific Converse Under RZ Noise} \label{sec:dd_spec_conv}

Consider the reverse Z-channel with parameter $\rho$.  We claim that under i.i.d.~Bernoulli testing with parameter $\nu > 0$, if the number of tests satisfies
\begin{equation}
n = \frac{k \log \frac{p}{k}}{\nu e^{-\nu} (1-\rho)} (1-\eta) \label{eq:n_dd_conv}
\end{equation}
for some fixed $\eta \in (0,1)$, then the error probability of DD tends to one.  We proceed by proving this claim.

{\bf First step.} Recall that the first step removes all items that appear in any negative test.  Given that there are $\nneg = \big(n e^{-\nu} (1-\rho) \big)(1+o(1))$ negative tests ({\em cf.}, \eqref{eq:n0_conc_RZ}), the probability of a given non-defective $j$ being kept is
\begin{equation}
\psi := \bigg( 1 - \frac{\nu}{k} \bigg)^{\nneg} = e^{-\big(\frac{n}{k} \nu e^{-\nu} (1-\rho)\big)(1+o(1))}. \label{eq:psi}
\end{equation}
Given $\nneg$, the number of intruding non-defectives (i.e., non-defectives appearing in no negative tests) is distributed as $G \sim \Binomial(p-k,\psi)$, so if $\psi p \to \infty$ then we have by the binomial concentration bounds \eqref{eq:bino_le}--\eqref{eq:bino_ge} that $G = p\psi (1+o(1))$ with probability approaching one.  Using \eqref{eq:n_dd_conv} and \eqref{eq:psi}, we obtain
\begin{equation}
\psi = e^{-(1-\eta)\log\frac{p}{k} \cdot (1+o(1))} = \Big( \frac{k}{p} \Big)^{(1-\eta)(1+o(1))},
\end{equation}
and hence
\begin{equation}
p\psi = k^{(1-\eta)(1+o(1))} \cdot p^{\eta (1+o(1))}.
\end{equation}
Since $k \asymp p^{\theta}$ with $\theta \in (0,1)$, we deduce that $p\psi \ge 2 k^{1+\epsilon}$ for sufficiently large $p$ and some $\epsilon > 0$ depending only on $\theta$.  Then, since $G = p\psi (1+o(1))$ with probability approaching one, we conclude that $G \ge k^{1+\epsilon}$ with probability approaching one.

{\bf Second step.} We proceed similarly to the argument following Lemma \ref{lem:multi_cond}, but now the key probability $\gamma$ analyzed in \eqref{eq:gamma_calc} behaves very differently due to the fact that $g \ge k^{1+\epsilon}$.  Specifically, we have
\begin{align}
    \gamma 
        &= \bigg(1 - \frac{\nu}{k}\bigg)^g \\
        &= \bigg( \Big( 1 - \frac{\nu}{k} \Big)^{\frac{k}{\nu}} \bigg)^{\frac{\nu g}{k}} \\
        &= \Big( \big(e^{-1}\big)(1+o(1)) \Big)^{ \frac{\nu g}{k} } \\
        &\le e^{- \nu k^{\epsilon} (1+o(1))}, \label{eq:g_large}
\end{align}
where we have applied $g \ge k^{1+\epsilon}$.  Hence, in the binomial distribution found in \eqref{eq:Ntil1_dist_RZ}, the associated probability decreases by this factor accordingly:
\begin{multline}
    \hspace*{-1.5ex} (\Npposj \,|\, \pdhat, g, \nneg) \sim \Binomial\bigg( n(1-\qneg)(1+o(1)), \\ \frac{\nu e^{- \nu \gamma}}{k (1 - \qneg)} (1+o(1)) \bigg). \label{eq:Ntil1_modified}
\end{multline}
By \eqref{eq:g_large}, the mean of this distribution is at most $\frac{\nu n}{k} e^{- \nu k^{\epsilon} (1+o(1))}$, which vanishes under the choice of $n$ in \eqref{eq:n_dd_conv} due to the fact that $k \asymp p^{\theta}$ with $\theta > 0$.  By Markov's inequality, it follows that the probability of the given defective item being the unique element from $\pdhat$ in {\em any} positive test vanishes as $p \to \infty$, and hence the probability that the DD algorithm successfully identifies a given defective item also vanishes.


%

\subsection{High-$\rho$ Low-$\theta$ Optimality Result for Reverse Z-Channel} \label{sec:high_noise_opt}

    In this appendix, we consider the alternative formulation of the achievability part of Theorem \ref{thm:RZrate} given in Theorem \ref{thm:ndd_RZ}. 
    We let $\xi$ be arbitrarily close to $\theta$, and set $\beta = \frac{1 + \rho}{2} \in (\rho,1)$.  The conditions \eqref{eq:niid_RZ} and \eqref{eq:niind_RZ} both have a dependence on $(k,p,\theta)$ scaling as $O(k \log k) = O(\theta k \log p)$, which is dominated by the $k \log p$ term in \eqref{eq:nind_RZ} for sufficiently small $\theta$.  Hence, the condition \eqref{eq:nind_RZ} dominates for sufficiently small $\theta$.
    
    When \eqref{eq:nind_RZ} dominates, the required number of tests simplifies to
    \begin{align}
    n = \bigg( \frac{e}{1-\rho} \cdot k \log p \bigg) (1+o(1)),
    \end{align}
    which yields a rate of $\frac{1-\rho}{e \log 2}$ bits per test.  By performing a Taylor expansion of \eqref{eq:zcap} at $\rho = 1$, we find that the reverse Z-channel capacity also behaves as $C = \frac{1-\rho}{e \log 2} + O\big( (1-\rho)^2 \big)$ as $\rho \to 1$, so in fact we have asymptotic optimality in this high-noise regime. 
    
    Stated more precisely, the rate of Theorem \ref{thm:ndd_RZ} is asymptotically optimal when the order of limits is first $n \to \infty$, then $\theta \to 0$, and finally $\rho \to 1$.

\subsection{Results for the General Binary Noise Model} \label{app:symmetric}

{\bf Achievability.} Under the general binary noise model (which includes symmetric noise as a special case), we consider the following noisy version of the DD algorithm.

\medskip
\noindent\fbox{
    \parbox{0.95\columnwidth}{
        \textbf{Noisy DD algorithm for general binary noise:}
        \begin{enumerate}
            \item For each $j \in [p]$, let $\Nnegj$ be the number of negative tests in which item $j$ is included.  In the first step, we fix a constant $\alpha \in (\rho_{10},1-\rho_{01})$ and construct the following set of items that are believed to be non-defective:
            \begin{equation}
            \NDhat = \bigg\{ j \in [p] \,:\, \Nnegj \ge \frac{\alpha n \nu}{k} \bigg\}. \label{eq:NDhat}
            \end{equation}
            The remaining items, $\PDhat = [p] \setminus \NDhat$, are believed to be ``possibly defective''.
            \item For each $j \in \PDhat$, let $\Npposj$ be the number of positive tests that include item $j$ and no other item from $\PDhat$.  In the second step, we fix a constant $\beta \in (\rho_{01},1-\rho_{10})$, and estimate the defective set as follows:
            \begin{equation}
            \Shat = \bigg\{ j \in \PDhat \,:\, \Npposj \ge \frac{\beta n \nu e^{-\nu}}{k} \bigg\}. \label{eq:Shat}
            \end{equation}
        \end{enumerate}
    }
} \medskip

\begin{thm} \label{thm:ndd}
    {\em (General binary noise achievability)}
    Consider the general binary noisy group testing setup with crossover probabilities $\rho_{01}$ and $\rho_{10}$ ({\em cf.}, \eqref{eq:gt_gen_model}), number of defectives $k \asymp p^{\theta}$ (where $\theta \in (0,1)$), and i.i.d.~Bernoulli testing with parameter $\nu > 0$.  For any $\alpha \in (\rho_{10},1-\rho_{01})$, $\beta \in (\rho_{01},1-\rho_{10})$, and $\xi \in (0,\theta)$, we have $\pe \to 0$ if
    \begin{equation}
        n \ge \max\big\{ \nid, \nind, \niid, \niind \big\} (1+\eta) \label{eq:ndd_bound}
    \end{equation}
    for arbitrarily small $\eta > 0$, where defining $w = (1-\rho_{01}) e^{-\nu} + \rho_{10}(1- e^{-\nu})$, we have
    \begin{align}
    \nid &= \frac{1}{\nu \rho_{10} D_1(\alpha/\rho_{10})} 
    \cdot k \log k, \label{eq:nid} \\
    \nind &= \frac{1-\xi}{\nu w D_1(\alpha/w)} 
    \cdot k \log p,  \label{eq:nind} \\
    \niid &= \frac{1}{\nu e^{-\nu} (1-\rho_{10}) D_1(\beta/(1-\rho_{10}))} 
    \cdot k \log k,  \label{eq:niid} \\
    \niind &= \frac{\xi}{\nu e^{-\nu} \rho_{01} D_1(\beta/\rho_{01})} 
    \cdot k \log p.   \label{eq:niind}
    \end{align}
\end{thm}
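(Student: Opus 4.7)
The plan is to mirror the two-step analysis used for Theorems \ref{thm:ndd_RZ} and \ref{thm:ndd_Z}, with the key structural difference that the symmetric channel flips outcomes in both directions, so {\em both} steps of the algorithm require thresholding (rather than a zero/nonzero decision). I will condition throughout on $S = \{1,\dots,k\}$ and exploit the symmetry of the Bernoulli design. First I would compute the marginal test-outcome probabilities, noting that $\PP[Y=0] = (1-\rho)e^{-\nu} + \rho(1-e^{-\nu}) = w(1+o(1))$ and $\PP[Y=1] = 1 - w(1+o(1))$, and then partition the positive and negative tests by whether the underlying noiseless outcome was $0$ or $1$ (and whether the noise flipped it). Applying the binomial concentration bounds \eqref{eq:bino_le}--\eqref{eq:bino_ge} to each of these four event counts gives with high probability that each concentrates around its mean, which is used in the rest of the analysis.

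For \textbf{Step 1}, I will handle defectives and non-defectives separately. For a defective $j \in S$, a test contains $j$ with probability $\nu/k$ and is then negative only if noise flips the outcome, giving $\Nneg(j) \sim \Binomial(n,\rho\nu/k)$ with mean $n\rho\nu/k$; applying \eqref{eq:bino_ge} with $\epsid = \alpha/\rho - 1 > 0$ (using $\alpha > \rho$) and taking a union bound over $k$ defectives yields the $\nid$ condition in \eqref{eq:nid}. For a non-defective $j \notin S$, conditioning on $\Nneg = \nneg$ and using that its test-matrix column is independent of the outcomes, $(\Nneg(j) \mid \nneg) \sim \Binomial(\nneg,\nu/k)$; since $\nneg = nw(1+o(1))$, applying \eqref{eq:bino_le} with $\epsind = 1 - \alpha/w > 0$ (using $\alpha < w$, which holds for $\alpha < 1-\rho \le w$) bounds the survival probability of each non-defective, and a Markov argument as in the RZ proof ensures $G \le p^{\xi}$ intruding non-defectives with probability approaching one under \eqref{eq:nind}.

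For \textbf{Step 2}, I condition on the success of Step 1, on $G = g \le p^\xi$ and on the concentration of the relevant test counts. Reasoning exactly as in \eqref{eq:step2d_1_RZ}--\eqref{eq:step2d_4_RZ} and \eqref{eq:step2nd_1_RZ}--\eqref{eq:step2nd_5_RZ}, for a defective $j \in S$ a given test is positive and $j$-alone in $\PDhat$ with probability $\frac{\nu}{k}(1-\rho)e^{-\nu}(1+o(1))$ (the test must contain $j$, no other PD, and not be flipped), whereas for a non-defective $j \in \PDhat \setminus S$ the corresponding probability is $\frac{\nu}{k}\rho e^{-\nu}(1+o(1))$ (since being alone from $\PDhat$ forces no defective in the test, so positivity comes solely from a noise flip). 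Thus $\Ntilpos(j)$ is binomial with these respective parameters. Comparing to the threshold $\frac{\beta n\nu e^{-\nu}}{k}$, I apply \eqref{eq:bino_le} with $\epsiid = 1 - \beta/(1-\rho)$ (using $\beta < 1-\rho$) for defectives and a union bound over $k$ to obtain \eqref{eq:niid}, and \eqref{eq:bino_ge} with $\epsiind = \beta/\rho - 1$ (using $\beta > \rho$) for non-defectives together with a union bound over at most $p^\xi$ items to obtain \eqref{eq:niind}.

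The main conceptual obstacle is the joint validity of the thresholds: one must choose $\alpha,\beta \in (\rho,1-\rho)$ so that each of the four $D_1(\cdot)$ arguments is strictly greater than $1$, which requires $\rho < 1/2$ and is exactly why the intervals for $\alpha,\beta$ are open on both ends. A minor technical point is preserving the conditional independence of the relevant columns of $\Xv$ when conditioning on Step 1 events: as in the RZ proof, this follows because the column of a non-defective is independent of the outcomes and of the other items' columns, so conditioning on $(\Ec_1^c, g, \npos, \npd, \npnd)$ leaves the remaining randomness of each $j$-column intact. Combining the four resulting conditions via \eqref{eq:ndd_bound} then yields Theorem \ref{thm:ndd}.
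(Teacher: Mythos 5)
Your approach matches the paper's: the paper's proof of Theorem~\ref{thm:ndd} is only a few lines long, and it delegates step~1 to the argument of Theorem~\ref{thm:ndd_Z} (thresholded first step, $Z$-channel) and step~2 to that of Theorem~\ref{thm:ndd_RZ} (thresholded second step, RZ-channel). You reconstruct exactly this two-sided reasoning: $\Nneg(j) \sim \Binomial(n,\rho\nu/k)$ for defectives and $\Binomial(\nneg,\nu/k)$ for non-defectives in step~1, and $\Ntilpos(j)$ with success probabilities $\approx \tfrac{\nu}{k}(1-\rho)e^{-\nu}$ and $\approx \tfrac{\nu}{k}\rho e^{-\nu}$ respectively in step~2. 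The Markov argument giving $G \le p^{\xi}$, the conditioning structure, and the independence-preservation remark are all as in the paper.

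Two small slips in your write-up, both about signs of inequalities rather than the structure of the argument. First, your justification ``$\alpha < w$, which holds for $\alpha < 1-\rho \le w$'' has the last inequality backwards: since $w = e^{-\nu}(1-2\rho)+\rho$ and $e^{-\nu}<1$, we in fact have $\rho < w < 1-\rho$ strictly for all $\nu>0$, $\rho<1/2$. So $\alpha<1-\rho$ does \emph{not} automatically give $\alpha<w$. The paper's theorem statement carries the same imprecision (it allows $\alpha \in (\rho,1-\rho)$ yet implicitly requires $\alpha<w$ for the step-1 non-defective analysis), so this is not a unique error of yours, but the justification you gave is incorrect as written. Second, the remark that ``each of the four $D_1(\cdot)$ arguments is strictly greater than $1$'' is not quite right: exactly two of the concentration bounds are upper tails (arguments $\alpha/\rho$ and $\beta/\rho$, which must exceed $1$) and two are lower tails (arguments $\alpha/w$ and $\beta/(1-\rho)$, which must be below $1$); what $\rho<1/2$ buys is that the intervals $(\rho,w)$ and $(\rho,1-\rho)$ are nonempty so that valid $\alpha,\beta$ exist.
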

\begin{proof}
    The first step of the algorithm is analyzed in the same way as the proof of Theorem \ref{thm:ndd_Z}, and the second step is analyzed in the same way as the proofs of Theorems \ref{thm:ndd_RZ} and  \ref{thm:ndd_Z}.  For the former,  $\PDhat$ contains all $k$ defective items and $o(k)$ non-defective items under the conditions \eqref{eq:nid}--\eqref{eq:nind}, and for the latter, $\Shat$ contains all of the defective items and no non-defective items under the conditions \eqref{eq:niid}--\eqref{eq:niind}.   The details are omitted to avoid repetition.
\end{proof}


As before, we can rephrase Theorem \ref{thm:ndd} to give a statement in terms of rates. That is, we can rewrite \eqref{eq:nid}, \eqref{eq:nind}, \eqref{eq:niid} and \eqref{eq:niind} using \eqref{eq:handy} to show that for given  $\alpha$, $\beta$, $\xi$, and $\nu$ an achievable rate is: 
\begin{multline} \label{eq:bscrate}
\frac{1-\theta}{\log 2} \min \bigg\{ \frac{\nu}{\theta} D_{\rho_{01}}(\alpha),
\frac{\nu}{1-\xi} D_w(\alpha), \\  \frac{\nu e^{-\nu}}{\theta} D_{1-\rho_{10}}(\beta),
\frac{\nu e^{-\nu}}{\xi} D_{\rho_{01}}(\beta)
\bigg\}.
\end{multline}

We claim that in the limit as $\rho_{01} \to 0$ and $\rho_{10} \to 0$ for fixed $\theta \in (0,1)$, we recover the noiseless DD guarantee given in \cite{Ald14a}.  To see this, we let both $\alpha$ and $\beta$ equal an arbitrarily small constant $c > 0$.  We observe that $D_1(\alpha/\rho)$ and $D_1(\beta/\rho)$ both scale as $\frac{1}{\rho} \log \frac{1}{\rho}$, which implies that $\nid$ and $\niind$ behave as $o(k \log k)$, and their contributions are insignificant.  On the other hand, $\alpha/w$ and $\beta/(1-\rho_{10})$ can be made arbitrarily close to zero by suitable choice of $c$, which means (using $D_1(0) = 1$ and $w|_{\rho_{01}=\rho_{10}=0} = e^{-\nu}$) that $\nind$ and $\niid$ can be made arbitrarily close to $(1-\xi)\frac{k \log p}{\nu e^{-\nu}}$ and $\frac{k \log k}{\nu e^{-\nu}}$, respectively.  Since $\xi$ can be chosen arbitrarily close to $\theta$ and $(1-\theta)k\log p = \big(k\log\frac{p}{k} \big)(1+o(1))$, the final condition on $n$ is
\begin{equation}
n \ge \max\bigg\{ \frac{k \log \frac{p}{k}}{\nu e^{-\nu}}, \frac{k \log k}{\nu e^{-\nu}} \bigg\} (1+\eta)
\end{equation}
for arbitrarily small $\eta > 0$.  This bound is minimized by the choice $\nu = 1$, which recovers the bound in \cite{Ald14a}.

\begin{figure}
    \begin{centering}
        \includegraphics[width=0.95\columnwidth]{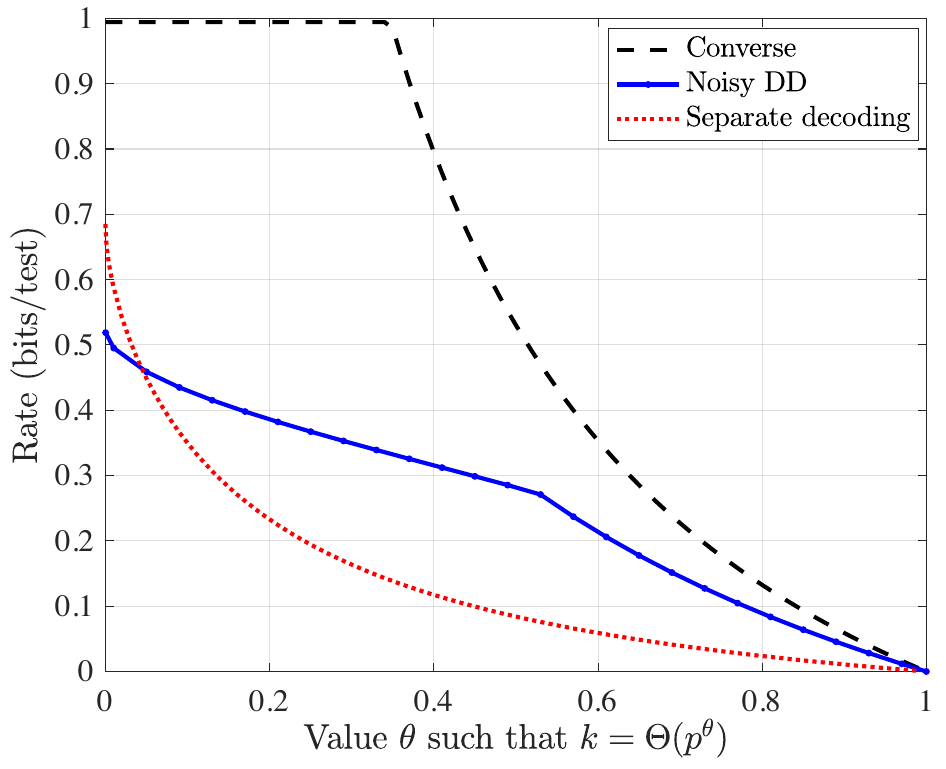}
        \par
    \end{centering}
    
    \caption{Achievable and converse rates under the symmetric noise model and i.i.d.~Bernoulli testing with noise level $\rho = 0.001$. \label{fig:RatesBSC}}
\end{figure}

{\bf Converse.} We can construct a general binary channel with noise levels $\rho_{01} + \rho_{10} \leq 1$ by the composition of a reverse Z-channel followed by a Z-channel. To be precise, analyzing the relevant conditional probabilities,  using a reverse Z-channel with noise level $\rho_{01}/(1-\rho_{10}) \leq 1$ followed by a Z-channel with noise level $\rho_{10}$ gives an general binary channel with noise levels  $\rho_{01}$ and $\rho_{10}$. 

Hence, a standard genie argument shows that any converse that applies to the reverse Z-channel must also apply to the general binary channel. In particular, evaluating the converse bound $\Rbar{RZ} \left(\theta, \rho_{01}/(1-\rho_{10}) \right)$ of Theorem \ref{thm:RZrate}  gives a converse that applies to the general binary channel. 

We can use a similar argument based on taking a Z-channel with noise level $\rho_{10}/(1-\rho_{01}) \leq 1$ followed by a reverse Z-channel with noise level $\rho_{01}$. The same genie argument shows that we can bound the rate of the general binary channel by $\Rbar{Z} \left( \theta, \rho_{10}/(1-\rho_{01}) \right)$. 
Putting these bounds together, 
we deduce the following.
\begin{cor} \label{cor:general_conv}
For the general binary noisy group testing problem with noise levels $\rho_{01} + \rho_{10} \leq 1$, in the regime $k \asymp p^{\theta}$ with $\theta \in (0,1)$,  no algorithm can achieve $P_e \rightarrow 0$ under Bernoulli testing with a rate higher than
    \begin{multline} \label{eq:symconv}
    \Rbar{gen}(\theta, \rho_{01}, \rho_{10}) \\ = \min \left\{ \Rbar{RZ} \left(\theta,  \frac{\rho_{01}}{1-\rho_{10}} \right), \Rbar{Z} \left( \theta, \frac{\rho_{10}}{1-\rho_{01}} \right) \right\}.
    \end{multline}
\end{cor}

In the degenerate case of $\rho_{01} + \rho_{10} = 1$ the initial reverse Z-channel has a noise level of $1$, meaning that all inputs are deterministically mapped to $1$ by this step. In this case, as expected, the converse bound \eqref{eq:symconv} is given by $\Rbar{RZ}(\theta,1) = 0$ (where this value follows by \eqref{eq:RZconverse}).

For comparison purposes, we discuss the symmetric case $\rho_{01} = \rho_{10} = \rho$.
In this case, while Corollary \ref{cor:general_conv} does not exactly match Theorem \ref{thm:ndd} for any $\theta \in (0,1)$ and $\rho \in (0,1)$, the two become increasingly close for high $\theta$ and low $\rho$; see Figure \ref{fig:RatesBSC} for an example with $\rho = 0.001$.  In this figure, we also observe a strict improvement over the best previously known rate attained by separate decoding of items \cite{Sca17b} unless $\theta$ is very small. 

The difficulty in establishing a tight bound when both $\rho_{01}$ and $\rho_{10}$ are positive appears to stem from the requirement of handling all four possible error types (false positive vs.~false negative, and first stage vs.~second stage); in contrast, for the Z and RZ models, only a strict subset of these is relevant.  The fact that these error events do not occur independently of one another poses a significant challenge for a tight analysis.  Similarly, in the converse proof, constructing a set $S \setminus \{i\} \cup \{j\}$ with a higher likelihood than $S$ is complicated by the fact that the likelihood depends on all four combinations of $P_{Y|U}(y|u)$ with $u,y \in \{0,1\}$.  Further closing the remaining gaps remains an interesting direction for future research.

\section*{Acknowledgment}

J.~Scarlett was supported by an NUS Startup Grant.

\bibliographystyle{IEEEtran}
\bibliography{JS_References,papers}

\newpage
 \begin{IEEEbiographynophoto}{Jonathan Scarlett}
     (S'14 -- M'15) received 
     the B.Eng. degree in electrical engineering and the B.Sci. degree in 
     computer science from the University of Melbourne, Australia. 
     From October 2011 to August 2014, he
     was a Ph.D. student in the Signal Processing and Communications Group
     at the University of Cambridge, United Kingdom. From September 2014 to
     September 2017, he was post-doctoral researcher with the Laboratory for
     Information and Inference Systems at the \'Ecole Polytechnique F\'ed\'erale
     de Lausanne, Switzerland. Since January 2018, he has been an assistant
     professor in the Department of Computer Science and Department of Mathematics,
     National University of Singapore. His research interests are in
     the areas of information theory, machine learning, signal processing, and
     high-dimensional statistics. He received the Singapore National Research Foundation (NRF) fellowship, and the NUS Early Career Research Award.
 \end{IEEEbiographynophoto}

\begin{IEEEbiographynophoto}{Oliver Johnson}
received the B.A. degree in 1995, Part III Mathematics in 1996, and the Ph.D. degree in 2000, all from the University of Cambridge, Cambridge, U.K. He was Clayton Research Fellow at Christ's College and Max Newman Research Fellow at Cambridge University until 2006, during which time he published the book Information Theory and the Central Limit Theorem (Singapore: World Scientific, 2004). Since 2006, he has been at University of Bristol, Bristol, U.K, where he is Professor of Information Theory. Within Bristol Mathematics he is Director of the Institute for Statistical Science and Programme Director for the MSc in the Mathematics of Cybersecurity.
\end{IEEEbiographynophoto}
 
\end{document}